\documentclass[draft]{IEEEtran}
\usepackage[small,nohug,heads=vee]{diagrams}
\diagramstyle[labelstyle=\scriptstyle]
\usepackage[english]{babel}
\usepackage{amsmath,amsthm,mathrsfs}
\usepackage{amsfonts}
\usepackage{extarrows}
\usepackage{amssymb}
\usepackage{dblfloatfix}
\usepackage[numbers]{natbib}
\usepackage[final,hyperfootnotes=false]{hyperref}
\usepackage[final]{graphicx}
\usepackage{tikz}
\usepackage{pgfplots}
\usepackage{subfig}
\usepackage{flushend}

\interdisplaylinepenalty=2500

\usetikzlibrary{arrows,positioning,fit,backgrounds,shapes}
\usetikzlibrary{calc}
\onecolumn
\newtheorem{thm}{Theorem}

\newtheorem{cor}{Corollary}
\newtheorem{lem}{Lemma}
\newtheorem{prop}{Proposition}
\theoremstyle{definition}
\newtheorem{defn}{Definition}

\theoremstyle{remark}
\newtheorem{rem}{Remark}

\newtheorem{conjecture}{Conjecture}

\DeclareMathOperator*{\diag}{diag}
\DeclareMathOperator*{\env}{env}
\DeclareMathOperator*{\xenv}{env_X}
\DeclareMathOperator*{\yenv}{env_Y}
\DeclareMathOperator*{\xyenv}{env_{XY}}
\newcommand{\mb}{\mathbf}
\newcommand{\rd}{\rm{d}}
\newcommand{\half}{\frac{1}{2}}

\DeclareMathOperator*{\supp}{supp}

\newcommand\blfootnote[1]{%
  \begingroup
  \renewcommand\thefootnote{}\footnote{#1}%
  \addtocounter{footnote}{-1}%
  \endgroup
}

\usepackage{xcolor}
\hypersetup{
    colorlinks,
    linkcolor={blue!80!black},
    citecolor={blue!80!black},
    urlcolor={blue!80!black}
}

\begin{document}
\title{Secret Key Generation with Limited Interaction
}
\author{Jingbo Liu\quad Paul Cuff\quad Sergio Verd\'{u}\\
Dept. of Electrical Eng., Princeton University, NJ 08544\\
\{jingbo,cuff,verdu\}@princeton.edu}%
\maketitle
\begin{abstract}
A basic two-terminal secret key generation model is considered, where the interactive communication rate between the terminals may be limited, and in particular may not be enough to achieve the maximum key rate.
We first prove a multi-letter characterization of the key-communication rate region (where the number of auxiliary random variables depend on the number of rounds of the communication),
and then provide an equivalent but simpler characterization in terms of concave envelopes in the case of unlimited number of rounds.
Two extreme cases are given special attention.
First, in the regime of very low communication rates, the \emph{key bits per interaction bit} (KBIB) is expressed with a new ``symmetric strong data processing constant'',
which has a concave envelope characterization analogous to that of the conventional strong data processing constant.
The symmetric strong data processing constant can be upper bounded by
the supremum of the maximal correlation coefficient over a set of distributions,
which allows us to determine the KBIB for binary symmetric sources,
and conclude, in particular, that the interactive scheme is not more efficient than the one-way scheme at least in the low communication-rate regime.
Second, a new characterization of the \emph{minimum interaction rate needed for achieving the maximum key rate} (MIMK) is given, and we resolve a conjecture by Tyagi regarding the MIMK for (possibly nonsymmetric) binary sources.
We also propose a new conjecture for binary symmetric sources that the interactive scheme is not more efficient than the one-way scheme at any communication rate.
\end{abstract}

\blfootnote{This paper was presented in part at the 2016 IEEE International Symposium on Information Theory (ISIT) Barcelona, July 10-15, 2016.
}

\IEEEpeerreviewmaketitle
\section{Introduction}
Generally speaking,
secret key generation
\cite{maurer1993secret}\cite{ahlswede1993common}\cite{csiszar2004secrecy} concerns the task of producing a common piece of information by several terminals accessing dependent sources, possibly allowing communications among the terminals,
so that an eavesdropper, knowing the sources joint distribution and the protocol, and observing the communications but not the source outputs, can learn almost nothing about the common information generated.
The importance of key generation in cryptography and other areas of information theory is well known \cite{shannon1949communication}\cite{maurer1993secret}\cite{ahlswede1993common}\cite{ahlswede1998common}.
From a more theoretical viewpoint,
secret key generation is also a rich playground,
because of its connections to various measures of dependence.
Consider the case of two i.i.d.~sources with per-letter distribution $Q_{XY}$.
The maximum key rate that can be produced
without any constraint on the communication rate equals the mutual information $I({X;Y})$ \cite{ahlswede1993common}. In the other extreme where the communication rate vanishes, the \emph{key bits per communication bit} under the one-way protocol in \cite{ahlswede1993common} is a monotonic function of the strong data processing constant (see e.g.\ \cite{courtade2013outer}\cite{Liu}
);
and under the omniscient helper protocol \cite{liu2015key}, the region of the achievable communication bits per key bit (which is a vector of dimension equal to the number of receivers) is a reflection of the \emph{polar set} (see the definition in, e.g.\ \cite[P125]{rockafellar1970convex} or \cite[Section~3.b.]{milman2007geometrization}) of the set of hypercontractive coefficients (which can be seen \cite[(44)]{liu2015key}
and \cite[Remark~8]{liu2015key}).

Despite the successes mentioned above, which mainly
concern models allowing only
unidirectional communication among terminals, many basic problems have remained open in settings involving interactive communications or multi-terminals \cite{el2011network}\cite{csiszar2000common}\cite{csiszar2004secrecy}.
Most of the existing literature
focuses
on the minimum communication rate
required to achieve
the maximum possible key rate
(the most notable exception is the basic source model with one-way communication, where the complete tradeoff between the key rate and the communication rate is known; see \cite{ahlswede1993common}\cite{watanabe2010}).
Csisz\'{a}r and Narayan \cite{csiszar2004secrecy} showed that the maximum key rate obtainable from multi-terminals having public interactive communications equals the entropy rate of all sources minus the communication rate needed for \emph{communication for omniscience}
\cite{csiszar2004secrecy}, the latter related to the subject of interactive source coding studied by Kaspi \cite{kaspi1985two}.
Moreover, Tyagi \cite{tyagi2013common} showed that the \emph{minimum interactive communication rate needed for achieving the maximum key rate} (MIMK) between two interacting terminals equals the \emph{interactive common information} \cite{tyagi2013common} minus the mutual information rate of the sources, and provided a multi-letter characterization of MIMK.
However, a complete characterization of the tradeoff between the key rate and communication rate is more challenging,
because when the communication rate is not large enough for the terminals to become omniscient, it not obvious what piece of information they have to agree on. Indeed, as mentioned at the end of Section~VII in \cite{tyagi2013common}, a characterization of the key rate when the communication rate is less than MIMK, along with a single-letter characterization of MIMK, is an interesting open problem.
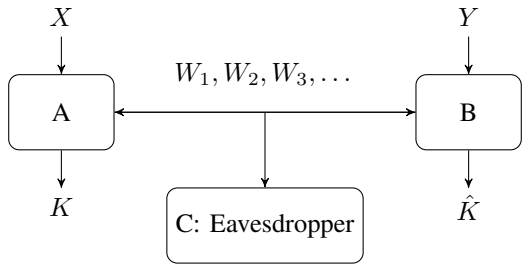
\begin{figure}[ht]
  \centering
\begin{tikzpicture}
[node distance=1cm,minimum height=10mm,minimum width=14mm,arw/.style={->,>=stealth'}]
  \node[coordinate] (p1) {};
  \node[rectangle,draw,rounded corners] (C) [below =of p1] {C: Eavesdropper};
  \node[rectangle,draw,rounded corners] (A) [left =2cm of p1] {A};
  \node[rectangle,draw,rounded corners] (B) [right =2cm of p1] {B};
  \node[coordinate] (X) [above =0.5cm of A] {};
  \node[coordinate] (Y) [above =0.5cm of B] {};
  \node[coordinate] (K) [below =0.5cm of A] {};
  \node[coordinate] (Kh) [below =0.5cm of B] {};

  \draw [arw] (X) to node[midway,above]{$X$} (A);
  \draw [arw] (Y) to node[midway,above]{$Y$} (B);
  \draw [arw] (A) to node[midway,below]{$K$} (K);
  \draw [arw] (B) to node[midway,below]{$\hat{K}$} (Kh);
  \draw [arw] (A) to node[midway,above]{$W_1,W_2,W_3,\dots$} (B);
  \draw [arw] (B) to node[midway,below]{~} (A);
  \draw [arw] (p1) to node{} (C);
\end{tikzpicture}
\caption{\small Terminals A and B observe sources with joint distribution $Q_{XY}$ and interactive (two-way) communication between A and B is allowed.}
\label{figinteractive}
\end{figure}

In this paper we consider the two-terminal interactive key generation model shown in Figure~\ref{figinteractive}, which is similar to the setting of \cite{tyagi2013common}. However, instead of examining only the MIMK, we analyzed the full tradeoff between key rate and communication rate using a different approach.

In Section~\ref{sec_pre}, we revisit Kaspi's original idea of
multi-letter characterizations of the rate region of interactive source coding \cite{kaspi1985two}, where each round of communication accounts for a new auxiliary random variable and
introduces an additional term
to the rate expressions which resembles the expressions in the one-way counterparts.
In our interactive key generation problem, we derive a similar multi-letter characterization as the first step.

In Section~\ref{sec_convgeo}, we then simplify the multi-letter characterization of the key-communication tradeoff region using \emph{$XY$-concave envelopes}, partially inspired by a similar characterization in the context of interactive source coding by Ma et~al.~\cite{ma2012interactive},
who noticed that each auxiliary random variable in the multi-letter region, which corresponds to each
public communication round,
amounts to convexifying the rate region with respect a marginal distribution. Hence in the infinite-round limit, the minimum sum rate can be described in terms of a marginally convex envelope, i.e.,the the greatest functional which is convex with respect to each marginal distribution and dominated by a given functional.
The convexifying role of auxiliary random variables is
very common in information theory
(e.g.~\cite{el2011network})
At first sight, this idea is easily overlooked as a mere restatement of the multi-letter region. However, as demonstrated by Ma et~al.'s work \cite{ma2012interactive} as well as the present paper, the conceptual simplification opens the possibility of tackling specific open questions and making new connections.
Moreover, we introduce a notions of $X$-absolute continuity
and $XY$-absolute continuity, so that the marginal concave envelope approach is applicable to general non-discrete sources. In fact this framework may be applied to other problems involving convex/concave envelopes to avoid the technical difficulty of defining a conditional distribution from a given joint distribution.

Section~\ref{sec_kbib} focuses on the regime of very small communication rates.
The \emph{key bits per interaction bit} (KBIB) is defined as the fundamental limit on the maximum amount of key bits that can be ``unlocked'' by each communication bit, which is the most befitting for the scenario of stringent communication constraints and relatively abundant correlated resources.
Generally, KBIB is not completely implied by the rate region since the length of the communication bits can be a vanishing fraction of the blocklength;
but for stationary memoryless sources, it can be shown to be equal to the slope of the boundary of the achievable rate region at the origin.
We introduce a ``symmetric strong data processing constant'' (SSDPC), defined as the minimum of a parameter such that a certain information-theoretic functional touches its $XY$-concave envelope at a given source distribution.
The concave envelope characterization of the achievable rate region for key generation developed in Section~\ref{sec_convgeo} can then be used to show that KBIB
is a monotonic function of SSDPC.
It is interesting to compare SSDPC with the conventional strong data processing constant \cite{ahlswede1976spreading}, which has a similar (but only one-sided) convex envelope characterization \cite{anantharam2013maximal} and is similarly related to the key bits per communication bit in the one-way protocol \cite{courtade2013outer}\cite{Liu}.
For binary symmetric sources and  Gaussian sources,
we show that the SSDPC coincides with the strong data processing constant, implying that one-way communication is sufficient for achieving KBIB for those sources.

Returning to the MIMK problem considered by Tyagi \cite{tyagi2013common},
a new characterization of the minimum interaction rate needed for achieving the key capacity is given.
The proof of this characterization relies on a saddle
point property of an optimization problem,
which we prove in the finite-alphabet case by establishing semicontinuity and convergence properties of $XY$-concave functions.
In \cite{tyagi2013common} Tyagi conjectured that MIMK equals the minimum one-round communication rate for achieving the maximum key rate for all binary sources.
Here we use the new characterization of MIMK to prove that,
unless the binary random variables are independent, the necessary and sufficient condition for the conjecture to hold is that the conditional distribution of one binary random variable given the other is a binary symmetric channel.
We also propose a new conjecture that the complete key-interaction tradeoff region for binary symmetric sources can be achieved with one-way communication, and provide some numerical and analytical evidence for its validity.

\section{preliminary}\label{sec_pre}
\subsection{Problem Setup}
In Figure~\ref{figinteractive},
let $Q_{XY}$ be the joint distribution of the sources.
Terminals A and B observe $X$ and $Y$, respectively.
Terminal A computes an integer $W_1=W_1(X)$ (possibly stochastically) and sends it to B.
Then B computes an integer $W_2=W_2(W_1,Y)$ and sends it to A, and so on, for a total of $r$ rounds.
Then, A and B calculate the integer-valued keys\footnote{Notation $W_i^j:=(W_i,W_{i+1},\dots,W_j)$ denotes a vector and $W^r:=W_1^r$.} $K=K(X,W^r)$ and $\hat{K}=\hat{K}(Y,W^r)$, respectively, possibly stochastically.
The objective is that $K=\hat{K}$ with high probability, and that $K$ is (almost) independent of the public messages $W^r$ observed by the eavesdropper.
The compliance with these two objectives can be measured by a single quantity (see for example \cite{Tyagi2015}\cite{hayashi2016}):
\begin{align}
\Delta_n:=\frac{1}{2}|Q_{K\hat{K}W^r}-T_{K\hat{K}}Q_{W^r}|.
\label{e_mdelta}
\end{align}
Here $T_{K\hat{K}}$ denotes the target distribution under which $K=\hat{K}$ is equiprobable, that is,
\begin{align}
T_{K\hat{K}}(k,\hat{k}):=\frac{1}{|\mathcal{K}|}
1\{k=\hat{k}\},\quad\forall k,\hat{k}\in\mathcal{K}.
\label{e_tk}
\end{align}
The total variation $|\cdot|$ is defined as the $\ell_1$ distance.
Note that such a performance measure arises naturally when the \emph{likelihood encoder} is used in the achievability proof (see for example \cite{Liu} or Appendix~\ref{app_region}).

In the case of stationary memoryless sources and block coding, we substitute $X\leftarrow X^n$ and $Y\leftarrow Y^n$, where $n$ is the blocklength.

\begin{defn}
The triple $(R,R_1,R_2)$ is said to be \emph{achievable in $r$ rounds} ($r\in\{1,2,\dots,\infty\}$) if a sequence of generation schemes (indexed by the blocklength $n$) in $r$ rounds\footnote{As a convention, we say ``in $r$ rounds'' or ``$r$-round'' if the number of rounds of communication less than $r+1$ (or equivalently, not exceeding $r$ for an integer $r$ or finite for $r=\infty$). Therefore the term is not precise if $r=\infty$.}
can be designed to fulfill the following conditions:
\begin{align}
\liminf_{n\to\infty}\frac{1}{n}\log|\mathcal{K}|&\ge R;
\\
\limsup_{n\to\infty}
\sum_{l\in\mathcal{O}^r}\frac{1}{n}\log|\mathcal{W}_l|&\le R_1;
\\
\limsup_{n\to\infty}
\sum_{l\in\mathcal{E}^r}\frac{1}{n}\log|\mathcal{W}_l|&\le R_2;
\\
\lim_{n\to\infty}\Delta_n&=0.
\end{align}
where we used the following notation: for integers $s$ and $r$,
\begin{align}
\mathcal{E}_s^r&:=\{s,2,\dots,r\}\cap 2\mathbb{Z};
\\
\mathcal{O}_s^r&:=\{s,2,\dots,r\}\setminus 2\mathbb{Z},
\end{align}
and $\mathcal{E}_1^r$ and $\mathcal{O}_1^r$ are abbreviated as $\mathcal{E}^r$ and $\mathcal{O}^r$, respectively.
\end{defn}
\begin{rem}\label{rem_equiv}
Some authors (see e.g.~\cite[(6)]{csiszar2004secrecy}) have considered the following alternative performance metrics
\begin{align}
\epsilon_n&:=\mathbb{P}[K\neq \hat{K}],
\label{e_m0}
\\
\nu_n&:=\max\{D(Q_{K|W^r}\|T_K|Q_{W^r}),
\,D(Q_{\hat{K}|W^r}\|T_{\hat{K}}|Q_{W^r})\}.
\label{e_m1}
\end{align}
The relation to \eqref{e_mdelta} is as follows:
Clearly, $\Delta_n\to0$ implies that $\epsilon_n\to0$. Also, notice that for arbitrary $P$ and $Q$ on the same alphabet $\mathcal{X}$, \cite[Lemma~2.7]{csiszar2011information} gives
\begin{align}
|H(P)-H(Q)|&\le |P-Q|\log\frac{|\mathcal{X}|}{|P-Q|}+\log|\mathcal{X}|1\left\{
|P-Q|>\frac{1}{2}\right\}
\end{align}
Thus by Jensen's inequality and Markov inequality, we have
\begin{align}
D(Q_{K|W^r}\|T_K|Q_{W^r})&=H(T_K)-H(Q_{K|W^r}|Q_{W^r})
\\
&\le 2\Delta_n\log\frac{|\mathcal{K}_n|}{2\Delta_n}
+4\Delta_n\log|\mathcal{K}_n|.
\end{align}
The same upper-bound holds for $D(Q_{\hat{K}|W^r}\|T_{\hat{K}}|Q_{W^r})$.
Therefore, if we assume that $|\mathcal{K}_n|=e^{O(n)}$, then $\Delta_n=o(1)$ ensures that $\nu_n=o(n)$.
On the other hand, by Pinsker's
inequality, $\nu_n\to0$ implies $|Q_{KW^r}-T_KQ_{W^r}|\to0$, which, combined with $\epsilon_n\to 0$, implies that $\Delta_n\to 0$.
\end{rem}

\begin{defn}
The set of achievable rate tuples for key generation in $r$ ($r\in\{0,1,2,\dots,\infty\}$) rounds is denoted by $\mathcal{R}_r(X,Y)$.
Define the \emph{total sum rate}
\begin{align}
S:=R+R_1+R_2,
\end{align}
and denote by $\mathcal{S}_{r}(X,Y)$ the set of achievable $(S,R)$.
Note that in the $r=0$ case, $\mathcal{R}_0(X,Y)=\mathcal{S}_0(X,Y)
=\emptyset$.
\end{defn}

From the standard diagonalization argument~\cite{han_s}, $\mathcal{R}_r(X,Y)$ and $\mathcal{S}_{r}(X,Y)$ are closed.

Clearly $\mathcal{R}_r(X,Y)$ is ``increasing'' in $r$. We can also show that it is ``continuous'' at $r=\infty$, that is
$\mathcal{R}_{\infty}(X,Y)$ equals the closure of $\bigcup_{r=1}^{\infty}\mathcal{R}_r(X,Y)$.
The ``$\supseteq$'' part is immediate from the definitions.
Although less obvious, ``$\subseteq$'' can be seen from Theorem~\ref{thm_region}.

The set $\mathcal{S}_{r}(X,Y)$ is a linear transformation of $\mathcal{R}_{r}(X,Y)$;
the former sometimes admits simpler expressions than the latter, but does not distinguish the communication rates in the two directions.

Inspired by Kaspi's multi-letter characterization of the rate region for interactive source coding \cite{kaspi1985two}, we prove in Appendix~\ref{app_region} and \ref{app_conv} that
\begin{thm}\label{thm_region}
For stationary memoryless sources
and $r\in\{0,1,2,\dots,\infty\}$,
$\mathcal{R}_r(X,Y)$
is the closure of the set of $(R,R_1,R_2)$ satisfying
\begin{align}
R\le& \sum_{i\in\mathcal{O}^r} I(U_i;Y|U^{i-1})
+\sum_{i\in\mathcal{E}^r} I(U_i;X|U^{i-1}),\label{e_rregion1}
\\
R_1\ge&\sum_{i\in\mathcal{O}^r}I(U_i;X|U^{i-1})
-\sum_{i\in\mathcal{O}^r}I(U_i;Y|U^{i-1}),\label{e_rregion2}
\\
R_2\ge&\sum_{i\in\mathcal{E}^r}I(U_i;Y|U^{i-1})
-\sum_{i\in\mathcal{E}^r}I(U_i;X|U^{i-1}),\label{e_rregion3}
\end{align}
and consequently
$\mathcal{S}_r(X,Y)$
is the closure of the set of $(S,R)$ satisfying
\begin{align}
S\ge& \sum_{i\in\mathcal{O}^r} I(U_i;X|U^{i-1})
+\sum_{i\in\mathcal{E}^r} I(U_i;Y|U^{i-1}),
\\
R\le& \sum_{i\in\mathcal{O}^r} I(U_i;Y|U^{i-1})
+\sum_{i\in\mathcal{E}^r} I(U_i;X|U^{i-1}),
\end{align}
where in both cases the auxiliary r.v.'s satisfy
\begin{align}
U_i-(X,U^{i-1})-Y, &\quad \textrm{odd }i\in\{1,2,\dots,r\};\label{e_markov1}
\\
X-(Y,U^{i-1})-U_i,  &\quad \textrm{even }i\in\{1,2,\dots,r\}.\label{e_markov2}
\end{align}
\end{thm}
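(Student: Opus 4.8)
The plan is to prove Theorem~\ref{thm_region} by separately establishing achievability and the converse for each finite $r$, and then handling $r=\infty$ by a continuity/diagonalization argument. For achievability with a fixed number of rounds $r$, I would use a layered random coding scheme that mimics Kaspi's interactive source coding construction, round by round. In round $i$ (say $i$ odd, so A transmits), terminal A, having already generated $U^{i-1}$ that is common knowledge, generates $U_i^n$ from the test channel $Q_{U_i|X U^{i-1}}$ and communicates its index using a Wyner--Ziv/binning argument: since B knows $(Y^n, U^{i-1})$, the rate needed to convey $U_i^n$ is essentially $I(U_i;X|U^{i-1}) - I(U_i;Y|U^{i-1})$, which explains the form of \eqref{e_rregion2}--\eqref{e_rregion3}. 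After all $r$ rounds, both terminals share $U^r$ (to within vanishing error), and the key is extracted from the ``new information'' each transmission reveals to the receiver — a standard privacy-amplification step applied to $U^r$ — giving key rate $\sum_{i\in\mathcal{O}^r} I(U_i;Y|U^{i-1}) + \sum_{i\in\mathcal{E}^r} I(U_i;X|U^{i-1})$ as in \eqref{e_rregion1}. The likelihood-encoder soft-covering technique (the paper cites \cite{Liu}) is what makes the secrecy measure $\Delta_n$ vanish, controlling the joint distribution of $(K,\hat K, W^r)$ against the ideal $T_{K\hat K}Q_{W^r}$; I would defer these routine soft-covering estimates to the appendix the paper already references.

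For the converse with fixed $r$, I would take any scheme achieving $(R,R_1,R_2)$ with $\Delta_n\to 0$ and identify the auxiliary random variables as $U_i := (W_i, T)$ or $U_i := (W^i, \text{some past source symbols}, T)$ with $T$ a time-sharing index — the precise choice being the delicate part. The Markov constraints \eqref{e_markov1}--\eqref{e_markov2} should fall out of the causal structure of the protocol: $W_i$ is a function of $(X^n, W^{i-1})$ when A transmits (hence $W_i - (X^n, W^{i-1}) - Y^n$), and symmetrically when B transmits. The rate bounds \eqref{e_rregion2}--\eqref{e_rregion3} come from bounding $\log|\mathcal{W}_l| \ge H(W_l) \ge H(W_l|Y^n, W^{l-1})$ and telescoping; the key bound \eqref{e_rregion1} comes from Fano's inequality applied to $K = \hat K$ together with the secrecy constraint $I(K;W^r)\approx 0$, unwound via the chain rule across rounds. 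Converting the $\nu_n$/$\epsilon_n$ estimates of Remark~\ref{rem_equiv} into usable single-letter inequalities, and in particular carefully single-letterizing the interactive chain of conditional mutual informations so that the $U^{i-1}$ conditioning is consistent across \eqref{e_rregion1}--\eqref{e_markov2}, is where the bookkeeping is heaviest.

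The $r=\infty$ case is then handled by noting that the region for finite $r$ is monotone in $r$ and that the union over all $r$ is already closed under the stated continuity claim (which the excerpt flags as following from this theorem), so $\mathcal{R}_\infty(X,Y)$ is the closure of $\bigcup_r \mathcal{R}_r(X,Y)$; the multi-letter expressions for finite $r$ combine to give the stated form. The $\mathcal{S}_r$ statement is an immediate linear-algebra consequence: adding \eqref{e_rregion1} to \eqref{e_rregion2}+\eqref{e_rregion3} causes the $I(U_i;Y|U^{i-1})$ and $I(U_i;X|U^{i-1})$ terms to telescope against each other, leaving $S \ge \sum_{i\in\mathcal{O}^r} I(U_i;X|U^{i-1}) + \sum_{i\in\mathcal{E}^r} I(U_i;Y|U^{i-1})$ with the same $R$ bound as before.

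I expect the main obstacle to be the converse single-letterization: interactive protocols do not single-letterize as cleanly as one-way ones, because the auxiliary variable at round $i$ must simultaneously carry enough of the past ($W^{i-1}$ and the relevant source prefixes) to make the Markov chains \eqref{e_markov1}--\eqref{e_markov2} hold \emph{and} be ``memoryless enough'' after time-sharing that the per-letter mutual informations add up correctly. Getting a definition of the $U_i$'s that threads this needle — and verifying that the secrecy constraint $\Delta_n\to 0$ translates into the key-rate bound \eqref{e_rregion1} without loss — is the crux; the achievability and the $r=\infty$ passage are comparatively mechanical.
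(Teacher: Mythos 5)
Your plan follows essentially the same route as the paper: achievability via a round-by-round superposition random code analyzed with the likelihood encoder and soft-covering (a public index $W_i$ plus a secret index $W'_i$ in each round, with exactly the rate splits you describe), and a converse by single-letterizing with a time-sharing index and auxiliaries built from the messages together with source past/future. The two specific devices your sketch leaves open at the point you call the crux, and which the paper supplies, are (i) the Csisz\'ar sum identity applied to $\tilde{U}_i=(X^{J-1},Y_{J+1}^n,J,W^i)$, which is what makes the interactive telescoping behind \eqref{e_rregion2}--\eqref{e_rregion3} go through, and (ii) appending $\hat{K}$ to the last auxiliary, $U_r=(\tilde{U}_r,\hat{K})$, which is how the key-rate bound \eqref{e_rregion1} is extracted jointly from Fano's inequality and the secrecy constraint.
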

\begin{rem}
For finite $X$ and $Y$, the achievability part of Theorem~\ref{thm_region} can also be obtained from \cite[Theorem~3]{Gohari2012} by setting
$Z=S = Y_1 = Y_2 = \emptyset$ and $R_0 = 0$ (see \cite[Remark~3]{Gohari2012}, but with finite $R_{12}$ and $R_{21}$) which was established by the \emph{output statistics of random binning} technique.
Our present proof in Appendix~\ref{app_region},
based on the \emph{likelihood encoder} \cite{song},
directly applies to general distributions.
\end{rem}
\begin{rem}\label{rem_finite}
Using the fact that the mutual information equals its supremum over finite partitions
(see e.g.\ \cite[Proposition 3.2.3]{polyanskiy2016lecture}),
we can show that the $\mathcal{U}_1,\dots,\mathcal{U}_r$ in Theorem~\ref{thm_region}
can be restricted to be finite,
even when $\mathcal{X}$ and $\mathcal{Y}$ are not finite.
The boundary points in $\mathcal{R}_r(X,Y)$ may not be equal to the quantities on the right sides of \eqref{e_rregion1}-\eqref{e_rregion3} for some finite $U^r$, but can be approximated by choosing a sequence of finite $U^r$.
\end{rem}
This bound is quite intuitive: depending on whether $i$ is odd or even, $U_i$ corresponds to the messages sent by either the terminal A or B. The first round of communication contributes to the term
$
(I(U_1;Y),I(U_1;X)-I(U_1;Y),0)
$
in the rate tuple expressions, which are exactly the rates in one-round key generation \cite{ahlswede1993common}.
The second round
contributes similar mutual informations except that they
are now conditioned on $U_1$, which is now shared publicly, and so on.

The related common randomness (CR) generation problem is similar to the key generation problem,
except that $K$ need not be almost independent of $W^r$.
For stationary memoryless sources,
the achievable region for common randomness generation is a linear transform of the achievable region for key generation;
see e.g.~the arguments in \cite{ahlswede1998common}\cite{tyagi2013common}.
For a given allowable interactive communication rate,
the maximum achievable CR rate is the maximum achievable key rate in the corresponding key generation problem
plus the allowable interactive communication rate (when local randomization is allowed).

\subsection{$XY$-Absolutely Continuity}\label{sec_abs}
In this subsection, we introduce a framework for convex geometric representations of rate regions, which will later (Section~\ref{sec_convgeo}) be applied to the key generation problem.
The convex geometric representation is closely related to representations via auxiliary random variables.
Consider a distribution $Q_{UXY}$ under which $U-X-Y$, and suppose that $|\mathcal{U}|<\infty$.
Then for any $u\in\mathcal{U}$,
\begin{align}
\frac{{\rm d}Q_{XY|U=u}}
{{\rm d}Q_{XY}}
=\frac{Q_{U|X=x}(u)}{Q_U(u)}
\end{align}
where $\frac{Q_{U|X=\cdot}(u)}{Q_U(u)}\le \frac{1}{\min_u Q_U(u)}$ is a bounded function depending on $x$ only.
Moreover, consider $Q_{U^rXY}$ under which the Markov chains \eqref{e_markov1} and \eqref{e_markov2} are satisfied,
where $U^r$ can be assumed finite (Remark~\ref{rem_finite}).
By repeating the same argument, we see that for any $u_1,\dots,u_r$, there exist bounded functions $f\colon \mathcal{X}\to [0,\infty)$ and $g\colon\mathcal{Y}\to [0,\infty)$ such that
\begin{align}
\frac{{\rm d}Q_{XY|U^r=u^r}}{{\rm d}Q_{XY}}=fg.
\end{align}
These observations motivate the following definition:
\begin{defn}\label{defn_abs}
A nonnegative finite measure $\nu_{XY}$ is said to be
\emph{$XY$-absolutely continuous} with respect to $\mu_{XY}$, denoted as $\nu_{XY}\preceq\mu_{XY}$, if there exists bounded\footnote{The boundedness assumption gives certain technical conveniences; for example the measure in \eqref{e_theta1} is guaranteed to be finite.} measurable
functions $f$ and $g$ on $\mathcal{X}$ and $\mathcal{Y}$, respectively, such that
\begin{align}
\frac{{\rm d}\nu_{XY}}{{\rm d}\mu_{XY}}=fg,
\label{e_xyabs}
\end{align}
$\mu_{XY}$-almost surely.
Further $\nu_{XY}$ is
said to be \emph{$X$-absolutely continuous} with respect to $\mu_{XY}$, denoted as
\begin{align}
\nu_{XY}\preceq_X \mu_{XY}
\label{e_xabs}
\end{align}
if one can take $g=1$ in \eqref{e_xyabs}.
 \end{defn}

\begin{rem}
It is straightforward to see that $\nu_{XY}\preceq \mu_{XY}$ if there exist measures $(\theta_{XY}^i)_{i=1}^t$ for some odd integer $t$ such that
\begin{align}
\nu_{XY}&\preceq_Y\theta_{XY}^t;
\\
\theta_{XY}^i&\preceq_X \theta_{XY}^{i-1},\quad i\in\{1,\dots,t\}\setminus2\mathbb{Z};
\\
\theta_{XY}^i&\preceq_Y \theta_{XY}^{i-1},\quad i\in\{1,\dots,t\}\cap2\mathbb{Z};
\\
\theta_{XY}^1&\preceq_X\mu_{XY}.
\end{align}
The converse is also true, and one can in fact choose $t=1$: consider $\mu_{XY}$, $\nu_{XY}$, $f$ and $g$ as in \eqref{e_xyabs},
then put
\begin{align}
{\rm d}\theta_{XY}^1&:=f{\rm d}\mu_{XY}
\label{e_theta1}
\end{align}
which is guaranteed to be a finite measure since $f$ is assumed to be bounded.
\end{rem}

\begin{defn}
The relation $\preceq_X$ is a \emph{preorder} relation
\footnote{A preorder relation satisfies reflexivity and transitivity, but not necessarily antisymmetry. The more familiar notion of partially ordered set is a preordered set satisfying antisymmetry.}
on the set of nonnegative finite measures. We denote by
\begin{align}
\mathcal{M}_X(\mu):=\{\nu\colon \nu\preceq_X\mu\}
\end{align}
the \emph{lower set} of $\mu$ in the set of nonnegative finite measures. Similarly, $\mathcal{M}(\mu)$ is defined as the lower set of $\mu$ with respect to $\preceq$. Both relations also make the set of probability distributions a preordered set. Denote by $\mathcal{P}_X(Q_{XY})$ and $\mathcal{P}(Q_{XY})$ the corresponding lower sets.
\end{defn}
\begin{rem}
The lower set $\mathcal{P}(Q_{XY})$ appears frequently information theory (with different notations and names). Csisz\'{a}r \cite{csiszar1975divergence} showed that the $I$-projection of $Q_{XY}$ onto the linear set of distributions having given marginal distributions, if exists, must belong to $\mathcal{P}(Q_{XY})$. Due to this fact, $\mathcal{P}(Q_{XY})$ has emerged, e.g.~in the context of hypercontractivity \cite{kamath15} and multiterminal hypothesis testing \cite[(3)]{polyanskiy2012hypothesis} (for example, the minimizer in \cite[(3)]{polyanskiy2012hypothesis} must lie in $\mathcal{P}(Q_{XY})$). In interactive source coding \cite{ma2012interactive}, $\mathcal{P}(Q_{XY})$ has been defined for discrete distributions, without introducing the preorder relation. In both \cite{ma2012interactive} and the present paper, the appearance of $\mathcal{P}(Q_{XY})$ is due to the conditioning on auxiliary random variables satisfying Markov structures, cf.~\eqref{e_rregion1}-\eqref{e_rregion3}.
\end{rem}

Next, we introduce notions of concave functions and concave envelopes with respect to the marginal distributions, generalizing the discrete case defined in \cite{ma2012interactive}. We refine those definitions using the $XY$-absolute continuity framework to resolve the technicality of defining a conditional distribution from a joint distribution.
\begin{defn}
A functional $\sigma$ on a set $\mathcal{P}$ of distributions is said to be \emph{$X$-concave} if for any $P_{XY}\in\mathcal{P}$, $(P_{XY}^i)_{i=0,1}$ and $\alpha\in[0,1]$ satisfying
\begin{align}
P_{XY}^i&\preceq_X P_{XY},\quad i=0,1;
\\
P_{XY}&=(1-\alpha)P_{XY}^0+\alpha P_{XY}^1,
\end{align}
it holds that
\begin{align}
\sigma(P_{XY})\ge (1-\alpha)\sigma(P_{XY}^0)+\alpha\sigma(P_{XY}^1).
\end{align}
Moreover, $\sigma$ is said to be \emph{$XY$-concave} if it is both $X$-concave and $Y$-concave.
\end{defn}

\begin{defn}\label{defn_env}
Given a functional $\sigma$ on a set $\mathcal{P}$ of distributions, the functional $\sigma'$ is said to be the $X$-concave envelope of $\sigma$, denoted as $\xenv(\sigma)$, if $\sigma'$ is $X$-concave, dominates $\sigma$, and is dominated by any other $X$-concave functional which dominates $\sigma$. The \emph{$XY$-concave envelope}, denoted by $\xyenv(\sigma)$, is defined analogously in terms of $XY$-concavity.
\end{defn}
The existence of the $X$-concave envelope follows from the existence of the conventional concave envelope for a function.
For the existence of $XY$-concave envelope, we can take the $X$-concave envelope and $Y$-concave envelope of the given functional alternatingly,
and the pointwise limit (as the number of taking the marginal concave envelopes tends to infinity) exists by the monotone convergence theorem.
The limit functional satisfies the condition in Definition~\ref{defn_env}.
To summarize, we have
\begin{prop}\label{prop_exist}
Given a functional $\sigma$ on a set $\mathcal{P}$ of distributions,
both $\xenv(\sigma)$ and $\xyenv(\sigma)$ exist.
Moreover, $\xyenv(\sigma)$ equals the pointwise limit of $\sigma_r$, $r\to\infty$ where $\sigma_0:=\sigma$ and
\begin{align}
\sigma_r:=\left\{
\begin{array}{cc}
\env_X(\sigma_{r-1}) & \textrm{$r$ is odd};
\\
\env_Y(\sigma_{r-1}) & \textrm{$r$ is even}.
\end{array}
\right.
\end{align}
\end{prop}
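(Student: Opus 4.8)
The plan is to build $\env_X(\sigma)$ (and, symmetrically, $\env_Y(\sigma)$) as the pointwise infimum of all $X$-concave functionals that dominate $\sigma$, and then to show that the alternating iteration $\sigma_r$ converges to $\env_{XY}(\sigma)$ --- without ever presupposing that the latter exists. For the existence part, the only thing to check is that a pointwise infimum of $X$-concave functionals is again $X$-concave: fixing an admissible decomposition $P_{XY}=(1-\alpha)P_{XY}^0+\alpha P_{XY}^1$ with $P_{XY}^i\preceq_X P_{XY}$ and $\alpha\in[0,1]$, the defining inequality holds for every member of the family, and it survives taking the infimum because $\inf_i(a_i+b_i)\ge\inf_i a_i+\inf_i b_i$ and $\inf_i(\lambda a_i)=\lambda\inf_i a_i$ for $\lambda\ge0$. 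The class of $X$-concave majorants of $\sigma$ is nonempty (constant functionals when $\sigma$ is bounded above, as in our applications; otherwise one admits $+\infty$-valued functionals), so $\env_X(\sigma):=\inf\{\tau:\tau\ X\text{-concave},\ \tau\ge\sigma\}$ is well defined, is $X$-concave, dominates $\sigma$, and lies below every competing $X$-concave majorant --- i.e.\ it meets Definition~\ref{defn_env}; the same works for $\env_Y$, and intersecting the two classes gives $\env_{XY}$ directly. I will also use two facts that are immediate from this characterization: $\env_X$ is monotone ($\sigma\le\sigma'$ implies $\env_X(\sigma)\le\env_X(\sigma')$, since the majorant classes are nested) and idempotent on $X$-concave functionals ($\env_X(h)=h$ when $h$ is $X$-concave), and likewise for $\env_Y$.

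Now put $\sigma_0=\sigma$ and let $\sigma_r$ alternate between $\env_X(\sigma_{r-1})$ and $\env_Y(\sigma_{r-1})$ as in the statement. Each envelope dominates its argument, so $(\sigma_r)_r$ is pointwise non-decreasing and $\sigma_\infty:=\lim_{r\to\infty}\sigma_r=\sup_r\sigma_r$ exists pointwise in $[-\infty,+\infty]$ (this is the only use of the ``monotone convergence theorem''). If $\tau$ is any $XY$-concave functional with $\tau\ge\sigma$, then $\tau\ge\sigma_r$ for all $r$, by induction: from $\tau\ge\sigma_{r-1}$, monotonicity and idempotence give $\sigma_r=\env_X(\sigma_{r-1})\le\env_X(\tau)=\tau$ for odd $r$ (using that $\tau$, being $XY$-concave, is in particular $X$-concave), and symmetrically for even $r$. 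Hence $\sigma\le\sigma_\infty\le\tau$ for every such $\tau$; in particular $\sigma_\infty$ is finite wherever $\sigma$ admits a finite $XY$-concave majorant, e.g.\ everywhere when $\sigma$ is bounded above.

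It remains to verify that $\sigma_\infty$ is $XY$-concave --- the step warranting the most care. The key observation is that odd-indexed $\sigma_r$ are $X$-concave and even-indexed ones are $Y$-concave, while both subsequences increase to the \emph{same} limit $\sigma_\infty$. So for an admissible $\preceq_X$-decomposition $P_{XY}=(1-\alpha)P_{XY}^0+\alpha P_{XY}^1$ (the cases $\alpha\in\{0,1\}$ being trivial), I apply the $X$-concavity inequality to $\sigma_r$ along odd $r$ and let $r\to\infty$, using monotone convergence on each of $\sigma_r(P_{XY})$, $\sigma_r(P_{XY}^0)$, $\sigma_r(P_{XY}^1)$, to obtain $\sigma_\infty(P_{XY})\ge(1-\alpha)\sigma_\infty(P_{XY}^0)+\alpha\sigma_\infty(P_{XY}^1)$; thus $\sigma_\infty$ is $X$-concave, and the mirror-image argument over even $r$ makes it $Y$-concave, hence $XY$-concave. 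Combined with the previous paragraph, $\sigma_\infty$ is $XY$-concave, dominates $\sigma$, and sits below every $XY$-concave majorant of $\sigma$, which are exactly the three conditions of Definition~\ref{defn_env}; therefore $\env_{XY}(\sigma)$ exists and equals $\sigma_\infty$, which is the assertion of Proposition~\ref{prop_exist}. The genuine difficulty is only that final limit interchange --- it works precisely because the limit along odd indices preserves $X$-concavity and along even indices preserves $Y$-concavity, and monotonicity of the sequence lets the limit pass the convex combination with no lower-semicontinuity hypothesis on $\sigma$; the rest is bookkeeping, including keeping the extended-real conventions consistent.
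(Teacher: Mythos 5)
Your proof is correct and follows essentially the same route as the paper, which constructs $\xenv(\sigma)$ via the standard infimum-of-concave-majorants argument and obtains $\xyenv(\sigma)$ as the monotone pointwise limit of the alternating iteration, asserting (without detail) that the limit satisfies Definition~\ref{defn_env}. Your write-up merely fills in the step the paper leaves implicit — that the limit is $XY$-concave because the odd-indexed terms are $X$-concave, the even-indexed terms are $Y$-concave, both subsequences share the same limit, and monotonicity lets the concavity inequality pass to that limit.
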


Now return to the examples discussed at the beginning of the subsection.
Fix $Q_{XY}$ and a functional $\sigma$ on $\mathcal{P}_X(Q_{XY})$.
We have
\begin{align}
\xenv\sigma(P_{XY})
=\sup_{P_{U|X}}\mathbb{E}[\sigma(P_{XY|U}(\cdot|U))]
\label{e_concaux}
\end{align}
for any $P_{XY}\in \mathcal{P}_X(Q_{XY})$,
where the supremum is over conditional distribution $P_{U|X}$ for which $\mathcal{U}$ is finite,
and $P_{UXY}=P_{U|X}P_{XY}$.
Thus, characterizations via auxiliary random variables can be reformulated in terms of concave envelopes.
Similarly, rate regions expressed using auxiliaries $U_1,\dots,U_r$ satisfying the Markov structure \eqref{e_markov1}-\eqref{e_markov2}
can be reformulated taking $X$-concave envelopes and $Y$-concave envelopes alternatingly,
which will converge to the $XY$-concave envelope as $r\to\infty$;
this will be explored in detail in the next subsection.

\section{Convex Geometric Characterizations of the Rate Regions}\label{sec_convgeo}
Building on Theorem~\ref{thm_region}, in this section we derive an alternative characterization of the tradeoff between the key rate $R$ and the sum interactive communication rate $R_1+R_2$ (equivalently, a characterization of $\mathcal{S}_r(X,Y)$) in terms of concave envelopes,
which is given in Theorem~\ref{thm_rregion} below.
A similar approach,
which we do not elaborate here,
can be applied to the tradeoff between $R$, $R_1$ and $R_2$ (the region $\mathcal{R}_r(X,Y)$).
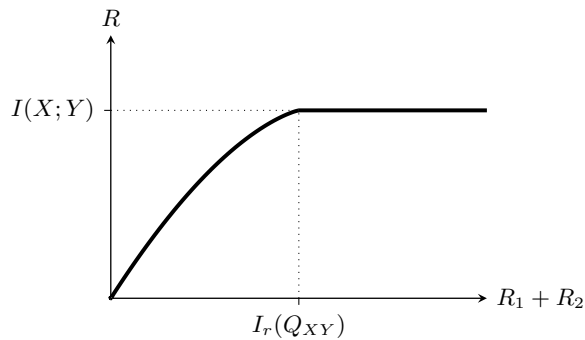
\begin{figure}[ht]
\centering
    \begin{tikzpicture}[scale=5,
      dot/.style={draw,fill=black,circle,minimum size=1mm,inner sep=0pt},arw/.style={->,>=stealth}]
      \draw[arw,line width=0.5pt] (0,0) to (0,0.7) node[above,font=\small] {$R$};
      \draw[arw,line width=0.5pt] (0,0) to (1,0) node[right,font=\small]{$R_1+R_2$};
      \foreach \p in {75}
      {
        \draw[line width=1.5pt] plot[smooth] file {region.table};

       \draw[line width=1.5pt] plot coordinates
       {
       (0.498,0.5)
       (1,0.5)
       };
      }
      \draw[dotted] (0.5,0) to ($(0.5,0.5)$);
      \draw[dotted] (0,0.5) to ($(0.5,0.5)$);
      \draw ($(0.5,0)$) -- ++(0,-1/2pt) node[anchor=north,font=\small] {$I_r(Q_{XY})$};
      \draw (0,0.5) -- ++(-1/2pt,0) node[anchor=east,font=\small] {$I({X;Y})$};
    \end{tikzpicture}
    \caption{\small Achievable region (below the curve) of the key rate $R$ and the sum interactive communication rate $R_1+R_2$. The minimum interaction needed for maximum key rate is denoted as $I_r(Q_{XY})$.
    }
\label{fig_region}
\end{figure}

For any $Q_{XYU^r}$ where the finite-valued auxiliary random variables $U^r$ satisfy the Markov chains \eqref{e_markov1}-\eqref{e_markov2}, denote by $R(Q_{XYU^r})$ the right side of \eqref{e_rregion1}.
Similarly, for the total sum rate, define
\begin{align}
S(Q_{XYU^r})
:=\sum_{i\in\mathcal{O}^r}I(U_i;X|U^{i-1})
+\sum_{i\in\mathcal{E}^r}I(U_i;Y|U^{i-1}).
\end{align}
Observe that
\begin{align}
R(Q_{XYU^r})=& \sum_{i\in\mathcal{O}^r} I(U_i;Y|U^{i-1})
+\sum_{i\in\mathcal{E}^r} I(U_i;X|U^{i-1})
\\
=& I(U_1;Y)
+\sum_{i\in\mathcal{E}_2^r}I(U_i;X|U^{i-1})
+\sum_{i\in\mathcal{O}_2^r}I(U_i;Y|U^{i-1})
\\
=& I(X;Y)-I(X;Y|U_1)
+\sum_{i\in\mathcal{E}_2^r}I(U_i;X|U^{i-1})
+\sum_{i\in\mathcal{O}_2^r}I(U_i;Y|U^{i-1}),
\label{e_usemarkov}
\end{align}
where \eqref{e_usemarkov} used the Markov condition $U_1-X-Y$. Hence by rearranging,
\begin{align}
I(X;Y)-R(Q_{XYU^r})= I(Y;X|U_1)-\sum_{u\in\mathcal{U}_1} R(Q_{YXU_2^r|U_1=u})Q_{U_1}(u)
\label{e18}
\end{align}
Now the key observation is that the right side of \eqref{e18} is similar to the left except that each term is conditioned on $U_1$, the roles of $X$ and $Y$ are switched, and (conditioned on $U_1$) there are $r-1$ auxiliary random variables left. Similarly, we also have
\begin{align}
H(X,Y)-S(Q_{XYU^r})=&H(Y,X|U_1)-
\sum_{u\in\mathcal{U}_1} S(Q_{YXU_2^r|U_1=u})Q_{U_1}(u),
\label{e19}
\end{align}
and similar observations can be made.
In fact, by iterating the process we have
\begin{align}
H(X,Y)-S(Q_{XYU^r})&=H(X,Y|U^r);
\label{e39}
\\
I(X;Y)-R(Q_{XYU^r})&=I(X;Y|U^r).
\label{e40}
\end{align}
In the case of non-discrete $(X,Y)$, we can choose a reference measure and replace the entropy/conditional entropy terms above with relative entropy/conditional relative entropy, at the cost of slightly more cumbersome notation, so there is no loss of generality with this approach.

Given $Q_{XY}$, and $s>0$, define a functional on\footnote{In principle \eqref{e23} can be defined on the set of all distributions on $\mathcal{X}\times \mathcal{Y}$, although for the purpose of computing $\omega_r^s(Q_{XY})$, considering the smaller set of $\mathcal{P}(Q_{XY})$ gives the same result while being more computationally economical.} $\mathcal{P}(Q_{XY})$ by
\begin{align}
\omega^s_{0}(P_{XY}):=sH(\hat{X},\hat{Y})-I(\hat{X};\hat{Y}).
\label{e23}
\end{align}
where $P_{XY}\preceq Q_{XY}$ and $(\hat{X},\hat{Y})\sim P_{XY}$.
For $r\in\{1,2,\dots\}$, define
\begin{align}
\omega_r^s:=\left\{
\begin{array}{cc}
\xenv(\omega_{r-1}^s) & \textrm{$r$ is odd};
\\
\xenv(\omega_{r-1}^s) & \textrm{$r$ is even},
\end{array}
\right.
\label{e29}
\end{align}
and define
\begin{align}
\omega_{\infty}^s
:=\xyenv(\omega_0^s)
\end{align}
which agrees with the pointwise limit of $\omega_r^s$ as $r\to\infty$ (Proposition~\ref{prop_exist}).

The main result of this section unveils the connection between $\omega_r^s(Q_{XY})$ and the achievable rate region:
\begin{thm}\label{thm_rregion}
Fix $Q_{XY}$.
For any $r\in \{0,1,2,\dots,\infty\}$,
\begin{align}
\omega^s_r(Q_{XY})
&=sH(X,Y)-I(X;Y)+\sup_{U^r}\left\{
R(Q_{XYU^r})-sS(Q_{XYU^r})\right\}
\label{e31}
\\
&=sH(X,Y)-I(X;Y)+\sup_{(S,R)\in\mathcal{S}_r(X,Y)}\left\{
R-sS\right\}
\label{e48}
\end{align}
where $U^r$ is finite valued and satisfies \eqref{e_markov1}-\eqref{e_markov2}.
\end{thm}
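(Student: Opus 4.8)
The plan is to prove \eqref{e31} by induction on $r$, and then to deduce \eqref{e48} by unwinding the definition of $R(Q_{XYU^r})$ and $S(Q_{XYU^r})$ together with the identities \eqref{e39}--\eqref{e40}. First I would treat the base case $r=0$: here there are no auxiliary random variables, the supremum in \eqref{e31} is over the empty tuple, so $R(Q_{XY})=0$, $S(Q_{XY})=0$, and the right side of \eqref{e31} collapses to $sH(X,Y)-I(X;Y)$, which is exactly $\omega_0^s(Q_{XY})$ by definition \eqref{e23}. So the base case is a tautology once the degenerate conventions are pinned down.

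For the inductive step, suppose \eqref{e31} holds for $r-1$ (for every source distribution in $\mathcal{P}(Q_{XY})$, which is what makes the induction go through — this is why $\omega_{r-1}^s$ was defined as a \emph{functional} on all of $\mathcal{P}(Q_{XY})$, not just evaluated at $Q_{XY}$). Say $r$ is odd, so $\omega_r^s = \env_X(\omega_{r-1}^s)$. By the auxiliary-random-variable formula \eqref{e_concaux}, $\omega_r^s(Q_{XY}) = \sup_{P_{U_1|X}} \mathbb{E}\,\omega_{r-1}^s(Q_{XY|U_1}(\cdot|U_1))$, where the sup ranges over finite-valued $U_1$ with $U_1-X-Y$ (the correct Markov chain for an odd round, \eqref{e_markov1}). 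Now apply the induction hypothesis inside the expectation: for each value $u$, $\omega_{r-1}^s(Q_{XY|U_1=u})$ equals $sH(X,Y|U_1{=}u)-I(X;Y|U_1{=}u)$ plus the supremum over $(r-1)$-tuples $U_2^r$ (with $X$ and $Y$ swapped, since after conditioning on the odd-round message the next round is B's) of $R(Q_{YXU_2^r|U_1=u}) - sS(Q_{YXU_2^r|U_1=u})$. Then I would average over $u \sim Q_{U_1}$, pull the two suprema together into a single supremum over $U^r=(U_1,U_2^r)$ satisfying \eqref{e_markov1}--\eqref{e_markov2}, and invoke the recursions \eqref{e18}--\eqref{e19} (and their analogues with $X,Y$ exchanged and an extra layer of conditioning) to recognize that
\[
\mathbb{E}\bigl[sH(X,Y|U_1) - I(X;Y|U_1)\bigr] + \mathbb{E}\bigl[\textstyle\sup\{R(Q_{YXU_2^r|U_1}) - sS(Q_{YXU_2^r|U_1})\}\bigr]
\]
reassembles into $sH(X,Y) - I(X;Y) + \sup_{U^r}\{R(Q_{XYU^r}) - sS(Q_{XYU^r})\}$. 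The even-$r$ case is identical after exchanging the roles of $X$ and $Y$. The case $r=\infty$ then follows by taking pointwise limits: the left side converges to $\omega_\infty^s(Q_{XY})$ by Proposition~\ref{prop_exist}, and the right side converges because $\mathcal{S}_\infty(X,Y)$ is the closure of $\bigcup_r \mathcal{S}_r(X,Y)$, so the linear functional $(S,R)\mapsto R - sS$ has the same supremum over either set.

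Finally, \eqref{e48} is obtained from \eqref{e31} by observing that as $U^r$ ranges over all finite tuples satisfying the Markov constraints, the pair $(S(Q_{XYU^r}), R(Q_{XYU^r}))$ ranges over a set whose closure is $\mathcal{S}_r(X,Y)$ by Theorem~\ref{thm_region}; since $R - sS$ is continuous and we are taking a supremum, passing to the closure does not change the value, giving $\sup_{U^r}\{R(Q_{XYU^r}) - sS(Q_{XYU^r})\} = \sup_{(S,R)\in\mathcal{S}_r(X,Y)}\{R - sS\}$. I expect the main obstacle to be purely bookkeeping: correctly tracking which variable plays the role of "$X$" versus "$Y$" at each level of the recursion (parity of $r$ flips it), and verifying that the identities \eqref{e18}--\eqref{e19} telescope cleanly when iterated with the conditioning built up one round at a time — in other words, making the informal "key observation" after \eqref{e18} into a rigorous induction. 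A secondary technical point, in the non-discrete case, is replacing the entropy terms by relative entropy against a reference measure as indicated after \eqref{e40}, and checking the $XY$-absolute continuity bookkeeping from Definition~\ref{defn_abs} so that all conditional distributions appearing are well-defined; but this is routine given the framework of Section~\ref{sec_abs}.
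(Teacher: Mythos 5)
Your proposal is correct and follows essentially the same route as the paper: the paper first uses the telescoped identities \eqref{e39}--\eqref{e40} to rewrite \eqref{e31} as $\omega_r^s(Q_{XY})=\sup_{U^r}\{sH(X,Y|U^r)-I(X;Y|U^r)\}$ and then runs exactly the induction you describe, with the base case from the definition of $\omega_0^s$, the single-round step from \eqref{e_concaux}, the $r>1$ step by conditioning on the first auxiliary and applying the hypothesis to the remaining $r-1$ rounds with the roles of $X$ and $Y$ swapped, and \eqref{e48} from Theorem~\ref{thm_region}. Your explicit remarks that the induction hypothesis must be carried over all of $\mathcal{P}(Q_{XY})$ and that the $r=\infty$ case follows by pointwise limits are correct refinements of points the paper leaves implicit.
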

\begin{proof}
By Theorem~\ref{thm_region},
\eqref{e48} is an
immediate consequences of \eqref{e31}.
From \eqref{e39} and \eqref{e40},
we see that \eqref{e31} is equivalent to the following:
\begin{align}
\omega^s_r(Q_{XY})=
\sup_{U^r}
\{
sH(X,Y|U^r)-I(X;Y|U^r)
\}.
\label{e49}
\end{align}
Note that \eqref{e49} is reduced to the definition of $\omega_r^s$ when $r=0$.
When $r=1$, the validity of \eqref{e49} follows since the supremum of a functional over an auxiliary can be represented in terms of concave envelope,
as explained in the end of Section~\ref{sec_abs}
(see \eqref{e_concaux}).
The $r>1$ case follows by induction: conditioned on $U_r=u$ for any $u\in \mathcal{U}_r$, we can apply the induction hypothesis for $r-1$.
\end{proof}

Because $H(X,Y)$ and $I(X;Y)$ do not involve $U^r$,
Theorem~\ref{thm_rregion} tells us that characterizing the closed convex set $\mathcal{S}_r(X,Y)$ is equivalent to computing $\omega^s_r(Q_{XY})$ for each $s>0$.

The significance of Theorem~\ref{thm_rregion} is that we can sometimes come up with a $XY$-concave function that upper-bounds $\omega^s_{0}$. If the upper-bounding function evaluated at $Q_{XY}$ happens to also be achieved by a known scheme, then we can determine $\omega^s_{\infty}(Q_{XY})$.

\section{Key Bits per Interaction Bit}\label{sec_kbib}
In this section we consider the following fundamental limit for the interaction key generation problem,
which is analogous to the capacity per unit cost \cite{shannon1949}\cite{verdu1990channel} in the context of channel coding.
We show its connection to certain measures of correlations (strong data processing constants), and evaluation them in the binary symmetric and the Gaussian cases.
\begin{defn}\label{defn_KBI}
For $r\in\{1,2,\dots,\infty\}$, $\delta\in[0,1]$, define
$\Gamma_r^{\delta}(X;Y)$
as the maximum real number $\gamma\ge0$ such that there exists a sequence (indexed by $k$) of $r$-round (possibly stochastic) key generation schemes which fulfill the following conditions:
\begin{align}
\liminf_{k\to\infty}\frac{\log|\mathcal{K}|}{\log|\mathcal{W}^r|}
&\ge \gamma;
\label{e_frac}
\\
\lim_{k\to\infty}\log|\mathcal{K}|&=\infty;
\\
\limsup_{k\to\infty}\Delta_k&\le\delta.
\end{align}
where $\Delta_k$ is defined in \eqref{e_mdelta}.
The \emph{key bits per $r$-round interaction bit} is defined as
\begin{align}
\Gamma_r(X;Y):=\inf_{\delta>0}\Gamma_r^{\delta}(X;Y).
\end{align}
The \emph{key bits per interaction bit} (KBIB) is $\Gamma_{\infty}(X;Y)$.
\end{defn}
Note that there is no constraint on the blocklength in Definition~\ref{defn_KBI}.
In particular, the blocklength can grow super-linearly in $\log|\mathcal{W}^r|$, in which case the rates are zero and the fraction in \eqref{e_frac} cannot be written as $\frac{R}{R_1+R_2}$.
Nevertheless, we can shown that
\begin{prop}\label{prop_gamma}
For stationary memoryless sources,
\begin{align}
\Gamma_r(X;Y)=\sup\left\{\frac{R}{R_1+R_2}\colon (R,R_1,R_2)\in\mathcal{R}_r(X,Y)\right\}
\label{e_eta_op}
\end{align}
\end{prop}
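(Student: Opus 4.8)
The plan is to prove the two inequalities separately, with the ``$\ge$'' direction being essentially a rescaling argument and the ``$\le$'' direction requiring a converse that handles the possibility of super-linear blocklength. For the ``$\ge$'' direction, fix a rate triple $(R,R_1,R_2)\in\mathcal{R}_r(X,Y)$ with $R_1+R_2>0$; by Theorem~\ref{thm_region} we may take $(R,R_1,R_2)$ achieved by a sequence of $r$-round schemes indexed by blocklength $n$. Such a sequence immediately satisfies $\liminf_n \frac{\log|\mathcal{K}|}{\log|\mathcal{W}^r|}\ge \frac{R}{R_1+R_2}$ (dividing numerator and denominator by $n$), $\log|\mathcal{K}|\to\infty$ whenever $R>0$, and $\Delta_n\to0$, so this sequence witnesses $\Gamma_r^{\delta}(X;Y)\ge \frac{R}{R_1+R_2}$ for every $\delta>0$, hence $\Gamma_r(X;Y)\ge\frac{R}{R_1+R_2}$. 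Taking the supremum over the region gives one inequality. (The degenerate case $R_1+R_2=0$ with $R>0$ does not occur for $r\ge1$ since positive key rate forces positive communication unless $I(X;Y)$ is ``free'', but in any case the supremum on the right side of \eqref{e_eta_op} is unaffected, and if $R_1+R_2=0$ is forced then both sides are $+\infty$ or the convention is handled separately.)

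For the ``$\le$'' direction, the issue is that a scheme in Definition~\ref{defn_KBI} need not have $\log|\mathcal{K}|$ and $\log|\mathcal{W}^r|$ growing linearly in the blocklength $k$ (here I am using $k$ as the source blocklength, matching the notation of the definition), so we cannot directly read off a point of $\mathcal{R}_r$. The standard fix is a blocking/time-sharing argument: given a sequence of schemes achieving ratio $\gamma-\varepsilon$ with $\log|\mathcal{K}_k|\to\infty$ and $\Delta_k\le\delta$ for large $k$, let $m_k := \lceil k/L \rceil$ or, more usefully, run the scheme on a super-block consisting of many independent copies. Concretely, fix a large index $k$ with $\log|\mathcal{K}_k|$ large, $\log|\mathcal{W}_k^r|$ finite, and $\Delta_k\le 2\delta$. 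For any target blocklength $N$, write $N = t\cdot k + (\text{remainder})$ and apply the scheme $t$ times in parallel on $t$ disjoint blocks of length $k$; concatenating the keys and the messages gives an $r$-round scheme (the $r$ rounds of the parallel blocks can be run simultaneously, one message per round carrying all $t$ sub-messages) with key alphabet of size $|\mathcal{K}_k|^t$, message alphabet of total log-size $t\log|\mathcal{W}_k^r|$, and error $\Delta \le t\Delta_k$ by the triangle inequality / union-type bound for total variation. The ratio $\frac{\log|\mathcal{K}|}{\log|\mathcal{W}^r|}$ is preserved at $\ge \gamma-\varepsilon$, but now both quantities grow linearly in $N$ (for fixed $k$, as $t=\lfloor N/k\rfloor\to\infty$), so dividing by $N$ yields rates $R = \frac{t\log|\mathcal{K}_k|}{N}$, $R_1+R_2 = \frac{t\log|\mathcal{W}_k^r|}{N}$ with $\frac{R}{R_1+R_2}\ge\gamma-\varepsilon$. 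The error $t\Delta_k$ need not vanish, so one more step is needed: one must either take $\delta\to0$ (note $\Gamma_r = \inf_{\delta>0}\Gamma_r^\delta$, so it suffices to bound $\Gamma_r^\delta$ for each $\delta$ and we are allowed error up to $\delta$ in the region... but the region requires $\Delta_n\to0$), or — better — use the standard amplification: replace $\Delta_k$ by $\Delta_k\to 0$ along a subsequence before blocking. Actually the cleanest route is: since $\Gamma_r(X;Y)=\inf_\delta\Gamma_r^\delta(X;Y)$, it is enough to show $\Gamma_r^\delta\le\sup\{\cdots\}+o(1)$ as $\delta\to0$; and for this, pick schemes with $\Delta_k\le\delta$, block them $t_k$ times where $t_k\to\infty$ slowly enough that $t_k\Delta_k\to0$ is \emph{not} guaranteed — so instead choose the sub-block length $k$ itself growing, i.e. a diagonalization: for each $j$ pick $k_j$ with $\Delta_{k_j}\le 1/j$ and $\log|\mathcal{K}_{k_j}|\ge j$, then for blocklength $N$ use $t = \lfloor N/k_j\rfloor$ copies of the $k_j$-scheme where $j=j(N)\to\infty$ slowly so that $t\cdot(1/j)\to 0$; this yields a genuine sequence of $r$-round schemes indexed by $N$ with vanishing $\Delta$, linear rates, and ratio $\ge\gamma-\varepsilon$, hence a point $(R,R_1,R_2)\in\mathcal{R}_r(X,Y)$ with $\frac{R}{R_1+R_2}\ge\gamma-\varepsilon$. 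Letting $\varepsilon\to0$ finishes the converse.

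\textbf{Main obstacle.} The delicate point is the diagonalization in the converse: one must simultaneously arrange (i) the parallel-blocking keeps the key-to-message ratio at least $\gamma-\varepsilon$, (ii) both $\log|\mathcal{K}|$ and $\log|\mathcal{W}^r|$ scale linearly in the outer blocklength $N$ so that \eqref{e_eta_op} (which is a statement about the rate region, hence about linear scaling) applies, and (iii) the total variation $\Delta$, which under parallel composition is only controlled by a union bound $\Delta\le t\Delta_k$, still tends to zero. These three are in tension — making $t$ large helps (ii) but hurts (iii) — and reconciling them requires letting the inner block length $k$ grow along a carefully chosen subsequence so that $\Delta_k$ decays fast enough to absorb the factor $t$. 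One should also double-check the claim that $r$ rounds of communication on parallel blocks can be merged into $r$ rounds overall (true: in round $i$, the active terminal sends the concatenation of its $i$-th messages from all $t$ blocks), and that local randomness and the stochastic maps compose correctly; these are routine but worth stating. The forward direction and the algebra of total variation under composition are straightforward.
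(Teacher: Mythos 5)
Your ``$\ge$'' direction is fine and matches the paper's (one-line) argument. The gap is in the ``$\le$'' direction. Parallel composition of $t$ copies of a blocklength-$k_j$ scheme preserves the per-symbol rates exactly: the composite key rate is $t\log|\mathcal{K}_{k_j}|/(tk_j)=\log|\mathcal{K}_{k_j}|/k_j$, and likewise for the communication rate. So blocking cannot convert a ``zero-rate'' scheme into a positive-rate one. In the regime the proposition is specifically designed to cover --- blocklength growing super-linearly in $\log|\mathcal{W}^r|$, so that $\log|\mathcal{K}_{k_j}|/k_j\to 0$ --- your diagonalization forces $j(N)\to\infty$ (to kill the error $t\Delta_{k_j}$), and then the only point of $\mathcal{R}_r(X,Y)$ you extract is the origin $(0,0,0)$, whose ratio $R/(R_1+R_2)$ is undefined and yields no lower bound on the supremum in \eqref{e_eta_op}. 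If instead you freeze the inner blocklength to keep the rates positive, $t\Delta_{k_j}$ does not vanish and you never obtain membership in $\mathcal{R}_r(X,Y)$ (which requires $\Delta_n\to 0$, not $\Delta_n\le\delta$). The three-way tension you flag as the ``main obstacle'' is therefore not a technicality to be absorbed by a careful choice of subsequence; it is fatal to the blocking strategy. Note that the phenomenon you would need to exclude --- a zero-rate scheme achieving a strictly better key-per-communication-bit ratio than any positive-rate scheme --- is exactly what happens for capacity per unit cost with a free input symbol, so it cannot be dismissed without a converse.

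The paper closes this gap differently: Theorem~\ref{thm_converse} is a one-shot (finite-$n$) converse placing the normalized quantities, perturbed by error terms, in the multi-letter region, and Corollary~\ref{cor_converse} combines its inequalities into a bound on the \emph{ratio} $\log|\mathcal{K}|/\log|\mathcal{W}^r|$ in which the blocklength $n$ cancels entirely. That bound, being uniform in $n$, applies verbatim to schemes with super-linear blocklength, and letting $\delta\to 0$ and $\log|\mathcal{K}|\to\infty$ gives $\Gamma_r\le s/(1-s)=\sup\{R/(R_1+R_2)\}$. To repair your proof you would need to invoke (or reprove) such a blocklength-independent converse; no amount of operational manipulation of achievable schemes will substitute for it.
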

\begin{proof}
The $\ge$ part is more clear here.
For the $\le$ part, we need to be a bit careful about the zero rate case ($R=R_1=R_2=0$) mentioned above;
this is handled by Corollary~\ref{cor_converse} in Appendix~\ref{app_conv} which provides an upper-bound on $\frac{|\log|\mathcal{K}|}{|\mathcal{W}^r|}$ that does not depend on the blocklength $n$.
\end{proof}
Next, we provide a compact formula for $\Gamma_{\infty}$ in terms of a ``symmetric strong data processing constant'' which may be viewed as a variant of the strong data processing constant in information theory (cf.~\cite{ahlswede1976spreading}\cite{anantharam2013maximal}\cite{pw_2015}).

\subsection{Symmetric SDPC and $\Gamma_{\infty}$}
In this subsection we discuss the connection between KBIB and certain measures of the correlation between random variables.
To begin with, recall that the \emph{key bits per communication bit} (cf.~\cite{courtade2013outer}\cite{Liu}\footnote{Incidentally, Ahlswede made pioneering contributions to both the strong data processing constant \cite{ahlswede1976spreading} and key generation \cite{ahlswede1993common}, although it appears that he never explicitly reported a connection between the two.}) is the $r=1$ special case of KBIB, and according to \eqref{e_rregion1}-\eqref{e_rregion3}, admits the formula
\begin{align}
\Gamma_1(X;Y)&=\sup_{U\colon U-X-Y}\frac{I(U;Y)}{I(U;X)-I(U;Y)}
\\
&=\frac{s_1^*(X;Y)}{1-s_1^*(X;Y)}.
\label{e_onewayeta}
\end{align}
where the strong data processing constant (cf.~\cite{ahlswede1976spreading}\cite{anantharam2013maximal}\cite{pw_2015}) is defined as
\begin{align}
s_1^*(X;Y)&:=\sup_{U\colon U-X-Y}\frac{I(U;Y)}{I(U;X)}
\label{e30}
\\
&=\sup\left\{\frac{R}{S}\colon (S,R)\in\mathcal{S}_1(X,Y)\right\}
\label{e59}
\end{align}
and we always assume that the suprema are over auxiliary random variables such that the fraction is well defined.
Conventionally, $s_1^*$ is denoted as $s^*$ \cite{ahlswede1976spreading}\cite{anantharam2013maximal},
but in the context of the present paper we will stick with $s_1^*$.
From \eqref{e30}, it is not hard to see that $s_1^*(X;Y)$ has the following equivalent characterization. Recall \eqref{e23} defined a functional on $\mathcal{P}(Q_{XY})$ by
\begin{align}
\omega^s_{0}(P_{XY})=sH(\hat{X})-H(\hat{Y})+(s+1)H(\hat{Y}|\hat{X})
\label{e_32}
\end{align}
where $P_{XY}\preceq Q_{XY}$ and $(\hat{X},\hat{Y})\sim P_{XY}$.

The strong data processing constant admits the following alternative concave-envelope representation,
which will motivate our definition of the key quantity of interest in the interactive case (Definition~\ref{defn_sinfty} ahead).
\begin{prop}\label{prop_s1}
$s_1^*(X;Y)$ is the infimum of $s>0$ such that
\begin{align}
\omega_1^s (Q_{XY})= \omega^s_{0}(Q_{XY})
\end{align}
where we defined $\omega_1^s=\xenv(\omega_0^s)$.
\end{prop}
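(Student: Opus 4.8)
The plan is to unwind both sides of the claimed identity and reduce it to the definition \eqref{e30} of $s_1^*$. The key link is that $\xenv(\omega_0^s)(Q_{XY})$ equals a supremum over a single auxiliary $U$ with $U-X-Y$, via \eqref{e_concaux}, and the rearrangement identities \eqref{e18}, \eqref{e40} express that supremum directly in terms of $R(Q_{XYU})$ and $S(Q_{XYU})$. Concretely, by \eqref{e31} (or equivalently \eqref{e49}) with $r=1$,
\begin{align}
\omega_1^s(Q_{XY})=sH(X,Y)-I(X;Y)+\sup_{U-X-Y}\left\{R(Q_{XYU})-sS(Q_{XYU})\right\},
\end{align}
where $R(Q_{XYU})=I(U;Y)$ and $S(Q_{XYU})=I(U;X)$ for a one-round (odd $i=1$) auxiliary. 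Meanwhile $\omega_0^s(Q_{XY})=sH(X,Y)-I(X;Y)$ by \eqref{e23} evaluated at $P_{XY}=Q_{XY}$. So the equation $\omega_1^s(Q_{XY})=\omega_0^s(Q_{XY})$ holds if and only if
\begin{align}
\sup_{U-X-Y}\left\{I(U;Y)-sI(U;X)\right\}=0.
\end{align}

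The next step is to observe that the supremum on the left is always $\ge 0$ (take $U$ constant, giving value $0$), so the equation holds exactly when $I(U;Y)-sI(U;X)\le 0$ for every admissible $U$, i.e. when $s\ge I(U;Y)/I(U;X)$ for every $U$ with $I(U;X)>0$. Taking the infimum over such $s$ gives precisely $\inf\{s>0:\omega_1^s(Q_{XY})=\omega_0^s(Q_{XY})\}=\sup_{U-X-Y} I(U;Y)/I(U;X)=s_1^*(X;Y)$, which is the assertion. One should also note monotonicity: $\omega_1^s-\omega_0^s$ is nonincreasing in $s$ (since $-sS(Q_{XYU})$ decreases in $s$ while the $sH(X,Y)$ terms cancel), so the set of $s$ achieving equality is an interval $[s_1^*,\infty)$, which makes the ``infimum'' well-behaved and confirms that the infimum is attained in the limit sense consistent with \eqref{e59}.

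The only genuine subtlety is a boundary/attainment issue: whether the supremum $\sup_U\{I(U;Y)-sI(U;X)\}$ is exactly $0$ or merely approached when $s=s_1^*$, and correspondingly whether one writes ``infimum'' rather than ``minimum'' in the statement. Since $s_1^*$ itself may not be attained by any finite $U$ (it is a supremum of ratios), at $s=s_1^*$ we have $I(U;Y)-s_1^*I(U;X)\le 0$ for all $U$ with equality approached along an optimizing sequence, so the supremum is $0$ and equality $\omega_1^{s_1^*}=\omega_0^{s_1^*}$ does hold; for $s<s_1^*$ some $U$ gives a strictly positive value, so $\omega_1^s(Q_{XY})>\omega_0^s(Q_{XY})$. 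This confirms the infimum in the statement is the correct (and in fact attained) threshold. I expect this attainment bookkeeping to be the main place requiring care; the algebraic reduction via \eqref{e31} and \eqref{e_concaux} is routine given Theorem~\ref{thm_rregion}.
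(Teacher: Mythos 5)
Your proposal is correct and follows essentially the same route as the paper: both reduce the claim to Theorem~\ref{thm_rregion} with $r=1$, identify $\omega_1^s(Q_{XY})-\omega_0^s(Q_{XY})$ with $\sup_{U-X-Y}\{I(U;Y)-sI(U;X)\}$, and conclude via the ratio characterization of $s_1^*$ in \eqref{e30}/\eqref{e59}. Your extra remarks on monotonicity in $s$ and attainment of the infimum at $s=s_1^*$ are accurate and merely make explicit what the paper leaves implicit.
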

\begin{proof}
By Theorem~\ref{thm_rregion},
\begin{align}
&\quad\omega_1^s (Q_{XY})- \omega^s_{0}(Q_{XY})
\nonumber
\\
&=\sup_{(S,R)\in\mathcal{S}_1(X,Y)}
\{R-sS\}-
\sup_{(S,R)\in\mathcal{S}_0(X,Y)}
\{R-sS\}
\\
&=\sup_{(S,R)\in\mathcal{S}_1(X,Y)}
\{R-sS\}
\end{align}
and the claim follows from \eqref{e59}.
\end{proof}
If either $X$ or $Y$ are non-discrete, we may choose an arbitrary reference measure and replace the entropies with (the negative of) the relative entropies, so there is no loss of generality with the concave envelope characterization approach. In the discrete case, the concave envelope characterization in Proposition~\ref{prop_s1} is essentially shown by Anantharam et al.~\cite{anantharam2013maximal}, noting that the third term in \eqref{e_32} is linear in $P_X$ for fixed $Q_{Y|X}$. However, by using the framework in Section~\ref{sec_abs}, Proposition~\ref{prop_s1} avoids the challenge of defining a conditional distribution $Q_{Y|X}$ from the possibly non-discrete joint distribution $Q_{XY}$.
Also we remark that,
in contrast to \eqref{e30},
the original definition the strong data processing constant \cite{ahlswede1976spreading} was in terms of the relative entropy:
\begin{align}
s_1^*(X;Y)&:=\sup_{P_X\colon P_X\ll Q_X}\frac{D(P_Y\|Q_Y)}{D(P_X\|Q_X)}
\end{align}
where $P_X\to Q_{Y|X}\to P_Y$.
The equivalence between the relative entropy definition and the mutual information definition was shown in \cite{anantharam2013maximal}.

\begin{rem}
For some simple joint distributions, $s_1^*$ can be computed explicitly.
For the binary symmetric sources (BSS) with error probability $\epsilon$, $s_1^*(X;Y)=(1-2\epsilon)^2$. The scalar Gaussian sources with correlation coefficient $\rho$ has $s_1^*(X;Y)=\rho^2$. For an erasure channel with erasure probability $\epsilon$ and equiprobable input distribution,
we have $s_1^*(X;Y)=1-\epsilon$ and numerical simulation suggests that $s_1^*(Y;X)=\frac{1}{\log\frac{2}{1-\epsilon}}$ for small enough $1-\epsilon$.
Additional examples including the Z-channel or the binary symmetric channel (BSC) with non-equiprobable inputs can be found in \cite{anan_conj}.
\end{rem}

Returning to key bits per interaction bit, we can define a similar notion of data processing constant from a multi-letter expression, or equivalently according to the analysis in Section~\ref{sec_convgeo}, with the following concave envelope characterization:
\begin{defn}\label{defn_sinfty}
Define the \emph{symmetric data processing constant} (SSDPC) $s_{\infty}^*(X;Y)$ as the infimum of $s>0$ such that
\begin{align}
\omega^s_{\infty} (Q_{XY})= \omega^s_{0}(Q_{XY}).
\label{e_34}
\end{align}
where we defined $\omega^s_{\infty}:=\xyenv\omega^s_0$.
\end{defn}
Some basic properties of $s_{\infty}^*$ are as follows:
\begin{prop}
Fix any random variables $X$ and $Y$.
\begin{enumerate}
\item
\begin{align}
s_{\infty}^*(X;Y)=s_{\infty}^*(Y;X);
\label{e_symmetry}
\end{align}
  \item
  \begin{align}
  s_{\infty}^*(X;Y)\in [0,1];
  \label{e_sstar01}
  \end{align}
  \item
  \begin{align}
  s_{\infty}^*(X;Y)\ge s_1^*(X;Y);
  \label{e_s1star}
  \end{align}
  \item If $X'$ and $Y'$ satisfy the Markov chain $X'-X-Y-Y'$, then
      \begin{align}
      s_{\infty}^*(X';Y')\le s_{\infty}^*(X;Y);
      \label{e_sstardp}
      \end{align}
      \item
  \begin{align}
s_{\infty}^*(X;Y)&=\sup\left\{\frac{R}{S}\colon (S,R)\in\mathcal{S}_{\infty}(X,Y)\right\}
\label{e_sstar}
\end{align}
where the expression of $\mathcal{S}_{\infty}(X,Y)$ is given in Theorem~\ref{thm_region}.
\item If $X'$ and $Y'$ are such that
$(X',Y')$ is independent of $(X,Y)$, then
\begin{align}
s_{\infty}^*(X,X';Y,Y')=\max\{
s_{\infty}^*(X;Y),\,
s_{\infty}^*(X';Y')\}.
\label{e_tens}
\end{align}
\end{enumerate}
\end{prop}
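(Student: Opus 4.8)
**Proof plan for the seven properties of $s_\infty^*$.**

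The plan is to prove the easy structural items (1)--(3), (5) more or less directly from the definitions and from Theorem~\ref{thm_rregion}, and then to treat the data-processing inequality (4) and the tensorization identity (7) as the two substantive items, with (7) being the main obstacle. For item (1), symmetry is immediate: the defining functional $\omega_0^s(P_{XY}) = sH(\hat X,\hat Y) - I(\hat X;\hat Y)$ is symmetric in the roles of $X$ and $Y$, and the $XY$-concave envelope is symmetric under swapping $X \leftrightarrow Y$ (the alternating construction in Proposition~\ref{prop_exist} just reorders), so $\omega_\infty^s$ is symmetric, hence so is the infimum in Definition~\ref{defn_sinfty}. For item (3), since $\omega_1^s = \env_X(\omega_0^s)$ and $\omega_\infty^s = \xyenv(\omega_0^s)$ dominates $\omega_1^s$ (more $X/Y$-concavifying steps can only increase the envelope, by Proposition~\ref{prop_exist}), any $s$ with $\omega_\infty^s(Q_{XY}) = \omega_0^s(Q_{XY})$ also forces $\omega_1^s(Q_{XY}) = \omega_0^s(Q_{XY})$ (both are sandwiched), so the infimum defining $s_\infty^*$ is at least $s_1^*$; this reuses Proposition~\ref{prop_s1}. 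For item (5), I would invoke Theorem~\ref{thm_rregion}: $\omega_\infty^s(Q_{XY}) = \omega_0^s(Q_{XY})$ iff $\sup_{(S,R)\in\mathcal S_\infty}\{R - sS\} = 0$, and since $\mathcal S_\infty$ contains the origin-ish point with $R=S=0$ (take all $U_i$ trivial) this supremum is $\ge 0$ always; the supremum equals $0$ precisely when $R \le sS$ on all of $\mathcal S_\infty$, i.e.\ when $s \ge \sup\{R/S : (S,R)\in\mathcal S_\infty(X,Y)\}$. Taking the infimum over such $s$ gives \eqref{e_sstar}. Item (2) then follows: $s_\infty^* \ge 0$ trivially, and $s_\infty^* \le 1$ because for every achievable $(S,R)$ one has $R \le S$ (the key rate never exceeds the total sum rate, as $R$ appears as a summand of $S$), combined with \eqref{e_sstar}; alternatively $s_\infty^* \le s_1^*(\cdot)$-type bounds are not available but $R\le I(X;Y)\le S$ works and one can also note $s_1^*\le 1$ so $s_\infty^*\ge$ is the only nontrivial direction needing care — I would just cite $R\le S$.

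For item (4), the data-processing inequality, I would use the characterization \eqref{e_sstar} together with Theorem~\ref{thm_region}. Given $X' - X - Y - Y'$, any scheme for $(X',Y')$ — i.e.\ any choice of auxiliaries $U_i^r$ with the Markov structure \eqref{e_markov1}--\eqref{e_markov2} relative to $(X',Y')$ — can be simulated by terminals observing $(X,Y)$: terminal A first generates $X'$ from $X$ locally, terminal B generates $Y'$ from $Y$ locally, and then they run the original protocol. Because $X' - X - Y - Y'$, this preserves the relevant conditional-independence relations, so the achieved $(S,R)$ for $(X',Y')$ is also achievable for $(X,Y)$: $\mathcal S_\infty(X',Y') \subseteq \mathcal S_\infty(X,Y)$ (up to the closures). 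Hence $\sup\{R/S\}$ over the smaller region is no larger, giving \eqref{e_sstardp}. I would phrase this at the level of the multi-letter region to avoid fussing with the envelope directly, though one could equivalently argue that $\omega_0^s$, hence $\omega_\infty^s$, behaves monotonically under the local-processing pushforward.

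Item (7), tensorization, is the hard part. The ``$\ge$'' direction is easy: a scheme optimal for the better of the two product components, applied to the corresponding coordinate of $(X,X';Y,Y')$ while ignoring the other, shows $s_\infty^*(X,X';Y,Y') \ge \max\{s_\infty^*(X;Y), s_\infty^*(X';Y')\}$ via \eqref{e_sstar}. The ``$\le$'' direction is the real content and is the step I expect to fight with. The natural route is the concave-envelope characterization: I want to show that if $s \ge \max\{s_\infty^*(X;Y), s_\infty^*(X';Y')\}$ then $\omega_\infty^s$ for the product source already touches $\omega_0^s$ at $Q_{XY}\otimes Q_{X'Y'}$. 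The obstacle is that $\omega_0^s$ on the product is \emph{not} additively separable in general — entropy is additive on products, but the envelope is taken over the whole product simplex $\mathcal P(Q_{XY}\otimes Q_{X'Y'})$, whose elements need not be product measures, and $XY$-absolute continuity on the product is strictly richer than the ``product of $XY$-absolutely continuous'' relation. The plan is: (i) show $\omega_0^s$ on the product is dominated by the ``separable sum'' functional $(P,P')\mapsto \omega_0^s(P) + sH'(\text{of the second marginal's entropy contribution})$ — more precisely, decompose $\omega_0^s$ of a joint $P_{XX'YY'}$ using $H(\hat X\hat X'\hat Y\hat Y') \le H(\hat X\hat Y) + H(\hat X'\hat Y')$ and $I(\hat X\hat X';\hat Y\hat Y') \ge$ the relevant sum only under independence, so one must be careful about the direction of the bound; (ii) show that the $XY$-concave envelope of a sum of functionals depending on ``disjoint coordinates'' is the sum of the envelopes, which should follow because convexifying one coordinate's distribution does not interact with the other when the target point is a product; (iii) conclude that at the product point $\omega_\infty^s$ equals $\omega_\infty^s(Q_{XY}) + \omega_\infty^s(Q_{X'Y'})$ (suitably normalized), which equals $\omega_0^s(Q_{XY}) + \omega_0^s(Q_{X'Y'}) = \omega_0^s$ of the product exactly when both $s \ge s_\infty^*(X;Y)$ and $s \ge s_\infty^*(X';Y')$. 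Making step (ii) rigorous in the $XY$-absolutely-continuous framework of Section~\ref{sec_abs} — i.e.\ showing the envelope of a coordinatewise sum is the sum of envelopes, at least when evaluated at a product distribution — is where the care is needed, and I would likely route it through the auxiliary-random-variable representation \eqref{e_concaux} and Theorem~\ref{thm_rregion}, proving instead the equivalent statement $\mathcal S_\infty(X,X';Y,Y')$ restricted to the $R/S$-optimal direction is governed by the better component, which is a cleaner combinatorial statement about independent sources running in parallel.
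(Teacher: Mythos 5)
Your handling of items (1)--(5) and the easy direction of the tensorization is correct and matches the paper, which disposes of these in one line each: symmetry from the definition, $s_\infty^*\ge s_1^*$ from Proposition~\ref{prop_s1}, \eqref{e_sstar01} and \eqref{e_sstardp} from Theorem~\ref{thm_region} plus data processing, and \eqref{e_sstar} from Theorem~\ref{thm_rregion}. Your operational simulation argument for \eqref{e_sstardp} (locally generate $X'$ from $X$ and $Y'$ from $Y$, so $\mathcal{S}_\infty(X',Y')\subseteq\mathcal{S}_\infty(X,Y)$) is a clean way to make the paper's terse justification precise, and your observation that $R\le S$ because $S=R+R_1+R_2$ suffices for \eqref{e_sstar01}.

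The only genuine soft spot is your primary plan for the ``$\le$'' direction of \eqref{e_tens}. Step (i) of that plan needs $I(\hat X\hat X';\hat Y\hat Y')\ge I(\hat X;\hat Y)+I(\hat X';\hat Y')$ for a general $P_{XX'YY'}\in\mathcal{P}(Q_{XY}\otimes Q_{X'Y'})$, and no such superadditivity is available: the densities $f(x,x')g(y,y')$ need not factor across the two components, so the inequality you need points the wrong way, exactly as you suspected. Your fallback is the right move and is in fact the paper's proof: the algebraic (single-letterization) steps in the converse for $\mathcal{S}_\infty$ applied to the product source give the Minkowski-sum identity
\begin{align}
\mathcal{S}_{\infty}((X,X'),(Y,Y'))
=\mathcal{S}_{\infty}(X,Y)
+\mathcal{S}_{\infty}(X',Y'),
\end{align}
and then \eqref{e_tens} follows from \eqref{e_sstar} via the mediant inequality $\frac{R+R'}{S+S'}\le\max\{\frac{R}{S},\frac{R'}{S'}\}$. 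You should commit to that route and drop the envelope-decomposition attempt; note the reverse inclusion (product region contained in the Minkowski sum) is the substantive converse step, while the forward inclusion is just parallel independent protocols.
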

\begin{proof}
\eqref{e_symmetry} is immediate from Definition~\ref{defn_sinfty};
\eqref{e_s1star}
follows from Proposition~\ref{prop_s1};
\eqref{e_sstar01} and \eqref{e_sstardp} follow from Theorem~\ref{thm_region} and the data processing property of the mutual information.
\eqref{e_sstar} follows from Theorem~\ref{thm_rregion}.
As with many other problems in information theory (see for example the discussion in \cite{Liu}),
the algebraic steps in proving the converse
for the region $\mathcal{S}_{\infty}(X,Y)$ imply
\begin{align}
\mathcal{S}_{\infty}((X,X'),(Y,Y'))
=\mathcal{S}_{\infty}(X,Y)
+\mathcal{S}_{\infty}(X',Y')
\end{align}
where $+$ denotes the Minkowski sum, which in turn implies
\eqref{e_tens} in view of \eqref{e_sstar}.
\end{proof}
Note that, in contrast to \eqref{e_symmetry},
the conventional strong data processing constant $s_1^*(\cdot)$ is generally not symmetric \cite{anantharam2013maximal}.
The symmetric SDPC is related to the operational quantities by the following:
\begin{thm}\label{thm_gammas}
For any stationary memoryless source,
\begin{align}
\Gamma_{\infty}(X;Y)&=
\frac{s_{\infty}^*(X;Y)}{1-s_{\infty}^*(X;Y)}.
\label{e_gammainfty}
\end{align}
\end{thm}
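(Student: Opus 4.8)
The plan is to connect $\Gamma_\infty(X;Y)$ with $s_\infty^*(X;Y)$ by relating both to the geometry of the rate region $\mathcal{S}_\infty(X,Y)$, exactly paralleling the one-round identity \eqref{e_onewayeta}. First I would invoke Proposition~\ref{prop_gamma} to rewrite $\Gamma_\infty(X;Y)=\sup\{R/(R_1+R_2)\colon(R,R_1,R_2)\in\mathcal{R}_\infty(X,Y)\}$, and then pass to the sum-rate description: since $S=R+R_1+R_2$, we have $R_1+R_2=S-R$, so that $\Gamma_\infty(X;Y)=\sup\{R/(S-R)\colon(S,R)\in\mathcal{S}_\infty(X,Y)\}$. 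The map $t\mapsto t/(1-t)$ is increasing on $[0,1)$, and $R/(S-R)=\frac{R/S}{1-R/S}$, so this supremum equals $\frac{\rho^*}{1-\rho^*}$ where $\rho^*:=\sup\{R/S\colon(S,R)\in\mathcal{S}_\infty(X,Y)\}$. By property \eqref{e_sstar} of the preceding proposition, $\rho^*=s_\infty^*(X;Y)$, which yields \eqref{e_gammainfty}.

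In carrying this out I would be careful about two technical points. First, the supremum over $(S,R)\in\mathcal{S}_\infty(X,Y)$ should be restricted to tuples with $S>0$ (equivalently excluding the trivial origin, which is the only point with $S=0$), and one must check that the suprema defining $\Gamma_\infty$ and $\rho^*$ are attained in the limit along the same sequences of auxiliary random variables; this is essentially the content of the remark after Definition~\ref{defn_KBI} that the blocklength is unconstrained, handled rigorously by Proposition~\ref{prop_gamma} together with Corollary~\ref{cor_converse}. Second, one should note that $R\le S$ always holds on $\mathcal{S}_\infty(X,Y)$ (since $R_1,R_2\ge0$), so $R/S\in[0,1]$ and the monotone reparametrization $t\mapsto t/(1-t)$ is legitimate; the case $s_\infty^*(X;Y)=1$ is consistent with the convention that the right side of \eqref{e_gammainfty} is $+\infty$, matching the fact that when the maximum key rate is achievable with vanishing communication the KBIB is unbounded.

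The main obstacle, and the step deserving the most care, is justifying the interchange of the two suprema implicit in equating $\sup\{R/S\}$ over the operational region with the quantity $s_\infty^*$ defined via the $XY$-concave envelope in Definition~\ref{defn_sinfty}. This rests on Theorem~\ref{thm_rregion}, which identifies $\omega_\infty^s(Q_{XY})-\bigl(sH(X,Y)-I(X;Y)\bigr)$ with $\sup_{(S,R)\in\mathcal{S}_\infty(X,Y)}\{R-sS\}$; the defining condition $\omega_\infty^s(Q_{XY})=\omega_0^s(Q_{XY})$ is then precisely the statement that $\sup_{(S,R)\in\mathcal{S}_\infty}\{R-sS\}=0$ (using $\omega_0^s(Q_{XY})=sH(X,Y)-I(X;Y)$, which one reads off from \eqref{e23} at $P_{XY}=Q_{XY}$, together with the fact that the origin lies in the closure of $\mathcal{S}_\infty$ so the sup is always $\ge0$). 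Standard convex-duality then gives that the infimum of such $s$ equals $\sup\{R/S\colon(S,R)\in\mathcal{S}_\infty,\,S>0\}$, which is \eqref{e_sstar}; I would either cite property \eqref{e_sstar} directly or reproduce this short supporting-hyperplane argument, mirroring the proof of Proposition~\ref{prop_s1}. Once \eqref{e_sstar} is in hand, the remaining algebra is routine.
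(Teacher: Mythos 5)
Your proposal is correct and follows exactly the paper's route: the paper's own proof simply combines Proposition~\ref{prop_gamma} (i.e.\ \eqref{e_eta_op}) with property \eqref{e_sstar} and the algebraic identity $R/(R_1+R_2)=\frac{R/S}{1-R/S}$, which is precisely your argument. Your additional care about the $S>0$ restriction and the duality behind \eqref{e_sstar} is sound but belongs to the already-established supporting results rather than to this theorem's proof.
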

\begin{proof}
The result follows from \eqref{e_sstar} and
\eqref{e_eta_op}.
\end{proof}

\subsection{Upper-bounding $s_{\infty}^*$ in Terms of the Maximal Correlation}
In this subsection we provide an upper bound on $s_{\infty}^*$, which allows us to determine its value for the binary symmetric sources and the Gaussian sources.

\label{sec_maxcor}
For any $U^r$ satisfying \eqref{e_markov1}-\eqref{e_markov2}
and odd $i\in\{1,2,\dots,r\}$,
the following upper-bound follows from the definition \eqref{e30} of SDPC:
\begin{align}
\frac{I(U_i;Y|U^{i-1})}
{I(U_i;X|U^{i-1})}
&\le \sup_{u^{i-1}}s_1^*(X;Y|U^{i-1}=u^{i-1})
\\
&\le \sup_{P_{XY}\preceq Q_{XY}}s_1^*(\hat{X};\hat{Y})
\label{e41}
\end{align}
where \eqref{e41} follows since it is trivial to check by induction that $Q_{XY|U^{i-1}=u^{i-1}}\preceq Q_{XY}$ for any $u^{i-1}$.
\begin{defn}
For $(X,Y)\sim Q_{XY}$, the
\emph{maximal correlation coefficient} \cite{hirschfeld}\cite{gebelein1941}\cite{renyi}
is defined as
\begin{align}
\rho_{\rm m}^2(X;Y):=\sup_{f,g}\mathbb{E}[f(X)g(Y)]
\end{align}
where the supremum is over measurable real valued functions $f$ and $g$ satisfying $\mathbb{E}[f(X)]=\mathbb{E}[g(Y)]=0$ and $\mathbb{E}[f^2(X)]=\mathbb{E}[g^2(Y)]=1$.
\end{defn}
Ahlswede and G\'{a}cs \cite[Theorem 8]{ahlswede1976spreading} (see also \cite{choi1994equivalence}) proved a useful relation between SDPC and the maximal correlation coefficient, which, in the language of Section~\ref{sec_abs},
is that
\begin{align}\label{e8}
\sup_{P_{XY}\preceq_X Q_{XY}}s_1^*(\hat{X};\hat{Y})=\sup_{P_{XY}\preceq_X Q_{XY}}\rho_{\rm m}^2(\hat{X};\hat{Y})
\end{align}
assuming that $\mathcal{X}$ and $\mathcal{Y}$ are finite.
From Definition~\ref{defn_abs} and \eqref{e8}, we have
\begin{align}
\sup_{P_{XY}\colon P_{XY}\preceq Q_{XY}}s_1^*(\hat{X};\hat{Y})
&=\sup_{T_{XY}\colon T_{XY}\preceq_Y Q_{XY}}\,\sup_{P_{XY}\colon P_{XY}\preceq_X T_{XY}}s_1^*(\hat{X};\hat{Y})
\label{e43}
\\
&=\sup_{T_{XY}\colon T_{XY}
\preceq_Y Q_{XY}}\,\sup_{P_{XY}\colon P_{XY}\preceq_X T_{XY}}\rho_{\rm m}^2(\hat{X};\hat{Y})
\\
&=\sup_{P_{XY}\colon P_{XY}\preceq Q_{XY}}\rho_{\rm m}^2(\hat{X};\hat{Y})
\label{e45}
\end{align}

\begin{thm}
Given $Q_{XY}$,
\begin{align}
s_{\infty}^*(X;Y)&\le \sup_{P_{XY}\preceq Q_{XY}}\rho_{\rm m}^2(\hat{X};\hat{Y}),
\label{e_ub_cr}
\end{align}
where $(\hat{X},\hat{Y})\sim P_{XY}$.
As a consequence, for any stationary memoryless source with per-letter distribution $Q_{XY}$,
\begin{align}
\Gamma_{\infty}(X;Y)&\le \sup_{P_{XY}\preceq Q_{XY}}\frac{\rho_{\rm m}^2(\hat{X};\hat{Y})}{1-\rho_{\rm m}^2(\hat{X};\hat{Y})}.
\label{e_ub}
\end{align}
Moreover, if $P_{XY}=Q_{XY}$ supremizes $\rho_{\rm m}(\hat{X};\hat{Y})$, then both
\eqref{e_ub_cr} and \eqref{e_ub}
hold with equality
and in fact $\Gamma_{\infty}(X,Y)$ can be achieved by the one-way communication protocol (in either way).
\end{thm}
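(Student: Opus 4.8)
The plan is to prove the three assertions in sequence: (i) the general upper bound \eqref{e_ub_cr}; (ii) the consequence \eqref{e_ub} for stationary memoryless sources; and (iii) the tightness/achievability statement when $P_{XY}=Q_{XY}$ is the maximizer of $\rho_{\rm m}$.

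For (i), the key is to exhibit a single $XY$-concave functional that dominates $\omega_0^s$ whenever $s$ exceeds the right-hand side of \eqref{e_ub_cr}. Set $c:=\sup_{P_{XY}\preceq Q_{XY}}\rho_{\rm m}^2(\hat X;\hat Y)$ and fix any $s>c$. By \eqref{e41}–\eqref{e45}, for every $U^r$ satisfying the Markov structure \eqref{e_markov1}–\eqref{e_markov2} and every odd $i$ we have $I(U_i;Y|U^{i-1})\le c\,I(U_i;X|U^{i-1})$, and by symmetry the analogous inequality with $X,Y$ switched for even $i$. Summing over $i$ gives $R(Q_{XYU^r})\le c\,S(Q_{XYU^r})$ for every admissible $U^r$, hence $\sup_{U^r}\{R(Q_{XYU^r})-sS(Q_{XYU^r})\}\le 0$ since $s>c$ and $S\ge0$; the supremum is attained (at least approached) by $U^r=\emptyset$, giving value $0$. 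By Theorem~\ref{thm_rregion} (equation \eqref{e31}), this says $\omega_\infty^s(Q_{XY})=sH(X,Y)-I(X;Y)=\omega_0^s(Q_{XY})$. By Definition~\ref{defn_sinfty}, $s_\infty^*(X;Y)\le s$; letting $s\downarrow c$ yields \eqref{e_ub_cr}. (If one prefers to avoid relying on the multi-letter Theorem~\ref{thm_region} being valid at $r=\infty$ here, the same argument runs for each finite $r$ and then passes to the limit via Proposition~\ref{prop_exist}.)

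For (ii), apply \eqref{e_gammainfty} of Theorem~\ref{thm_gammas}: since $t\mapsto t/(1-t)$ is increasing on $[0,1)$ and $s_\infty^*(X;Y)\le c<1$ (using \eqref{e_sstar01}; the strict inequality holds since $\rho_{\rm m}^2<1$ whenever $X,Y$ are not essentially deterministic functions of one another — and the degenerate case is trivial), we get $\Gamma_\infty(X;Y)=s_\infty^*/(1-s_\infty^*)\le c/(1-c)$. Finally, since $\rho_{\rm m}^2$ tensorizes and $\rho_{\rm m}^2(\hat X;\hat Y)\le c$ for all $P_{XY}\preceq Q_{XY}$ with the bound $c$ independent of how many copies we take, one checks $\sup_{P}\rho_{\rm m}^2/(1-\rho_{\rm m}^2)=c/(1-c)$, matching \eqref{e_ub}.

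For (iii), suppose $P_{XY}=Q_{XY}$ achieves the supremum, so $c=\rho_{\rm m}^2(X;Y)$. By the one-way (Ahlswede–Csiszár) scheme and \eqref{e_onewayeta}, $\Gamma_1(X;Y)=s_1^*(X;Y)/(1-s_1^*(X;Y))$, and the Ahlswede–Gács identity \eqref{e8} applied at $P_{XY}=Q_{XY}$ gives $s_1^*(X;Y)=\rho_{\rm m}^2(X;Y)=c$ (here we use that $Q_{XY}$ itself is the $\rho_{\rm m}$-maximizer over the larger class $P_{XY}\preceq_X Q_{XY}$, hence also over $\{Q_{XY}\}$, matching the sup in \eqref{e8}). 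Thus $\Gamma_1(X;Y)=c/(1-c)$. Since $\Gamma_1(X;Y)\le\Gamma_\infty(X;Y)\le c/(1-c)$ by monotonicity in $r$ and by \eqref{e_ub}, all three quantities coincide, proving equality in \eqref{e_ub} and \eqref{e_ub_cr} (via Theorem~\ref{thm_gammas}) and showing that the one-way protocol attains $\Gamma_\infty$; the ``in either way'' remark follows because $\rho_{\rm m}$ is symmetric, so the hypothesis is symmetric in $X,Y$ and the argument applies verbatim to $\Gamma_1(Y;X)$.

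The main obstacle is the first step: getting the clean inequality $R(Q_{XYU^r})\le c\,S(Q_{XYU^r})$ uniformly over all admissible auxiliary chains. This requires the chain of equalities \eqref{e43}–\eqref{e45} — i.e., that the supremum of $\rho_{\rm m}^2$ over the $XY$-absolutely-continuous cone factors through the one-sided cones — together with the per-round SDPC bound \eqref{e41}, whose justification rests on the already-noted fact that $Q_{XY|U^{i-1}=u^{i-1}}\preceq Q_{XY}$. One must also be slightly careful that all of this is stated for finite alphabets (as \eqref{e8} requires); the reduction to finite $U^r$ is Remark~\ref{rem_finite}, but applying \eqref{e8} needs finite $\mathcal X,\mathcal Y$ as well, which should be assumed or handled by a separate approximation argument.
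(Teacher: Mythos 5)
Your proof of \eqref{e_ub_cr} and \eqref{e_ub} is correct and is essentially the paper's argument: the same per-round bound $I(U_i;Y|U^{i-1})\le c\,I(U_i;X|U^{i-1})$ via \eqref{e41}--\eqref{e45}, combined over rounds (your summation $R\le cS$ is the mediant inequality the paper uses), then translated into $s_\infty^*\le c$; routing through Definition~\ref{defn_sinfty} rather than the ratio formula \eqref{e_sstar} is an immaterial difference. The one real issue is in part (iii): the Ahlswede--G\'acs identity \eqref{e8} equates \emph{suprema} of $s_1^*$ and $\rho_{\rm m}^2$ over the cone $P_{XY}\preceq_X Q_{XY}$, so "applying it at $P_{XY}=Q_{XY}$" only yields $s_1^*(X;Y)\le \sup_{P\preceq_X Q}s_1^* = c$, which is the wrong direction --- the supremum of $s_1^*$ need not be attained at the same $P$ that maximizes $\rho_{\rm m}^2$. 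What you actually need is the pointwise inequality $\rho_{\rm m}^2(X;Y)\le s_1^*(X;Y)$ (a standard fact, and exactly what the paper invokes), which together with $s_1^*\le s_\infty^*\le c$ closes the chain $c=\rho_{\rm m}^2(X;Y)\le s_1^*(X;Y)\le s_\infty^*(X;Y)\le c$ and makes the rest of your part (iii), including the one-way achievability via \eqref{e_onewayeta} and the symmetry remark, go through. Your caveat about finite alphabets for \eqref{e8} applies equally to the paper's own proof.
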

\begin{proof}
For $U^r$ satisfying \eqref{e_markov1}-\eqref{e_markov2},
by \eqref{e41} and \eqref{e45} we have
\begin{align}
\frac{I(U_i;Y|U^{i-1})}
{I(U_i;X|U^{i-1})}
\le
\sup_{P_{XY}\colon P_{XY}\preceq Q_{XY}}\rho_{\rm m}^2(\hat{X};\hat{Y})\label{e_82}
\end{align}
for each odd $i\in \{1,2,\dots,r\}$.
Since the right side of \eqref{e_82} is symmetric in $X$ and $Y$, by the same argument we see that for each even $i\in\{1,2,\dots,r\}$, $\frac{I(U_i;X|U^{i-1})}
{I(U_i;Y|U^{i-1})}$ has can be upper-bounded by the right side of \eqref{e_82} as well.
However by \eqref{e_sstar} and Theorem~\ref{thm_region},
\begin{align}
s_{\infty}^*(X;Y)
=
\sup_{U^r}\frac{\sum_{i\in\mathcal{O}^r}I(U_i;Y|U^{i-1})
+\sum_{i\in\mathcal{E}^r} I(U_i;X|U^{i-1})}{\sum_{i\in\mathcal{O}^r} I(U_i;X|U^{i-1})
+\sum_{i\in\mathcal{E}^r} I(U_i;Y|U^{i-1})}
\end{align}
Therefore \eqref{e_ub_cr} follows from
the fact that $\frac{\sum_i a_i}{\sum_i b_i}\le \sup_i \frac{a_i}{b_i}$ for any nonnegative $(a_i)$ and $(b_i)$ for which the fractions are defined.

The sufficient condition for the equalities can be seen from \eqref{e_onewayeta} and the fact that $\rho_{\rm m}^2(\hat{X};\hat{Y})\le s_1^*(\hat{X};\hat{Y})$ for any $P_{XY}$.
\end{proof}

In general, the maximal correlation coefficient is much easier to compute than the strong data processing constant. Let us use the boldface to denote a matrix corresponding to a discrete joint distribution. e.g.
\begin{align}
\mb{P}_{XY}:=\left[P_{XY}(x,y)\right]_{xy},
\end{align}
with the marginal distributions always expressed as column vectors.
Define
\begin{align}
\mb{A}:=\diag(\mb{P}_X)^{-\frac{1}{2}} \mb{P}_{XY} \diag(\mb{P}_Y)^{-\frac{1}{2}}
\label{e_A}
\end{align}
and let $\mb{M}:=\mb{A}^{\top}\mb{A}$.
Then $\rho_{\rm m}^2(\hat{X};\hat{Y})$ is the second largest eigenvalue value of $\mb{M}$ (cf.~\cite{anantharam2013maximal}). See also \cite{lancaster} for an extension to non-discrete distributions.

Using the calculus of variation, we show next a necessary condition that the discrete distribution $P_{XY}=Q_{XY}$ achieves the supremum in \eqref{e_ub}, whose proof is deferred to Appendix~\ref{app_thm_2}.
\begin{defn}\label{defn_indecomposable}
The \emph{graph} of a discrete distribution distribution $Q_{XY}$ is defined as the bipartite graph whose adjacency matrix is the sign\footnote{That is, a positive entry of ${\bf Q}_{XY}$ yields an entry $1$ in the adjacency matrix, and a zero entry of ${\bf Q}_{XY}$ yields a zero entry in the adjacency matrix.} of ${\bf Q}_{XY}$.
We say $Q_{XY}$ is \emph{indecomposable} \cite{witsenhausen1975sequences} if its graph is connected, and \emph{decomposable} otherwise.
\end{defn}
\begin{thm}\label{thm_2}
Fix $Q_{XY}$ where $|\mathcal{X}|,|\mathcal{Y}|<\infty$.
Let $\mb{u}$ and $\mb{v}$ be the left and right singular vectors of $\mb{A}$ (defined in \eqref{e_A}) corresponding to the second largest singular value of $\bf A$, both normalized to have unit $\ell_2$ norm, so that $\rho_{\sf m}(X;Y)=\mb{u}^{\top}\mb{A}\mb{v}$.
If $P_{XY}=Q_{XY}$ achieves the supremum in \eqref{e_ub}, then
\begin{align}
\mb{u}^{\circ2}&={\bf Q}_{X|Y}\mb{v}^{\circ2};\label{e32}
\\
\mb{v}^{\circ2}&={\bf Q}_{Y|X}\mb{u}^{\circ2},\label{e33}
\end{align}
where $\mb{u}^{\circ2}$ denotes the entry-wise square of $\mb{u}$.
Moreover, if $Q_{XY}$ is indecomposable and both $Q_X$ and $Q_{Y}$ are fully supported, then we further have
\begin{align}
\mb{u}^{\circ2}&={\bf Q}_X;\label{e_full1}
\\
\mb{v}^{\circ2}&={\bf Q}_Y.\label{e_full2}
\end{align}
\end{thm}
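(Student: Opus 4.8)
The plan is to set up the optimization $\sup_{P_{XY}\preceq Q_{XY}}\rho_{\sf m}^2(\hat X;\hat Y)$ as a constrained maximization over a parametrized family of perturbations of $Q_{XY}$ and to write down the first-order (stationarity) conditions. By Definition~\ref{defn_abs}, a measure $P_{XY}\preceq Q_{XY}$ has Radon--Nikodym derivative of the product form $fg$; in the finite-alphabet case this means $\mathbf{P}_{XY}=\diag(\mathbf{f})\,\mathbf{Q}_{XY}\,\diag(\mathbf{g})$ for nonnegative vectors $\mathbf f,\mathbf g$, subject to the normalization $\mathbf{f}^\top\mathbf{Q}_{XY}\mathbf{g}=1$. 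Writing $\mathbf f = \mathbf 1 + \varepsilon\,\boldsymbol\phi$, $\mathbf g = \mathbf 1 + \varepsilon\,\boldsymbol\psi$ and expanding to first order in $\varepsilon$, the hypothesis that $P_{XY}=Q_{XY}$ (i.e.\ $\varepsilon=0$) is a maximizer forces the derivative of $\rho_{\sf m}^2$ along every admissible direction $(\boldsymbol\phi,\boldsymbol\psi)$ to vanish. The first step is therefore to compute $\frac{\rd}{\rd\varepsilon}\rho_{\sf m}^2\big|_{\varepsilon=0}$ using the variational (Rayleigh-quotient) characterization of the second singular value, which by standard perturbation theory equals $\mathbf u^\top (\delta\mathbf A)\,\mathbf v$ where $\delta\mathbf A$ is the first-order change in $\mathbf A$ induced by $(\boldsymbol\phi,\boldsymbol\psi)$ (the envelope/Danskin-type argument applies because $\mathbf u,\mathbf v$ are the corresponding singular vectors at the base point).

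The second step is to carry out this differentiation carefully, keeping track of \emph{both} sources of first-order change in $\mathbf A=\diag(\mathbf P_X)^{-1/2}\mathbf P_{XY}\diag(\mathbf P_Y)^{-1/2}$: the change in the numerator $\mathbf P_{XY}$ and the change in the marginals $\mathbf P_X,\mathbf P_Y$ appearing in the diagonal scalings. Because $\mathbf A\mathbf v=\rho_{\sf m}\mathbf u$ and $\mathbf u^\top\mathbf A=\rho_{\sf m}\mathbf v^\top$, the cross terms coming from perturbing the scalings collapse, and one finds that the stationarity condition reduces to a statement of the form $\sum_{x} \phi_x\big(u_x^2 - \sum_y Q_{Y|X}(y|x)\,v_y^2\,(\text{something})\big)=0$ for all $\boldsymbol\phi$ orthogonal to the single normalization constraint, and symmetrically for $\boldsymbol\psi$. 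Eliminating the Lagrange multiplier associated with $\mathbf f^\top\mathbf Q_{XY}\mathbf g=1$ (it contributes a vector proportional to $\mathbf 1$, since $\mathbf u^\top\mathbf u=\mathbf v^\top\mathbf v=1$ fixes its value) yields exactly \eqref{e32}--\eqref{e33}: $\mathbf u^{\circ 2}=\mathbf Q_{X|Y}\mathbf v^{\circ 2}$ and $\mathbf v^{\circ 2}=\mathbf Q_{Y|X}\mathbf u^{\circ 2}$.

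For the second (indecomposable) part, I would argue as follows. Equations \eqref{e32}--\eqref{e33} say that the nonnegative vectors $\mathbf u^{\circ 2}$ and $\mathbf v^{\circ 2}$ are mapped to each other by the stochastic kernels $\mathbf Q_{Y|X}$ and $\mathbf Q_{X|Y}$; iterating, $\mathbf u^{\circ 2}$ is a fixed point of $\mathbf Q_{X|Y}\mathbf Q_{Y|X}$, which is the transition matrix of a Markov chain on $\mathcal X$ whose graph of communicating classes is governed by the connectivity of the bipartite graph of $Q_{XY}$. Indecomposability plus full support of $Q_X$ makes this chain irreducible, so its stationary/invariant structure is one-dimensional; since $\mathbf Q_X$ manifestly satisfies $\mathbf Q_X=\mathbf Q_{X|Y}\mathbf Q_{Y|X}\mathbf Q_X$ (this is just the consistency of marginals, $\mathbf Q_{X|Y}\mathbf Q_Y=\mathbf Q_X$ and $\mathbf Q_{Y|X}\mathbf Q_X=\mathbf Q_Y$) and is the unique such nonnegative vector up to scaling, we conclude $\mathbf u^{\circ 2}=c\,\mathbf Q_X$; the normalization $\|\mathbf u\|_2^2=1=\sum_x Q_X(x)$ forces $c=1$, giving \eqref{e_full1}, and \eqref{e_full2} follows symmetrically. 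The main obstacle I anticipate is the bookkeeping in the second step — correctly isolating the normalization multiplier and checking that the contributions from differentiating the marginal scalings genuinely cancel rather than leaving residual terms — together with justifying the use of first-order perturbation theory for the singular vectors (which is legitimate provided the second singular value is simple; if it is not, one should restrict to a direction within the relevant eigenspace, and the stated necessary conditions still follow for the chosen $\mathbf u,\mathbf v$).
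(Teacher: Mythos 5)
Your proposal is correct. The first part coincides with the paper's argument: the paper likewise parameterizes admissible perturbations through the product form $\rd\mb{P}_{XY}=\diag(\rd\mb{f})\,\mb{P}_{XY}$ (perturbing only the $X$-factor and obtaining \eqref{e33} by symmetry, whereas you perturb $\mb f$ and $\mb g$ jointly), uses first-order perturbation of the second singular value to get $\rd\rho_{\sf m}=\mb{u}^{\top}\rd\mb{A}\,\mb{v}$, tracks the induced changes in the diagonal scalings by the marginals, and eliminates the multiplier of the normalization constraint by summing entries — exactly the bookkeeping you describe, and your caveat about simplicity of the second singular value is a point the paper glosses over. Where you genuinely diverge is the deduction of \eqref{e_full1}--\eqref{e_full2} from \eqref{e32}--\eqref{e33}: the paper compares $D(\mb{u}^{\circ2}\|Q_X)$ and $D(\mb{v}^{\circ2}\|Q_Y)$, notes that \eqref{e32}--\eqref{e33} together with $\mb{Q}_X=\mb{Q}_{X|Y}\mb{Q}_Y$ force the data-processing inequality \eqref{e_35} to hold with equality in both directions, and then extracts \eqref{e_full1}--\eqref{e_full2} from the equality condition; you instead observe that $\mb{u}^{\circ2}$ is an invariant probability vector (it sums to one by the unit $\ell_2$ normalization) of the column-stochastic matrix $\mb{Q}_{X|Y}\mb{Q}_{Y|X}$, which is irreducible under indecomposability and full support, so Perron--Frobenius uniqueness of the invariant distribution, together with $\mb{Q}_{X|Y}\mb{Q}_{Y|X}\mb{Q}_X=\mb{Q}_X$, pins down $\mb{u}^{\circ2}=\mb{Q}_X$. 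Your route is arguably cleaner: it makes the role of connectivity completely explicit (irreducibility of the two-step chain), whereas the paper's equality-case-of-data-processing step is left as ``elementary to show''; the paper's route, on the other hand, generalizes more naturally beyond the finite-alphabet setting where Perron--Frobenius is less immediate.
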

\begin{proof}
See Appendix~\ref{app_thm_2}.
\end{proof}
\begin{rem}
Conditions \eqref{e_full1} and \eqref{e_full2} need not be satisfied when $Q_{XY}$ is decomposable (e.g.~consider $X=Y$ binary but not equiprobable under $P_{XY}$).
\end{rem}

Applying Theorem~\ref{thm_2} to BSS, we have the following result.
\begin{thm}\label{lem2}
If $Q_{XY}$ is a BSS with error probability $\epsilon\in[0,1]$, then
\begin{align}
\sup_{P_{XY}\preceq Q_{XY}}\rho_{\rm m}^2(\hat{X};\hat{Y})=(1-2\epsilon)^2.
\label{e_90}
\end{align}
As a consequence, interaction does not increase KBIB for BSS:
\begin{align}
\Gamma_r(X;Y)&=\frac{(1-2\epsilon)^2}{1-(1-2\epsilon)^2}
\end{align}
for any $r\in\{1,2,\dots,\infty\}$.
\end{thm}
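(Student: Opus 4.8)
The plan is to apply Theorem~\ref{thm_2} to the binary symmetric source and check that the candidate $P_{XY}=Q_{XY}$ (the BSS itself, with equiprobable marginals) satisfies its necessary conditions, then verify directly that this candidate attains the value $(1-2\epsilon)^2$, so that the supremum in \eqref{e_ub_cr} is achieved at $Q_{XY}$ and the ``moreover'' clause of the theorem preceding \eqref{e_ub} kicks in. First I would write out the matrix $\mb{A}$ of \eqref{e_A} for the BSS: since $Q_X=Q_Y=(\tfrac12,\tfrac12)^{\top}$ and $\mb{Q}_{XY}=\tfrac12\left[\begin{smallmatrix}1-\epsilon & \epsilon\\ \epsilon & 1-\epsilon\end{smallmatrix}\right]$, one gets $\mb{A}=\left[\begin{smallmatrix}1-\epsilon & \epsilon\\ \epsilon & 1-\epsilon\end{smallmatrix}\right]$, whose eigenvalues are $1$ (eigenvector $\tfrac{1}{\sqrt2}(1,1)^{\top}$) and $1-2\epsilon$ (eigenvector $\tfrac{1}{\sqrt2}(1,-1)^{\top}$). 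Hence the second largest singular value squared, i.e.\ $\rho_{\rm m}^2(X;Y)$ evaluated at $Q_{XY}$, equals $(1-2\epsilon)^2$, with $\mb{u}=\mb{v}=\tfrac{1}{\sqrt2}(1,-1)^{\top}$ (for $\epsilon\le\tfrac12$; for $\epsilon>\tfrac12$ the sign bookkeeping is analogous and the value is unchanged since it depends on $(1-2\epsilon)^2$).

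Next I would verify the necessary conditions \eqref{e_full1}--\eqref{e_full2}: $\mb{u}^{\circ2}=\mb{v}^{\circ2}=(\tfrac12,\tfrac12)^{\top}=\mb{Q}_X=\mb{Q}_Y$, so the conditions hold (the BSS is indecomposable and fully supported whenever $0<\epsilon<1$, and the boundary cases $\epsilon\in\{0,1\}$ are trivial, giving value $1$). Since Theorem~\ref{thm_2} only furnishes a \emph{necessary} condition, I still need to establish that $Q_{XY}$ is genuinely a maximizer of $\rho_{\rm m}^2(\hat X;\hat Y)$ over $P_{XY}\preceq Q_{XY}$. For this I would invoke \eqref{e8}/\eqref{e45} together with the fact that for the BSS $s_1^*(X;Y)=(1-2\epsilon)^2$ (stated in the remark after Proposition~\ref{prop_s1}): since $\rho_{\rm m}^2(\hat X;\hat Y)\le s_1^*(\hat X;\hat Y)$ always, and any $P_{XY}\preceq Q_{XY}$ has $s_1^*(\hat X;\hat Y)\le \sup_{P\preceq Q}s_1^*$, it suffices to argue $\sup_{P_{XY}\preceq Q_{XY}}s_1^*(\hat X;\hat Y)=(1-2\epsilon)^2$. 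This last equality I would get from the known computation of the strong data processing constant of the BSC with arbitrary (not necessarily uniform) input: every $P_X$ with $P_X\ll Q_X$ fed through the BSC$(\epsilon)$ still yields $s_1^*\le(1-2\epsilon)^2$ (this is the classical tight SDPI bound for the BSC, which one can cite from \cite{ahlswede1976spreading}\cite{anantharam2013maximal} or reprove via the contraction coefficient of the BSC), and the uniform input achieves it. Combining, $\sup_{P_{XY}\preceq Q_{XY}}\rho_{\rm m}^2=(1-2\epsilon)^2$, which is \eqref{e_90}.

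Finally, to conclude the KBIB statement: by the ``moreover'' part of the theorem containing \eqref{e_ub}, since $P_{XY}=Q_{XY}$ supremizes $\rho_{\rm m}(\hat X;\hat Y)$, \eqref{e_ub_cr} and \eqref{e_ub} hold with equality and $\Gamma_\infty(X;Y)$ is achieved by one-way communication. By Theorem~\ref{thm_gammas} and \eqref{e_onewayeta}, $\Gamma_\infty(X;Y)=\frac{(1-2\epsilon)^2}{1-(1-2\epsilon)^2}=\Gamma_1(X;Y)$; and since $\Gamma_r$ is monotone in $r$ (as $\mathcal{R}_r$ is increasing in $r$) and sandwiched between $\Gamma_1$ and $\Gamma_\infty$, it is constant in $r\in\{1,2,\dots,\infty\}$, giving the displayed formula.

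The main obstacle I anticipate is the step certifying that $Q_{XY}$ is actually a maximizer rather than merely a critical point — Theorem~\ref{thm_2} alone does not settle this. The clean way around it is the sandwich $\rho_{\rm m}^2\le s_1^*$ combined with the sharp BSC strong-data-processing bound $\sup_{P_{XY}\preceq Q_{XY}}s_1^*(\hat X;\hat Y)=(1-2\epsilon)^2$; the only real work is justifying that last supremum, i.e.\ that no reweighting of the BSC's input (and no further $\preceq_Y$ step) can push the contraction coefficient above $(1-2\epsilon)^2$. If a self-contained argument is wanted instead of a citation, one can note that for any $P_X$, writing $P_{Y|X}=$ BSC$(\epsilon)$, the Dobrushin/SDPI contraction coefficient of BSC$(\epsilon)$ equals $|1-2\epsilon|$ independently of the input, and $s_1^*\le$ (contraction coefficient)$^2$.
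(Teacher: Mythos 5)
Your computation of $\mb{A}$ and its spectrum at $Q_{XY}$ is correct, and your use of the ``moreover'' clause of the theorem containing \eqref{e_ub}, plus monotonicity of $\Gamma_r$ in $r$, is the right way to pass from \eqref{e_90} to the KBIB statement. You are also right that Theorem~\ref{thm_2} only gives necessary conditions, so the entire burden of the proof is in certifying that no $P_{XY}\preceq Q_{XY}$ beats $(1-2\epsilon)^2$ --- and that is exactly where your argument has a gap. You reduce the claim to $\sup_{P_{XY}\preceq Q_{XY}} s_1^*(\hat X;\hat Y)=(1-2\epsilon)^2$, but by \eqref{e43}--\eqref{e45} this supremum \emph{equals} $\sup_{P_{XY}\preceq Q_{XY}}\rho_{\rm m}^2(\hat X;\hat Y)$, i.e.\ it is the very quantity being evaluated; the sandwich is therefore circular unless you give an independent proof of the $s_1^*$ bound. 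The classical ``tight SDPI bound for the BSC with arbitrary input'' that you cite only covers $P_{XY}\preceq_X Q_{XY}$ (reweighting the input while keeping $P_{Y|X}=\mathrm{BSC}(\epsilon)$) --- precisely the ``weaker version'' of \eqref{e_90} that the remark after the theorem attributes to \cite[Section~6]{ahlswede1976spreading}. A general $P_{XY}\preceq Q_{XY}$ also involves a $Y$-tilting, under which $P_{Y|X}(y|x)\propto g(y)Q_{Y|X}(y|x)$ is an \emph{asymmetric} binary channel, so neither the classical BSC computation nor your fallback applies. The fallback is moreover false as stated: the general inequality is $s_1^*\le\eta_{\rm TV}$ (the Dobrushin coefficient), not $s_1^*\le\eta_{\rm TV}^2$ --- the erasure channel, for which $s_1^*=1-\epsilon=\eta_{\rm TV}$, refutes the squared version --- and here it would only yield $s_1^*\le|1-2\epsilon|$.

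To close the gap you need one of the two routes the paper takes: either (i) first establish that the supremum in \eqref{e_90} is \emph{attained} (continuity of $\rho_{\rm m}$ at fully supported distributions together with its vanishing as a marginal degenerates), then apply the stationarity conditions \eqref{e_full1}--\eqref{e_full2} of Theorem~\ref{thm_2} \emph{at the maximizer} and use orthogonality of $(\mb u,\mb v)$ to the top singular vectors $\mb Q_X^{\circ 1/2},\mb Q_Y^{\circ 1/2}$ to force the maximizer to have equiprobable marginals, hence to coincide with $Q_{XY}$; or (ii) bypass Theorem~\ref{thm_2} altogether and compute $\rho_{\rm m}^2(\hat X;\hat Y)$ in closed form over the two-parameter family of rank-one Hadamard tiltings of the BSS and maximize it directly, as in the paper's second proof. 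Checking that $Q_{XY}$ satisfies the necessary conditions, as in your step 2, substitutes for neither, since a priori the supremum could be approached or attained at a different critical point.
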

\begin{proof}
We may assume without loss of generality that $\epsilon\in(0,1)$. Then by \cite{kamath2012non}, the maximal correlation coefficient is continuous at any $P_{XY}$ with fully supported marginal distribution. It is also elementary to show that $\rho_{\rm m}^2(\hat{X};\hat{Y})$ vanishes as either $P_X$ or $P_Y$ tends to a deterministic distribution. Therefore, the supremum in the definition of $\bar{\rho}_{\rm m}$ is achieved.
We assume for convenience that $Q_{XY}$ is a maximizer, and we will show that $Q_X$ and $Q_Y$ are equiprobable.
By Theorem~\ref{thm_2},
the singular vectors $\mb{u}$ and $\mb{v}$ corresponding to the second singular value $\rho_{\sf m}$ satisfy \eqref{e_full1} and \eqref{e_full2}.
However, the singular vectors associated with the largest singular value $1$ are
$
\mb{Q}_X^{\circ\half}$
and $ \mb{Q}_Y^{\circ\half}
$
(this is true for any $Q_{XY}$).
In the case of $\mathcal{X}=\mathcal{Y}=\{0,1\}$,
the fact that the singular vectors corresponding to the largest and the second largest singular values are orthogonal implies $\mb{u}=\pm(Q_X^{\half}(0),-Q_X^{\half}(1))$
and hence
\begin{align}
Q_X(0)=Q_X(1).
\end{align}
Similarly $Q_Y$ must also be equiprobable.
\end{proof}
\begin{rem}
\cite[Section~6]{ahlswede1976spreading} showed a weaker version of \eqref{e_90}
where the supremization is only over $P_{XY}\preceq_X Q_{XY}$.
see also the related computations in \cite{witsenhausen1975sequences}.
\end{rem}
Theorem~\ref{lem2} may also be proved directly without invoking Theorem~\ref{thm_2}:
\begin{proof}[Second proof of Theorem~\ref{lem2}]
We only need to show that $\rho_{\rm m}^2(\hat{X};\hat{Y})\le(1-2\epsilon)^2$ for any $P_{XY}\in\mathcal{P}(Q_{XY})$. Suppose
\begin{align}
{\bf M}=\left(
          \begin{array}{cc}
            x & \gamma \\
            \beta & y \\
          \end{array}
        \right)
\end{align}
is the matrix such that $\mb{P}_{XY}$ equals the Hadamard product (i.e.\ pointwise product)
\begin{align}
\left(
          \begin{array}{cc}
            \bar{\epsilon} & \epsilon \\
            \epsilon & \bar{\epsilon} \\
          \end{array}
        \right)
        \circ
        \mb{M}
\end{align}
Although ${\bf M}$ is parameterized by four scalars, it only has two degrees of freedom because $P_{XY}\in \mathcal{P}(Q_{XY})$ implies $\mb{M}$ is rank-one (indeed, according to the definition of $\mathcal{P}(\cdot)$ we have $\mb{M}=\mb{f}\mb{g}^{\top}$ for some vectors $\mb{f}$ and $\mb{g}$), and the sum of the coordinates of $\mb{P}_{XY}$ equals 1. In fact, given the sum $s=\beta+\gamma$ and product $p=\beta\gamma$ of the two parameters, we can express the sum and product of $x$ and $y$:
\begin{align}
xy=&p,
\\
x+y=&\frac{1-\epsilon s}{\bar{\epsilon}}.
\end{align}
We know $\rho_{\rm m}(\hat{X};\hat{Y})$ is the second largest singular value of
$\left[\frac{1}{\sqrt{P_{\hat{X}}(x)P_{\hat{Y}}(y)}}
P_{\hat{X}\hat{Y}}(x,y)\right]_{x,y}$. After some 
straightforward algebra, we can express it in terms of $s$ and $p$:
\begin{align}
\rho_{\rm m}^2(\hat{X};\hat{Y})=
\frac{(1-2\epsilon)^2p}
{(1-2\epsilon)^2p+\epsilon(1-2\epsilon)s+\epsilon^2}.\label{e11}
\end{align}
For $P_{\hat{X}\hat{Y}}\in\mathcal{P}(Q_{XY})$, the admissible $s$ and $p$ satisfy
\begin{align}
0\le s\le&\frac{1}{\epsilon},
\label{e_cond1}
\\
0\le p\le& \frac{1}{4}\min\left\{\left(\frac{1-\epsilon s}{\bar{\epsilon}}\right)^2,s^2\right\}.
\label{e_cond2}
\end{align}
Under the conditions \eqref{e_cond1} and \eqref{e_cond2}, it is elementary to show that \eqref{e11} is maximized when $p=\frac{1}{4}$ and $s=1$.
\end{proof}
A celebrated central limit theorem argument by Gross \cite{gross1975logarithmic} showed that Gaussian hypercontractivity can be obtained by BSS hypercontractivity,
which, by the fact that the strong data processing constant can be computed from the hypercontractivity region (see e.g.\ \cite{ahlswede1976spreading}\cite{anantharam2013maximal}), implies that
\begin{align}
s_1^*(X;Y)=\rho^2
\label{e_gross}
\end{align}
for jointly Gaussian $(X,Y)$, where $\rho$ denotes the correlation coefficient.
A similar central limit argument
can be applied to upper-bound the key bits per communication bit \cite[Appendix~D]{Liu}.
Here, a similar argument can again be used to upper-bound the symmetric strong data processing constant and the key bits per interaction bit:
\begin{thm}\label{thm_gaussian}
For any jointly Gaussian distribution $Q_{XY}$ with correlation $\rho$,
\begin{align}
s_{\infty}^*(X;Y)= \rho^2.
\label{e_sgaussian}
\end{align}
As a consequence, interaction does not increase KBIB for the Gaussian source:
\begin{align}
\Gamma_r(X;Y)&=\frac{\rho^2}{1-\rho^2}
\end{align}
for any $r\in\{1,2,\dots,\infty\}$.
\end{thm}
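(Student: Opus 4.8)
\textbf{Proof proposal for Theorem~\ref{thm_gaussian}.}
The plan is to establish the upper bound $s_\infty^*(X;Y)\le\rho^2$ via a central-limit (tensorization) argument analogous to Gross's technique, and to obtain the matching lower bound for free from \eqref{e_s1star} together with the known value \eqref{e_gross} of the one-way constant. Since $s_1^*(X;Y)=\rho^2$ for jointly Gaussian sources and $s_\infty^*(X;Y)\ge s_1^*(X;Y)$ by \eqref{e_s1star}, only the inequality $s_\infty^*(X;Y)\le\rho^2$ requires work. Once \eqref{e_sgaussian} is in hand, the formula for $\Gamma_r(X;Y)$ follows: the $r=\infty$ case is Theorem~\ref{thm_gammas}, and the $r=1$ case is \eqref{e_onewayeta} with $s_1^*=\rho^2$; sandwiching gives equality for all intermediate $r$ since $\Gamma_r$ is monotone in $r$.

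For the upper bound I would exploit the tensorization property \eqref{e_tens}: if $(X_i,Y_i)_{i=1}^n$ are i.i.d.\ copies of some source $(\tilde X,\tilde Y)$, then $s_\infty^*(X^n;Y^n)=s_\infty^*(\tilde X;\tilde Y)$. Now take $(\tilde X,\tilde Y)$ to be a BSS with crossover probability $\epsilon$, so that by Theorem~\ref{lem2} we have $s_\infty^*(\tilde X;\tilde Y)=(1-2\epsilon)^2$. Consider the normalized sums $\bar X_n:=\frac{1}{\sqrt n}\sum_i(X_i-\tfrac12)$ and similarly $\bar Y_n$; by the bivariate central limit theorem $(\bar X_n,\bar Y_n)$ converges in distribution to a jointly Gaussian pair with correlation coefficient $1-2\epsilon$. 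The data-processing property \eqref{e_sstardp} gives $s_\infty^*(\bar X_n;\bar Y_n)\le s_\infty^*(X^n;Y^n)=(1-2\epsilon)^2$, since $\bar X_n$ is a (deterministic) function of $X^n$ and likewise for $\bar Y_n$. The remaining point is a semicontinuity/continuity statement: that $s_\infty^*$ of the Gaussian limit does not exceed $\liminf_n s_\infty^*(\bar X_n;\bar Y_n)$. Choosing $\epsilon$ so that $1-2\epsilon=\rho$ (valid for $\rho\in(0,1)$; the cases $\rho=0$ and $|\rho|=1$ are trivial, and negative $\rho$ reduces to positive by relabeling) then yields $s_\infty^*(X;Y)\le\rho^2$.

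The main obstacle is the lower-semicontinuity step, i.e.\ justifying that $s_\infty^*$ behaves well under the weak convergence $(\bar X_n,\bar Y_n)\Rightarrow(X,Y)$. Unlike finite-alphabet settings, here the alphabets change with $n$ and the limit is continuous, so one cannot directly appeal to a finite-dimensional compactness argument. The cleanest route is probably to work at the level of the defining functional: by Definition~\ref{defn_sinfty}, $s_\infty^*(X;Y)\le s$ iff $\omega_\infty^s(Q_{XY})=\omega_0^s(Q_{XY})$, equivalently (by Theorem~\ref{thm_rregion}) iff $\sup_{(S,R)\in\mathcal S_\infty(X,Y)}\{R-sS\}\le0$, which by \eqref{e49} is $sH(X,Y|U^r)-I(X;Y|U^r)\le sH(X,Y)-I(X;Y)$ for all admissible $U^r$. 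One then argues that any interactive scheme achieving $R>s(R_1+R_2)$ for the Gaussian source could be quantized and lifted to the BSS-approximating sources to contradict $s_\infty^*(\bar X_n;\bar Y_n)\le\rho^2-o(1)$; alternatively, invoke the maximal-correlation upper bound \eqref{e_ub_cr} directly, since $\rho_{\rm m}(X;Y)=|\rho|$ for jointly Gaussian pairs and $Q_{XY}$ itself supremizes $\rho_{\rm m}$ over $\mathcal P(Q_{XY})$ (by the Gaussian analog of Theorem~\ref{thm_2}, using that Hermite polynomials are the singular functions and $Q_X,Q_Y$ are already the "square" of the top singular functions). In fact this last observation makes the whole argument self-contained: apply the final clause of the theorem preceding Theorem~\ref{thm_2}'s block — namely that if $P_{XY}=Q_{XY}$ supremizes $\rho_{\rm m}$ then \eqref{e_ub_cr} holds with equality — giving $s_\infty^*(X;Y)=\rho_{\rm m}^2(X;Y)=\rho^2$ at once, with the CLT argument only needed to certify the supremizing property in the non-discrete Gaussian case (where Theorem~\ref{thm_2} as stated requires finite alphabets). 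I would present both: the maximal-correlation route as the main line, and remark that the CLT/tensorization argument gives an alternative proof avoiding the variational computation.
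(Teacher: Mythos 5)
Your lower bound ($s_\infty^*\ge s_1^*=\rho^2$ via \eqref{e_s1star} and \eqref{e_gross}) and the sandwiching argument for $\Gamma_r$ match the paper. The problem is the upper bound. Your ``main line'' rests on the claim that $P_{XY}=Q_{XY}$ supremizes $\rho_{\rm m}^2(\hat X;\hat Y)$ over $\mathcal{P}(Q_{XY})$ for the Gaussian, justified by ``the Gaussian analog of Theorem~\ref{thm_2}.'' This does not work: Theorem~\ref{thm_2} is only a \emph{first-order necessary} condition for optimality (a stationarity condition derived from a perturbation argument), so verifying that the Hermite singular functions satisfy \eqref{e_full1}--\eqref{e_full2} tells you nothing about whether $Q_{XY}$ is actually a global maximizer. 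Moreover Theorem~\ref{thm_2} is stated and proved only for finite alphabets; for the Gaussian, $\mathcal{P}(Q_{XY})$ contains, e.g., all restrictions of $Q_{XY}$ to products of sets, the set is not compact, and the existence and identification of a supremizer is nontrivial (note that even for the BSS the paper needs a separate continuity/compactness argument plus orthogonality of singular vectors, and supplies a second direct proof as a check). The statement $\sup_{P_{XY}\preceq Q_{XY}}\rho_{\rm m}^2=\rho^2$ for the Gaussian is exactly what the paper's proof deliberately avoids having to establish.

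Your ``alternative'' route is the one the paper actually takes, but you leave its crux open. The paper does not need semicontinuity of $s_\infty^*$ under weak convergence, nor the data-processing step $s_\infty^*(\bar X_n;\bar Y_n)\le s_\infty^*(X^n;Y^n)$. Instead it argues operationally by contradiction: if $s_\infty^*>\rho^2$ for the Gaussian, then by Theorem~\ref{thm_gammas} and Proposition~\ref{prop_gamma} there is a sequence of schemes with $\liminf\log|\mathcal{K}|/\log|\mathcal{W}^r|>\rho^2/(1-\rho^2)$ and vanishing $\Delta_n$; one then \emph{simulates} the Gaussian source from a BSS with $\epsilon=(1-\rho)/2$ by forming $L^{-1/2}\sum_i\tilde X_i+N$ with an auxiliary uniform smoothing noise $N$ so that the convergence to $Q_{XY}$ holds in \emph{total variation} (weak convergence is not enough, since the scheme's performance guarantees are stated in total variation). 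Running the same encoders and decoders on the simulated source transfers the performance to the BSS and contradicts Theorem~\ref{lem2}. To repair your write-up you should promote this simulation argument to the main line, add the smoothing step that upgrades the CLT to total-variation convergence, and drop (or clearly flag as unproved) the claim that the Gaussian $Q_{XY}$ supremizes $\rho_{\rm m}$ over $\mathcal{P}(Q_{XY})$.
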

\begin{proof}
The ``$\ge$'' part in \eqref{e_sgaussian} is immediate from \eqref{e_s1star} and \eqref{e_gross}.
Here we only need to prove ``$\le$'' in \eqref{e_sgaussian}.
Suppose this is not the case, then by Theorem~\ref{thm_gammas},
\begin{align}
\Gamma_{\infty}(X;Y)&>\frac{\rho^2}{1-\rho^2}.
\end{align}
Then by Proposition~\ref{prop_gamma}, there exists a sequence of schemes (indexed by blocklength $n$) such that
\begin{align}
\liminf_{n\to\infty}\frac{\log|\mathcal{K}|}
{\log|\mathcal{W}^r|}
&> \frac{\rho^2}{1-\rho^2};
\label{e96}
\\
\lim_{n\to\infty}\log|\mathcal{K}|&=\infty;\label{e_97}
\\
\limsup_{n\to\infty}|Q_{K\hat{K}W^r}-T_{K\hat{K}}Q_{W^r}|&=0
\end{align}
where $T_{K\hat{K}}$ is defined in \eqref{e_tk}.
Now consider a stationary memoryless binary symmetric source $(\tilde{X}_i,\tilde{Y}_i)_{i=1}^{\infty}$ with crossover probability $\epsilon=\frac{1-\rho}{2}$
and $\mathcal{X}=\mathcal{Y}=\{-1,+1\}$.
We can use it to simulate a stationary memoryless Gaussian source with correlation $\rho$.
Indeed, let $L$ be a positive integer, and let $N$ and $\hat{N}$ be i.i.d.\ according to the uniform distribution on $[0,L^{-1/2}]$.
Then according to the multivariate central limit theorem (e.g.\ applying \cite[Theorem~4]{valiant} and then using the smoothness of the Gaussian density),
\begin{align}
\left(
\frac{1}{\sqrt{L}}\sum_{i=1}^L\tilde{X}_i+N,\,
\frac{1}{\sqrt{L}}\sum_{i=1}^L\tilde{Y}_i+\hat{N},\,
\right)
\to (X,Y)
\end{align}
as $L\to\infty$,
where the convergence is in the total variation distance, and $(X,Y)\sim Q_{XY}$.
Thus, by taking sufficiently large ``super symbols'' of size $L$,
we can apply the original encoders and decoders to the simulated source to produce $K$ and $\hat{K}$ from $(\tilde{X}_i,\tilde{Y}_i)_{i=1}^{\infty}$ such that under the true distribution $\tilde{Q}$,
\begin{align}
\limsup_{n\to\infty}|\tilde{Q}_{K\hat{K}W^r}
-T_{K\hat{K}}\tilde{Q}_{W^r}|&=0
\end{align}
while \eqref{e96} and \eqref{e_97} are retained.
This means that
\begin{align}
\Gamma_{\infty}(\tilde{X};\tilde{Y})&>\frac{\rho^2}{1-\rho^2},
\end{align}
in contradiction to Theorem~\ref{lem2}.
\end{proof}
We may also define the \emph{key bits per unit cost} if the communication costs in the two directions differ.
From Theorem~\ref{lem2} and Theorem~\ref{thm_gaussian}
we see that one-round communication is also optimal for achieving this quantity in the case of BSS or Gaussian sources, provided that we have the flexibility to choose the direction of communication which has the lower cost.

\section{Minimum Interaction for Maximum Key Rate}
\label{sec_MIMK}
This section focuses on the minimum interactive communication rate needed for obtaining the maximum
key rate in $r$ rounds, starting from A to B,
which we have denoted by $I_r(Q_{XY})$.
This is the same question considered in \cite{tyagi2013common}, but we derive a new characterization based on concave envelopes and use it to disprove a conjecture in \cite{tyagi2013common}.
Precisely, $I_r(Q_{XY})$ can be defined in the following ways from the rate region or the multi-letter characterization of the rate region:
\begin{align}
I_r(Q_{XY})
&=\inf\{r_1+r_2\colon(I(X;Y),r_1,r_2)\in\mathcal{R}_r(X,Y)\}
\\
&=\inf\{d-I(X;Y)\colon(d,I(X;Y))\in\mathcal{S}_r(X,Y)\}
\\
&=\inf\left\{d\colon\sup_{U^r\colon S(Q_{XYU^r})-I(X;Y)\le d}R(Q_{XYU^r})=I(X;Y)\right\}
\label{e82}
\end{align}
where $S(Q_{XYU^r})$ and $R(Q_{XYU^r})$ were defined in Section~\ref{sec_convgeo}.
We then have the following general concave envelope characterization. Its proof is essentially based on a very simple geometric fact about the supporting hyperplane of a convex set (see Figure~\ref{fig_omega}), which should be applicable to other similar problems as well.
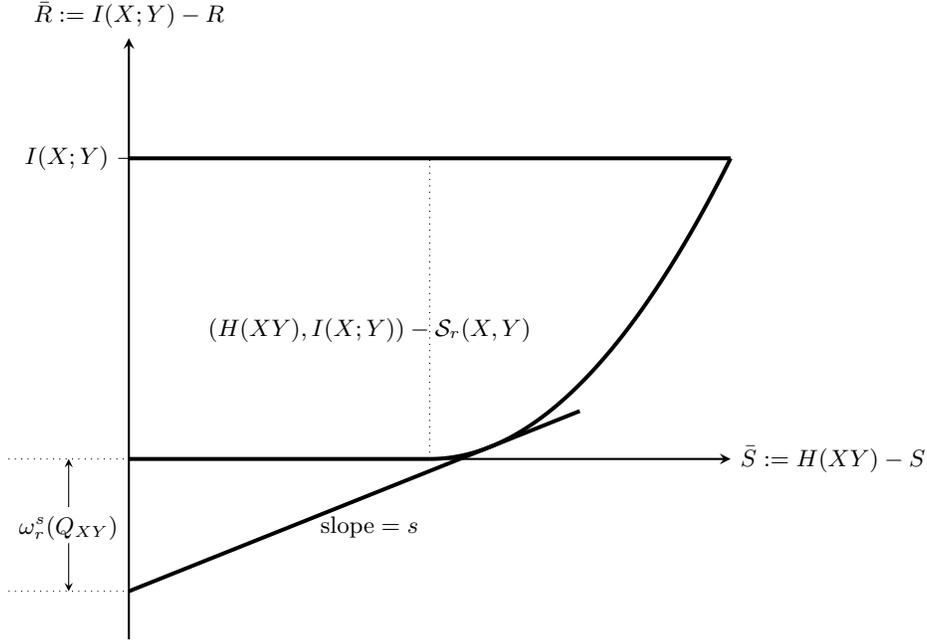
\begin{figure}[ht]
\centering
    \begin{tikzpicture}[scale=8,
      dot/.style={draw,fill=black,circle,minimum size=1mm,inner sep=0pt},arw/.style={->,>=stealth}]
      \draw[arw,line width=0.8pt] (0,-0.3) to (0,0.7) node[above,font=\small] {$\bar{R}:=I(X;Y)-R$};
      \draw[arw,line width=0.8pt] (0,0) to (1,0) node[right,font=\small]{$\bar{S}:=H(XY)-S$};
      \draw[line width=1.5pt] (0,0) to (0.5,0);
      {
       \draw[line width=1.5pt][domain=0.50001:1.001,variable=\r,samples=100]
        plot({\r},{2*(\r-0.5)^2});
    \draw[line width=1.5pt,domain=-0.6:0.15,variable=\r,samples=100]
    plot({\r+0.6},{0.4*\r+0.02});
       \draw[line width=1.5pt] plot coordinates
       {
       (0,0.5)
       (1,0.5)
       };
      }
      \draw[dotted] (0.5,0) to (0.5,0.5);
      \draw[dotted] (-0.2,0) to (0,0);
      \draw[dotted] (-0.2,-0.22) to (0,-0.22);
      \draw[arw] (-0.1,-0.08) to (-0.1,0);
      \draw[arw] (-0.1,-0.14) to (-0.1,-0.22);
      \draw (0,0.5) -- ++(-1/2pt,0) node[anchor=east,font=\small] {$I({X;Y})$};
      \draw (-0.1,-0.08) node[anchor=north,font=\small] {$\omega_r^s(Q_{XY})$};
      \draw (0.4,-0.08) node[anchor=north,font=\small] {${\rm slope}=s$};
            \draw (0.4,0.25) node[anchor=north,font=\small] {$(H(XY),I(X;Y))-\mathcal{S}_r(X,Y)$};
    \end{tikzpicture}
    \caption{\small Geometric illustration of $\omega_r^s(Q_{XY})$.}
\label{fig_omega}
\end{figure}
\begin{thm}\label{thm7}
For a stationary memoryless source with per-letter distribution
$Q_{XY}$,
\begin{align}
I_r(Q_{XY})=H(X|Y)+H(Y|X)-\lim_{s\downarrow0}
\frac{1}{s}\omega_r^s(Q_{XY}).
\label{e80}
\end{align}
where $\omega_r^s$ is as in \eqref{e29}.
\end{thm}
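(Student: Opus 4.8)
The plan is to exploit the convex-geometric picture in Figure~\ref{fig_omega}: the set $\mathcal{S}_r(X,Y)$ is closed and convex (by the diagonalization argument and Theorem~\ref{thm_region}), so $(H(XY),I(X;Y))-\mathcal{S}_r(X,Y)$ is a closed convex subset of the $(\bar S,\bar R)$-plane lying in the region $\bar R\le I(X;Y)$, $\bar S\ge 0$. By Theorem~\ref{thm_rregion}, for each $s>0$,
\begin{align}
\omega_r^s(Q_{XY})-\left(sH(XY)-I(X;Y)\right)
=\sup_{(S,R)\in\mathcal{S}_r(X,Y)}\{R-sS\}
=-\inf_{(\bar S,\bar R)}\{\bar R + s\bar S\},
\end{align}
where the last infimum is over the shifted region. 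Hence $-\bigl(\omega_r^s(Q_{XY})-sH(XY)+I(X;Y)\bigr)$ is the support-function value of the shifted region in the direction $(s,1)$, i.e.\ the line of slope $-s$ (in the $(\bar S,\bar R)$ picture) that supports the region from below. The quantity $I_r(Q_{XY})$, by the middle expression in \eqref{e82}, is exactly the smallest $\bar S$-coordinate of a point of the shifted region whose $\bar R$-coordinate is $0$ — that is, the leftmost point of the region on the horizontal axis $\bar R=0$.

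First I would make the correspondence between ``leftmost point on $\bar R=0$'' and ``limiting slope at the axis'' precise. Define $\phi(\bar S):=\inf\{\bar R:(\bar S,\bar R)\text{ in the shifted region}\}$; this is a convex, nonincreasing function on $[0,\infty)$ with $\phi(\bar S)\le I(X;Y)$, and $I_r(Q_{XY})=\inf\{\bar S:\phi(\bar S)\le 0\}$. For small $s>0$ the supporting line of slope $-s$ touches the graph of $\phi$ near the axis, and the value $\frac1s\omega_r^s(Q_{XY})$ is (after the affine normalization above) an intercept of that supporting line. Concretely, writing $g(s):=\inf_{\bar S\ge 0}\{\phi(\bar S)+s\bar S\}$ for the concave (in $s$) lower support function, one has $g(s)=I(X;Y)-\bigl(sH(XY)-I(X;Y)\bigr)^{-1}\!\cdot\!\text{(nothing)}$; more usefully, $\frac{g(s)}{s}\to -I_r(Q_{XY})$ as $s\downarrow 0$ by the standard fact that for a convex nonincreasing $\phi$ bounded above, $\lim_{s\downarrow 0}\frac1s\inf_{\bar S\ge0}\{\phi(\bar S)+s\bar S\}$ equals $\lim_{\bar S\to\infty}\phi(\bar S)$ minus a boundary term, but restricted appropriately it picks out the $\bar S$-intercept where $\phi$ hits $0$. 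I would then substitute back the definition $\omega_r^s(Q_{XY})=sH(XY)-I(X;Y)+\bigl(g(s)\text{-type quantity}\bigr)$ and simplify, using $H(XY)-I(X;Y)=H(X|Y)+H(Y|X)$, to land on \eqref{e80}.

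The cleanest way to organize the limit computation is: show $\lim_{s\downarrow0}\frac1s\omega_r^s(Q_{XY})$ exists (monotonicity of $s\mapsto\frac1s\omega_r^s$, which follows because $\omega_r^s$ is a supremum of affine functions of $s$, hence $\omega_r^s/s$ is monotone in an appropriate sense — I would verify this directly from \eqref{e31}); then evaluate it as $\sup\{H(XY)-S:(S,I(X;Y))\in\mathcal{S}_r(X,Y)\}$, recognizing the inner optimization as exactly the one defining $I_r$ but with the roles of slope and intercept swapped; and finally note that the supremum is attained because $\mathcal{S}_r(X,Y)$ is closed and the section at $R=I(X;Y)$ is nonempty (the maximum key rate $I(X;Y)$ is achievable at finite sum rate, e.g.\ the communication-for-omniscience scheme). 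The main obstacle I expect is the order-of-limits / attainment issue: one must be sure that as $s\downarrow 0$ the supporting line's contact point converges to the desired corner of $\mathcal{S}_r(X,Y)$ (the point with $R=I(X;Y)$ and minimal $S$), rather than escaping to infinity along the flat part $\{R\le I(X;Y)\}$ or getting stuck at a suboptimal corner; this is where the convexity of the region, the boundedness $\bar R\le I(X;Y)$, and the fact that the horizontal face $\bar R = I(X;Y)$ is actually part of the region (so $\phi$ genuinely reaches $0$ at a finite $\bar S$) must be combined carefully. A picture-level argument suffices here, but I would spell it out via the convex-analytic identity relating a convex function, its support function, and the intercepts of supporting lines.
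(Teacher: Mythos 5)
Your overall strategy is the same as the paper's: interpret $\tfrac1s\omega_r^s(Q_{XY})$ via Theorem~\ref{thm_rregion} as a support-function/intercept of the closed convex set $\mathcal{S}_r(X,Y)$ and read off $I_r(Q_{XY})$ as the corner where the key rate hits $I(X;Y)$ (this is exactly the picture in Figure~\ref{fig_omega}). The monotonicity observation ($\omega_r^s$ is a supremum of affine functions of $s$ with nonpositive values at $s=0$, so $\tfrac1s\omega_r^s$ is nondecreasing and the limit exists) is sound, and the easy direction — that $\inf_{U^r}\{S(Q_{XYU^r})-I(X;Y)+\tfrac1s[I(X;Y)-R(Q_{XYU^r})]\}\le I_r(Q_{XY})$ for every $s$, by plugging near-optimal $U^r$ from \eqref{e82} — goes through as you indicate.

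However, the decisive direction, namely $\sup_{s>0}\inf_{U^r}\{\cdots\}\ge I_r(Q_{XY})$ (equivalently, that the contact point of the supporting line does not converge to a suboptimal corner or escape along the face $R=I(X;Y)$), is never actually established. You correctly flag it as ``the main obstacle,'' but the argument you offer in its place does not work as written: the displayed identity for $g(s)$ contains a placeholder rather than a formula, the coordinate conventions for $(\bar S,\bar R)$ are inconsistent between the support-function computation and the identification $I_r=\inf\{\bar S:\phi(\bar S)\le0\}$ (with $\bar S=H(XY)-S$ that set is a half-line whose \emph{supremum}, not infimum, encodes $I_r$), and the invoked ``standard fact'' about $\lim_{s\downarrow0}\tfrac1s\inf_{\bar S}\{\phi(\bar S)+s\bar S\}$ is not correct as stated. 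The concrete way to close the gap — which is what the paper does — is a separation argument: for each $\epsilon>0$ the point $(I(X;Y)+I_r(Q_{XY})-\epsilon,\,I(X;Y))$ lies outside the closed convex set $\mathcal{S}_r(X,Y)$, so Hahn--Banach gives a separating line; because $\mathcal{S}_r(X,Y)$ is downward-closed in $R$, upward-closed in $S$, and contains points with $R=I(X;Y)$, this line must have a finite positive slope $s$, yielding $R(Q_{XYU^r})\le s\bigl(S(Q_{XYU^r})-I(X;Y)-I_r(Q_{XY})+\epsilon\bigr)+I(X;Y)$ for all $U^r$, hence the infimum in \eqref{e_85} at that particular $s$ is at least $I_r(Q_{XY})-\epsilon$. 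Combined with your monotonicity in $s$, this finishes the proof; without some such explicit separation step the proposal is incomplete.
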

\begin{proof}
From \eqref{e31}, the right side of \eqref{e80} equals
\begin{align}
&\quad\lim_{s\downarrow0}\inf_{U^r}
\left\{S(Q_{XYU^r})-I(X;Y)+\frac{1}{s}[I(X;Y)-R(Q_{XYU^r})]\right\}
\\
&=\sup_{s>0}\inf_{U^r}
\left\{S(Q_{XYU^r})-I(X;Y)+\frac{1}{s}[I(X;Y)-R(Q_{XYU^r})]\right\}.
\label{e_85}
\end{align}
From \eqref{e82}, the infimum in \eqref{e_85} is upper-bounded by $I_r(Q_{XY})$ for any $s$, establishing the $\ge$ part of \eqref{e80}. For the other direction, choose an arbitrary $\epsilon>0$.
Here, note that $\mathcal{S}_r(X,Y)$ is a closed convex subsect of $[0,\infty)\times [0,I(X;Y)]$, and it contains the line $[I(X;Y)+I_r(Q_{XY})]\times\{I(X;Y)\}$.
It is easy to check that $(I(X;Y)+I_r(Q_{XY})-\epsilon,I(X;Y))\notin\mathcal{S}_r(X,Y)$, so by the Hahn-Banach hyperplane separation theorem, there exists an $s>0$ such that for all $U^r$,
\begin{align}
R(Q_{XYU^r})\le s(S(Q_{XYU^r})-I(X;Y)-I_r(Q_{XYU^r})+\epsilon)+I(X;Y).
\end{align}
For such an $s$, the infimum in \eqref{e_85} is lower-bounded by $I_r(Q_{XY})-\epsilon$, as desired.
\end{proof}
Again, for non-discrete distributions we may choose a reference measure and replace the entropy with the relative entropy in the analysis, so a similar result holds, \emph{mutatis mutandis}. However, it should be pointed out that $I_r(Q_{XY})$ is \emph{usually} infinite for non-discrete sources, such as the Gaussian source.

Next we provide an even simpler characterization of the MIMK. Define
\begin{align}
\sigma_0(P_{XY}):=
\left\{
\begin{array}{cc}
  H(\hat{X},\hat{Y}) & I(\hat{X};\hat{Y})=0; \\
  -\infty & \textrm{otherwise},
\end{array}
\right.
\label{e_sigma}
\end{align}
where $(\hat{X},\hat{Y})\sim P_{XY}$.
For $r\in\{1,2,\dots\}$, define
\begin{align}
\sigma_r:=\left\{
\begin{array}{cc}
\env_X(\sigma_{r-1}) & \textrm{$r$ is odd};
\\
\env_Y(\sigma_{r-1}) & \textrm{$r$ is even},
\end{array}
\right.
\end{align}
and define $\sigma_{\infty}$ as the $XY$-concave envelope of $\sigma_0$.
Note that by \eqref{e_85},
\begin{align}
\lim_{s\downarrow0}\frac{1}{s}\omega_r^s(Q_{XY})
=\inf_{s>0}&\sup_{U^r}
\{H(X,Y)-S(Q_{XYU^r})
\nonumber
\\
&\left.-\frac{1}{s}[I(X;Y)-R(Q_{XYU^r})]\right\},
\label{e117}
\end{align}
where $U^r$ is finite and satisfies \eqref{e_markov1}-\eqref{e_markov2}.
In view of \eqref{e39}-\eqref{e40}, we can express $\sigma_r(Q_{XY})$ in a similar form:
\begin{align}
\sigma_r(Q_{XY})
&=\sup_{U^r}
\{H(X,Y)-S(Q_{XYU^r})\,|\,
\nonumber\\
&\quad\quad U^r\colon I(X;Y)-R(Q_{XYU^r})=0\}
\label{e118}
\\
&=\sup_{U^r}\inf_{s>0}
\{H(X,Y)-S(Q_{XYU^r})
\nonumber
\\
&\left.\quad\quad-\frac{1}{s}[I(X;Y)-R(Q_{XYU^r})]\right\}
\end{align}
where in \eqref{e118} notice that $I(X;Y)-R(Q_{XYU^r})=I(X;Y|U^r)$ is always nonnegative.
The next result ensures that in the finite alphabet case, we can indeed switch the order of the supremum and the infimum. As is often the case, compactness (in this case related to the finite alphabet assumption) guarantees such saddle point properties.
\begin{lem}\label{lem4}
Fix $Q_{XY}$ where $|\mathcal{X}|,|\mathcal{Y}|<\infty$. For any $P_{XY}\in \mathcal{P}(Q_{XY})$ and $r\in\{0,1,2,3,\dots,\infty\}$,
  \begin{align}
  \sigma_r(P_{XY})=\lim_{s\downarrow0}\frac{1}{s}\omega_r^s(P_{XY}).
  \label{e_ptwconverge}
  \end{align}
\end{lem}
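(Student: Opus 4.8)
The plan is to prove the two quantities in \eqref{e_ptwconverge} agree for all finite $r$ by induction on $r$, and then pass to the limit $r=\infty$. The base case $r=0$ is immediate: $\frac1s\omega_0^s(P_{XY})=H(\hat X,\hat Y)-\frac1sI(\hat X;\hat Y)$, and letting $s\downarrow 0$ gives $H(\hat X,\hat Y)$ if $I(\hat X;\hat Y)=0$ and $-\infty$ otherwise, which is exactly $\sigma_0(P_{XY})$. (Here one uses that $I(\hat X;\hat Y)\ge0$ always, so the $-\frac1sI$ term blows down to $-\infty$ precisely when the mutual information is strictly positive.)

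For the inductive step, suppose \eqref{e_ptwconverge} holds for $r-1$ at every $P_{XY}\in\mathcal{P}(Q_{XY})$; I want to deduce it for $r$. Using \eqref{e_concaux} to write the $X$- (or $Y$-) concave envelope as a supremum over a finite auxiliary, we have, for $r$ odd,
\begin{align}
\frac1s\omega_r^s(P_{XY})
&=\sup_{P_{U|X}}\sum_u P_U(u)\,\frac1s\omega_{r-1}^s\bigl(P_{XY|U=u}\bigr),
\\
\sigma_r(P_{XY})
&=\sup_{P_{U|X}}\sum_u P_U(u)\,\sigma_{r-1}\bigl(P_{XY|U=u}\bigr),
\end{align}
with the analogous formulas using $P_{U|Y}$ for $r$ even, and in both cases $P_{XY|U=u}\preceq_X P_{XY}$ (resp.\ $\preceq_Y$), hence $\preceq Q_{XY}$, so the induction hypothesis applies to each conditional law. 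The $\ge$ direction of \eqref{e_ptwconverge} is then easy: for any fixed $P_{U|X}$, monotone-convergence in $s$ (the map $s\mapsto\frac1s\omega_{r-1}^s$ is decreasing by $XY$-concavity/monotonicity considerations, or at least $\liminf$ suffices) lets one push $\lim_{s\downarrow0}$ inside the finite sum, giving $\sum_u P_U(u)\sigma_{r-1}(P_{XY|U=u})$; taking the supremum over $P_{U|X}$ yields $\sigma_r(P_{XY})\le\liminf_{s\downarrow0}\frac1s\omega_r^s(P_{XY})$. The reverse inequality $\limsup_{s\downarrow0}\frac1s\omega_r^s(P_{XY})\le\sigma_r(P_{XY})$ is the crux and is where finiteness of $\mathcal{X},\mathcal{Y}$ is needed: one picks, for each small $s$, a near-optimal $P_{U|X}^{(s)}$ in the first display; by finiteness of the alphabets the cardinality of $\mathcal U$ may be bounded uniformly (Carath\'eodory), so the $P_{U|X}^{(s)}$ live in a compact set and one extracts a convergent subsequence $P_{U|X}^{(s)}\to P_{U|X}^*$. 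Using semicontinuity of $XY$-concave functionals — the property the paper flags it will establish — together with the induction hypothesis, one shows $\sum_u P_U^{(s)}(u)\frac1s\omega_{r-1}^s(P_{XY|U=u}^{(s)})$ cannot exceed $\sum_u P_U^*(u)\sigma_{r-1}(P_{XY|U=u}^*)+o(1)\le\sigma_r(P_{XY})+o(1)$.

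The main obstacle is exactly this $\limsup$/compactness argument: one must control the behavior of the near-optimizers $P_{U|X}^{(s)}$ uniformly as $s\downarrow0$, ensuring both that their support sizes stay bounded and that the functional $P\mapsto\sigma_{r-1}(P)$ is upper semicontinuous along the relevant sequences (note $\sigma_{r-1}$ takes the value $-\infty$, so some care is needed that the near-optimal conditional laws do not drift toward the $\{I>0\}$ region where $\sigma_0=-\infty$ while $\frac1s\omega_0^s$ is merely large-but-finite). I expect the cited semicontinuity and convergence properties of $XY$-concave functions to do this work. Finally, for $r=\infty$: by Proposition~\ref{prop_exist}, $\omega_\infty^s$ is the pointwise decreasing limit of $\omega_r^s$ and $\sigma_\infty$ the pointwise decreasing limit of $\sigma_r$; since $\sigma_r(P_{XY})=\lim_{s\downarrow0}\frac1s\omega_r^s(P_{XY})$ for every finite $r$ and both families are monotone in $r$, one interchanges the two limits (monotone convergence in $r$, monotone in $s$) to conclude $\sigma_\infty(P_{XY})=\lim_{s\downarrow0}\frac1s\omega_\infty^s(P_{XY})$.
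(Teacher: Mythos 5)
Your base case and the easy inequality $\sigma_r(P_{XY})\le\lim_{s\downarrow0}\frac1s\omega_r^s(P_{XY})$ are fine, but there are two genuine gaps. The more serious one is your $r=\infty$ step. The two limits you propose to interchange point in opposite directions: $\sigma_r$ and $\omega_r^s$ are \emph{increasing} in $r$ (each step takes a further envelope, so $\sigma_r\ge\sigma_{r-1}$), not decreasing as you state, while $\frac1s\omega_r^s$ decreases as $s\downarrow0$. What you must show is therefore
$\inf_{s>0}\sup_r\frac1s\omega_r^s(P_{XY})\le\sup_r\inf_{s>0}\frac1s\omega_r^s(P_{XY})$,
a minimax interchange (the reverse inequality is the free one). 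This is precisely the nontrivial content of the lemma at $r=\infty$ --- it is why the paper develops the semicontinuity of $XY$-concave functions in Lemma~\ref{lem_semi} and the Cantor-intersection argument of Proposition~\ref{prop_conv}.\ref{l_conv2}) --- and it cannot be obtained ``by monotone convergence.''

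The second gap is in the hard direction of your finite-$r$ induction. After extracting a convergent subsequence of near-optimizers you need, for $s_k\downarrow0$ and $P^{(k)}\to P^*$, that $\limsup_k\frac1{s_k}\omega_{r-1}^{s_k}(P^{(k)})\le\sigma_{r-1}(P^*)$; by monotonicity in $s$ this reduces to \emph{upper} semicontinuity of $\omega_{r-1}^{s}$ for each fixed $s$. The semicontinuity the paper establishes (Lemma~\ref{lem_semi}.\ref{l3})) is \emph{lower} semicontinuity of $XY$-concave functions --- the wrong direction for your argument, though exactly the right one for the paper's, which instead runs a Dini-type argument (Proposition~\ref{prop_conv}.\ref{l_conv})): for each $\epsilon>0$ the nested compact sets $\{\env f+\epsilon-f_s\le0\}$ have empty intersection, so $\env f+\epsilon$ dominates some $f_s$ and hence $\env f_s$, with no compactness argument on auxiliaries at all. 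Your route does go through for $r=1$ because $\omega_0^s$ is continuous, but for $r\ge2$ you would have to prove upper semicontinuity of the marginal concave envelopes on $\mathcal{P}(Q_{XY})$, including at its boundary, which neither your sketch nor the cited lemma supplies.
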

\begin{proof}
The pointwise convergence \eqref{e_ptwconverge} trivially holds when $r=0$, in view of the definition \eqref{e_sigma}. For other values of $r$, the proof follows by induction, using the fact that $\omega^s_r(P_{XY})$ monotonically decreases in $s$ and Proposition~\ref{prop_conv}. Note that the nonnegativity assumption in Proposition~\ref{prop_conv} because $\frac{1}{s}\omega_r^s(P_{XY})=H(P_{XY})>0$ when either $\hat{X}$ or $\hat{Y}$ is constant.
\end{proof}
\begin{rem}
The proof for the $r<\infty$ case is quite brief; see Proposition~\ref{prop_conv}.\ref{l_conv}) in the Appendix~\ref{app_ptconv}.
The the $r=\infty$ case is slightly trickier,
and we prove that case by first establishing certain semicontinuity of the $XY$-concave functions in Lemma~\ref{lem_semi} in Appendix~\ref{app_cont}.
Remark that for fully supported $Q_{XY}$, the proof of such semicontinuity is simple since Lemma~\ref{lem_semi}.\ref{l1})-\ref{l2}) will become obvious; for general $Q_{XY}$ our proof can be viewed as a natural extension of the idea despite involving additional machineries.
\end{rem}
\begin{rem}
The famous von Neumann min-max theorem \cite{neumann1928} states that if $\mathcal{X}$ and $\mathcal{Y}$ are \emph{compact} convex sets in Euclidean spaces (of possibly different dimensions),
and $f\colon\mathcal{X}\times \mathcal{Y}\to\mathbb{R}$ is a continuous convex-concave function, then
\begin{align}
\min_{x\in\mathcal{X}}\max_{y\in\mathcal{Y}}
f(x,y)
=
\max_{y\in\mathcal{Y}}\min_{x\in\mathcal{X}}
f(x,y).
\end{align}
The quantity inside $\{\}$ in \eqref{e117} can be viewed as a function of $\frac{1}{s}$
and
$
(H(X,Y)-S(Q_{XYU^r}),\,I(X;Y)-R(Q_{XYU^r}))
$,
which is convex-concave (in fact, it is linear-linear).
In the case of finite $\mathcal{X}$ and $\mathcal{Y}$ and $r<\infty$, we can show that the union of
\begin{align}
[0,H(X,Y)-S(Q_{XYU^r})]
\times
[I(X;Y)-R(Q_{XYU^r}),I(X;Y)]
\end{align}
over finite $U^r$ satisfying \eqref{e_markov1}-\eqref{e_markov2} is compact (by the same argument showing that the convex combination of two compact sets in a Euclidean space is compact \cite{rockafellar1970convex}).
The other argument $\frac{1}{s}\in(0,\infty)$ does not have a compact domain, but this is not an essential obstacle when the objective function is linear-linear (for example, one can invoke other min-max theorems such as Lagrange duality \cite{rockafellar1970convex}).
This amounts to an alternative proof of Lemma~\ref{lem4} in the case of $r<\infty$.
\end{rem}

\begin{thm}\label{thm_MIMK}
If $Q_{XY}$ is a distribution on a finite alphabet, then for $r\in\{1,\dots,\infty\}$
\begin{align}
I_r(Q_{XY})=H(Y|X)+H(X|Y)-\sigma_r(Q_{XY}).
\end{align}
\end{thm}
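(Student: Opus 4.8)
The plan is to obtain Theorem~\ref{thm_MIMK} as a direct corollary of the two facts already in hand: the concave-envelope formula for the MIMK (Theorem~\ref{thm7}) and the identification of the small-$s$ limit of $\frac{1}{s}\omega_r^s$ with $\sigma_r$ (Lemma~\ref{lem4}). Schematically, I would chain
\[
I_r(Q_{XY})\;=\;H(X|Y)+H(Y|X)-\lim_{s\downarrow0}\frac{1}{s}\,\omega_r^s(Q_{XY})\;=\;H(X|Y)+H(Y|X)-\sigma_r(Q_{XY}),
\]
where the first equality is Theorem~\ref{thm7} and the second is Lemma~\ref{lem4} evaluated at $P_{XY}=Q_{XY}$.

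To make this rigorous I would check two small points. First, Theorem~\ref{thm7} applies to the stationary memoryless source whose per-letter law is $Q_{XY}$, and since $Q_{XY}$ is finitely supported there is no issue with the entropies on the right side; this gives the left equality above for every $r\in\{1,2,\dots,\infty\}$. Second, because the preorder $\preceq$ is reflexive, $Q_{XY}\in\mathcal{P}(Q_{XY})$, so Lemma~\ref{lem4} is legitimately invoked at the specific distribution $P_{XY}=Q_{XY}$, yielding $\lim_{s\downarrow0}\frac{1}{s}\omega_r^s(Q_{XY})=\sigma_r(Q_{XY})$ for the same range of $r$. The finiteness of $\mathcal{X}$ and $\mathcal{Y}$ is used only through this appeal to Lemma~\ref{lem4}, which is exactly where that hypothesis is needed.

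The conceptual work has therefore already been done, and it is useful to point to where. A more elementary route would start from \eqref{e82}: on the feasible set $R(Q_{XYU^r})=I(X;Y)$, i.e.\ $I(X;Y|U^r)=0$, one has $S(Q_{XYU^r})-I(X;Y)=H(X|Y)+H(Y|X)-H(X,Y|U^r)$ using \eqref{e39} and $H(X,Y)-I(X;Y)=H(X|Y)+H(Y|X)$, while $\sup\{H(X,Y|U^r)\colon I(X;Y|U^r)=0\}$ equals $\sigma_r(Q_{XY})$ by \eqref{e118} (itself a consequence of the auxiliary-to-concave-envelope correspondence \eqref{e_concaux} applied to $\sigma_0$, whose value $-\infty$ off the constraint enforces $I(X;Y|U^r)=0$). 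The delicate part in this direct route is that neither the infimum over $d$ in \eqref{e82} nor the suprema over $U^r$ need be attained, so one must use closedness of $\mathcal{S}_r(X,Y)$ and, for $r=\infty$, an exchange of a supremum and an infimum — and this is precisely what is absorbed by the Hahn--Banach separation step of Theorem~\ref{thm7} and by Proposition~\ref{prop_conv} / the semicontinuity of $XY$-concave functions (Lemma~\ref{lem_semi}) behind Lemma~\ref{lem4}. Consequently I do not expect any genuine obstacle in the proof of Theorem~\ref{thm_MIMK} itself; the only care needed is to verify the applicability hypotheses of the two results being composed.
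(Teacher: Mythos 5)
Your proof is correct and is exactly the paper's argument: Theorem~\ref{thm_MIMK} is obtained by composing Theorem~\ref{thm7} with Lemma~\ref{lem4} evaluated at $P_{XY}=Q_{XY}$. The hypothesis checks and the sketched alternative route are sound but add nothing beyond what the paper's one-line proof already relies on.
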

\begin{proof}
Immediate from Lemma~\ref{lem4} and Theorem~\ref{thm7}.
\end{proof}
\begin{rem}
In \cite{tyagi2013common} a quantity called ``interactive common information'' $CI_r(X\wedge Y)$ is defined, and the main result therein is that
\begin{align}
I_r(Q_{XY})=CI_r(X\wedge Y)-I(X;Y).
\end{align}
Hence $CI_r(X\wedge Y)$, $\sigma_r(Q_{XY})$ and $I_r(Q_{XY})$ are closely related.
\end{rem}
\begin{cor}\label{cor_binary}
If $X$ and $Y$ are both binary under $Q_{XY}$, then the necessary and sufficient condition for
\begin{align}
\min\{I_1(Q_{XY}),I_1(Q_{YX})\}=I_{\infty}(Q_{YX})
\label{e_77}
\end{align}
is that either $Q_{Y|X}$ or $Q_{X|Y}$ is a binary symmetric channel ($Q_X$-almost surely or $Q_Y$ almost surely), or $X\perp Y$.
\end{cor}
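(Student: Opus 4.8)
The plan is to translate \eqref{e_77} into a statement about the functional $\sigma_r$ via Theorem~\ref{thm_MIMK} and then analyze $\sigma_r$ for binary sources. Since $\sigma_0$ is symmetric in its two coordinates and the $XY$-concave envelope is insensitive to the order of the alternating marginal envelopes (Proposition~\ref{prop_exist}), $\sigma_\infty(Q_{YX})=\sigma_\infty(Q_{XY})$, hence $I_\infty(Q_{YX})=I_\infty(Q_{XY})$; also $I_1(Q_{XY})=H(Y|X)+H(X|Y)-\env_X\sigma_0(Q_{XY})$ and $I_1(Q_{YX})=H(Y|X)+H(X|Y)-\env_Y\sigma_0(Q_{XY})$. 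Thus \eqref{e_77} is equivalent to
\begin{align}
\sigma_\infty(Q_{XY})=\max\{\env_X\sigma_0(Q_{XY}),\,\env_Y\sigma_0(Q_{XY})\}.\label{e_redn}
\end{align}
If $X\perp Y$ then $\sigma_0(Q_{XY})=H(X,Y)$, and since $P_{XY}\mapsto H(\hat X,\hat Y)$ is $XY$-concave (being $X$-affine plus $X$-concave, and symmetrically in $Y$) and dominates $\sigma_0$, the right side of \eqref{e_redn} and $\sigma_\infty(Q_{XY})$ both equal $H(X,Y)$; so \eqref{e_77} holds, which is the ``$X\perp Y$'' branch of the asserted condition.

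Now assume $X\not\perp Y$; since $X,Y$ are binary this forces $Q_{Y|X=0}\ne Q_{Y|X=1}$ and $Q_{X|Y=0}\ne Q_{X|Y=1}$. The first step is to evaluate the one-round envelopes. By \eqref{e_concaux}, $\env_X\sigma_0(Q_{XY})=\sup_{P_{U|X}}\mathbb{E}[\sigma_0(Q_{XY|U})]$ over $U-X-Y$; finiteness of the expectation requires each atom $Q_{XY|U=u}(x,y)=Q_{X|U=u}(x)Q_{Y|X}(y|x)$ to factor, which (the rows $Q_{Y|X=0},Q_{Y|X=1}$ being distinct) forces $Q_{X|U=u}$ to be a point mass, so $\mathbb{E}[\sigma_0(Q_{XY|U})]=H(Y|X)$ for every admissible $U$. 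Hence $\env_X\sigma_0(Q_{XY})=H(Y|X)$ and, symmetrically, $\env_Y\sigma_0(Q_{XY})=H(X|Y)$, so \eqref{e_redn} becomes $\sigma_\infty(Q_{XY})=\max\{H(X|Y),H(Y|X)\}$. As $\sigma_\infty$ dominates $\sigma_1$ in either direction the inequality ``$\ge$'' is automatic, and it remains to characterize when ``$\le$'' holds.

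For necessity I would compute $\sigma_2(Q_{XY})=\env_Y(\env_X\sigma_0)(Q_{XY})$. Running the point-mass argument fibrewise shows that on $\mathcal{P}(Q_{XY})$ one has $\env_X\sigma_0(P_{XY})=H(\hat Y|\hat X)$ except where $\hat Y$ is deterministic, where it jumps up to $H(\hat X)$. Taking $\env_Y$ of this at $Q_{XY}$ amounts to a one-dimensional concave envelope along the $\preceq_Y$-segment through $Q_{XY}$ (which reweights $Q_Y$ and keeps $Q_{X|Y}$): $\sigma_2(Q_{XY})=\widehat\psi(Q_Y(0))$, where $\widehat\psi$ is the concave envelope on $[0,1]$ of the function $\psi$ equal, for $t\in(0,1)$, to $H(\hat Y|\hat X)$ at the distribution with conditional $Q_{X|Y}$ and $Y$-marginal $(t,1-t)$, and equal to $H(Q_{X|Y=1})$ at $t=0$, $H(Q_{X|Y=0})$ at $t=1$. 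Expanding into binary entropies, $\psi(t)$ minus the chord through its two endpoints equals $h(t)-h(\ell(t))$, where $h$ is the binary entropy and $\ell$ is the affine map $t\mapsto tQ_{X|Y}(0|0)+(1-t)Q_{X|Y}(0|1)$, and the chord evaluated at $Q_Y(0)$ is exactly $H(X|Y)$. Now $\ell$ fixes $\half$ iff $Q_{X|Y}(0|0)+Q_{X|Y}(0|1)=1$, i.e.\ iff $Q_{X|Y}$ is a BSC. If it is, $\ell$ is a contraction toward $\half$, so $h(\ell(t))\ge h(t)$ everywhere, the chord dominates $\psi$, and $\sigma_2(Q_{XY})=H(X|Y)$ (and one checks $H(X|Y)\ge H(Y|X)$ in this case). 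If $Q_{X|Y}$ is not a BSC, then $\ell(\half)\ne\half$, so $\psi(\half)$ strictly exceeds the chord at $\half$; perturbing the two-endpoint decomposition of $Q_Y$ to place a little mass at $t=\half$ then gives $\sigma_2(Q_{XY})>H(X|Y)$. Symmetrically, $\env_X(\env_Y\sigma_0)(Q_{XY})>H(Y|X)$ whenever $Q_{Y|X}$ is not a BSC. Since $\sigma_\infty$ dominates both of these double concavifications, if neither conditional is a BSC then $\sigma_\infty(Q_{XY})>\max\{H(X|Y),H(Y|X)\}$, i.e.\ \eqref{e_77} fails; this proves necessity.

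For sufficiency, suppose (without loss of generality) that $Q_{X|Y}$ is a BSC with crossover $\beta$, so $X=Y\oplus Z$ with $Z\sim\mathrm{Bern}(\beta)$ independent of $Y$, and $\max\{H(X|Y),H(Y|X)\}=H(X|Y)=H(Z)$. I would show $\sigma_\infty(Q_{XY})\le H(Z)$ by exhibiting an $XY$-concave functional on $\mathcal{P}(Q_{XY})$ that dominates $\sigma_0$ and takes the value $H(Z)$ at $Q_{XY}$ — equivalently, by showing that the alternating concave-envelope iteration has already stabilized at $Q_{XY}$ after a few steps. The natural route is an induction on the number of rounds, using that the family $\{P_{XY}:P_{X|Y}\text{ is }\mathrm{BSC}(\beta)\text{ on }\supp(P_Y)\}$ is preserved by $\preceq_Y$-reweightings, together with an a~priori bound $H(\hat X,\hat Y\mid V)\le H(Z)$ on that family. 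The main obstacle I anticipate is precisely that an $X$-round, i.e.\ a $\preceq_X$-reweighting, does not preserve the BSC structure, so the inductive invariant must be enlarged to one stable under both $\preceq_X$ and $\preceq_Y$ — most naturally phrased directly on $\mathcal{P}(Q_{XY})$ via the bijection $(X,Y)\leftrightarrow(Z,Y)$ — and establishing such an invariant (equivalently, constructing the explicit $XY$-concave majorant) is the crux. Once this is in hand, the sufficiency direction follows, and together with the necessity direction and the reduction \eqref{e_redn} this completes the proof.
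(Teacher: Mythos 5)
Your reduction of \eqref{e_77} to $\sigma_\infty(Q_{XY})=\max\{\xenv\sigma_0(Q_{XY}),\,\yenv\sigma_0(Q_{XY})\}$ is correct, and your necessity argument is complete and valid: the point-mass evaluation $\xenv\sigma_0(Q_{XY})=H(Y|X)$ (forced because distinct rows of $Q_{Y|X}$ make every non-degenerate $\preceq_X$-component have positive mutual information), the identification of $\sigma_2(Q_{XY})$ with a one-dimensional concave envelope along the $\preceq_Y$-segment, and the observation that the defect $\psi(t)-\mathrm{chord}(t)=h(t)-h(\ell(t))$ is strictly positive at $t=\half$ unless $\ell$ fixes $\half$ (i.e.\ unless $Q_{X|Y}$ is a BSC) together give $\sigma_2(Q_{XY})>H(X|Y)$, hence $\sigma_\infty>\max\{\cdot\}$, whenever neither conditional is a BSC. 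This is arguably a cleaner route than the paper's, which proves necessity by explicitly parameterizing $\mathcal{P}(Q_{XY})$ and treating the fully supported and three-point-support cases separately; your chord argument handles both uniformly and exposes the BSC condition directly as the fixed-point condition $\ell(\half)=\half$.

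The sufficiency direction, however, is genuinely incomplete, and you have put your finger on exactly the missing piece. Computing $\sigma_1$ or $\sigma_2$ at $Q_{XY}$ only bounds $\sigma_\infty$ from \emph{below}; to conclude $\sigma_\infty(Q_{XY})\le h(\epsilon)$ one must exhibit an $XY$-concave functional that dominates $\sigma_0$ on all of $\mathcal{P}(Q_{XY})$ and equals $h(\epsilon)$ at $Q_{XY}$, and no invariant stable under both $\preceq_X$- and $\preceq_Y$-reweightings is actually produced in your sketch. The paper closes this gap with an explicit construction: parameterizing $\mathcal{P}(Q_{XY})$ by $(f,g)$ as in \eqref{e_parameterize} (with $Q_{Y|X}$ a BSC of crossover $\epsilon$), it verifies that
\begin{align}
\ell(f,g)=h(\epsilon)+\frac{c\bigl(f-\half\bigr)\bigl(g-\half\bigr)}{f*\bar{g}*\epsilon}
\end{align}
is both $X$-linear and $Y$-linear (hence $XY$-concave), and that for a suitable constant $c$ it dominates $\sigma_0$; since $\sigma_0=-\infty$ off the boundary $fg\bar{f}\bar{g}=0$, the domination reduces to the supporting-line inequality \eqref{e128} for the concave binary entropy, tangent at the equiprobable point. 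Because $g=\half$ at $Q_{XY}$, this gives $\sigma_\infty(Q_{XY})\le\ell(f,\half)=h(\epsilon)$ and hence equality. Without this (or an equivalent) majorant, your argument establishes only the necessity half of the corollary.
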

\begin{rem}
Tyagi \cite{tyagi2013common} introduced a concept called ``interactive common randomness'' and showed its relation to the minimum rate of interactive communication needed to generate the maximum amount of key. Then by drawing an elegant connection to sufficient statistics, Tyagi \cite[Theorem~9]{tyagi2013common} proved Corollary~\ref{cor_binary} in the case of binary symmetric $Q_{XY}$, and conjectured that \eqref{e_77} holds for all binary sources. Here we provide the necessary and sufficient condition for the conjecture to hold following an entirely different approach.
\end{rem}
\begin{proof}[Proof of Corollary~\ref{cor_binary}]
The sufficiency is relatively easy to prove.
If $X$ and $Y$ are independent under $Q_{XY}$,
then \eqref{e_77} trivially holds since both sides equal $0$.
Next, suppose that $Q_{Y|X}$ is a BSC with crossover probability $\epsilon\in[0,1/2)$
and that $Q_X$ is fully supported.
(If $\epsilon=1/2$ or $Q_X$ is not fully supported then $X\perp Y$.
Also the $\epsilon\in(1/2,1]$ case can be argued similarly by symmetry.)
Then
any $P_{XY}\in\mathcal{P}(Q_{XY})$ can be parameterized by $f,g\in [0,1]$ as\footnote{We use the notation $\bar{x}:=1-x$ for $x\in[0,1]$,
and $a*b:=\bar{a}b+a\bar{b}$ denotes the binary convolution of $a,b\in [0,1]$.}
    \begin{align}\label{e_parameterize}
\mb{P}_{XY}=\frac{1}{Z}\left(
                                \begin{array}{cc}
                                  \bar{\epsilon}\bar{f}\bar{g} & \epsilon\bar{f}g \\
                                  \epsilon f\bar{g} & \bar{\epsilon}fg \\
                                \end{array}
                              \right)
\end{align}
with the normalization constant
\begin{align}
Z:=f*g*\bar{\epsilon}.
\end{align}
That is, there exists a one-to-one correspondence from $(f,g)\in[0,1]^2$ to $P_{XY}\in\mathcal{P}(Q_{XY})$.
Let $\pi$ be such a bijection, and $\pi_X(f,g)$ (resp.~$\pi_Y(f,g)$) be the $X$-marginal (resp.~$Y$-marginal) of $\pi(f,g)$.
To avoid cumbersome notations,
for a functional $\sigma$ on $\mathcal{P}(Q_{XY})$
we will sometimes
abbreviate $\sigma(\pi(f,g))$
as $\sigma(f,g)$, but keep in mind that concavity are always with respect to the probability distributions rather than $(f,g)$.

For fixed $\epsilon\in [0,1/2)$, we claim that the functional\footnote{In this paper, $h(\epsilon):=\epsilon\log\frac{1}{\epsilon}+(1-\epsilon)
\log\frac{1}{1-\epsilon}$ denotes the binary entropy function.}
$\ell\colon[0,1]^2\to \mathbb{R}$,
\begin{align}
\ell(f,g)= h(\epsilon)+\frac{c(f-\frac{1}{2})
(g-\frac{1}{2})}{f*\bar{g}*\epsilon},
\end{align}
is $XY$-linear on $\mathcal{P}(Q_{XY})$ and
\begin{align}
\sigma_0(f,g)\le \ell(f,g),
\quad\forall (f,g)\in[0,1]^2,
\label{e_sigmal}
\end{align}
hence $\ell$ also upper bounds $\sigma_{\infty}$.
Indeed, since
\begin{align}
\mb{P}_X
=\frac{1}{Z}
\left(
\begin{array}{c}
  \bar{f}(\epsilon*\bar{g}) \\
  f(\epsilon*g)
\end{array}
\right),
\end{align}
we have
\begin{align}
\ell(f,g)
=h(\epsilon)
+\frac{c(g-\frac{1}{2})}{2}
\left[
\frac{P_X(1)}{\epsilon*g}
-\frac{P_X(0)}{\epsilon*\bar{g}}
\right].
\end{align}
Hence for fixed $g$, we see that $\ell(f,g)$ is linear in $P_X$, meaning that $\ell$ is $X$-concave.
By symmetry, $\ell$ is also $Y$-concave.
Next observe that
there exists a real number $c$ such that
\begin{align}
h\left(\frac{\epsilon g}{\bar{g}*\epsilon}\right)
&\le h(\epsilon)-\frac{c}{2}\cdot\frac{g-\frac{1}{2}}{\bar{g}*\epsilon}
\label{e128}
\\
&=
h(\epsilon)-\frac{c}{4\epsilon}+\frac{c}{4\epsilon\bar{\epsilon}}
\cdot
\frac{\bar{g}\bar{\epsilon}}{\bar{g}*\epsilon}
\label{e129}
\end{align}
for all $g\in[0,1]$.
Indeed, when viewed as
as functions of the binary distribution $\left(\frac{\epsilon g}{\bar{g}*\epsilon},
\frac{\bar{\epsilon} \bar{g}}{\bar{g}*\epsilon}\right)$,
the right side of \eqref{e129} is linear
whereas the left side of \eqref{e128} is concave,
and for any $c\in\mathbb{R}$ both functions have the same evaluation at the equiprobable distribution $(\frac{1}{2},\frac{1}{2})$.
Now, since $\sigma_0(f,g)=-\infty$ if $fg\bar{f}\bar{g}>0$,
where $\sigma_0$ was defined in \eqref{e_sigma},
to establish \eqref{e_sigmal} we only need to consider the case where $fg\bar{f}\bar{g}=0$.
However,
when $f=0$ we have
$\sigma_0(f,g)=h\left(\frac{\epsilon g}{\bar{g}*\epsilon}\right)$,
hence \eqref{e_sigmal} is reduced to
\eqref{e129}.
Similarly, \eqref{e_sigmal} also holds for the cases of $f=1$, $g=0$, or $g=1$ cases,
establishing the claim.

Since $Q_{Y|X}$ is BSC with crossover probability $\epsilon$, the parameter $g=\frac{1}{2}$ for $Q_{XY}$, hence
\begin{align}
h(\epsilon)
&\le \sigma_1(Q_{XY})
\label{e135}
\\
&\le \sigma_{\infty}(Q_{XY})
\\
&\le \ell\left(f,\frac{1}{2}\right)
\\
&=h(\epsilon)
\label{e97}
\end{align}
where \eqref{e135} can be seen from
\begin{align}
\sigma_1(Q_{XY})
\ge
\frac{1}{2}\sigma_0(P^0_{XY})
+\frac{1}{2}\sigma_0(P^1_{XY})
=h(\epsilon)
\end{align}
where
\begin{align}
\mb{P}^0_{XY}
:=\left(
\begin{array}{cc}
  \bar{\epsilon} & \epsilon \\
  0 & 0 \\
\end{array}
\right)
\end{align}
and
\begin{align}
\mb{P}^0_{XY}
=\left(
\begin{array}{cc}
  0 & 0 \\
  \epsilon & \bar{\epsilon} \\
\end{array}
\right).
\end{align}
Thus $\sigma_1(Q_{XY})=\sigma_{\infty}(Q_{XY})$, which, by Theorem~\ref{thm_MIMK}, implies that
\eqref{e_77} holds.

To show necessity, notice first that $|\supp(Q_{XY})|=1$ or $2$ are trivial cases.
Indeed if $X=Y$ or $X\perp Y$ (which includes the case where either $X$ or $Y$ is deterministic) then both sides of \eqref{e_77} are zero.
Then there are only two remaining cases:
\begin{enumerate}
\item $Q_{XY}$ is fully supported.
In this case we assume that there exists $\epsilon\in(0,\tfrac{1}{2})\cup(\tfrac{1}{2},1)$ such that $\mathcal{P}(Q_{XY})$ can again be parameterized as \eqref{e_parameterize}.
(If $\epsilon=1/2$ then $X\perp Y$, and \eqref{e_77} trivially holds.)
Observe that
\begin{align}
\pi_X(f,g)=\left(1-\frac{f(\epsilon*g)}{f*\bar{\epsilon}*g},
\frac{f(\epsilon*g)}{g*\bar{\epsilon}*f}\right)
\end{align}
so it is straightforward to check that the solution of $\lambda\in[0,1]$ to
\begin{align}
\lambda\pi_X(1,g)+\bar{\lambda}\pi_X(0,g)=\pi_X\left(\half,g\right)
\end{align}
is given by
\begin{align}
\lambda=\epsilon*g.
\end{align}
Then by definition,
\begin{align}
\sigma_1\left(\half,g\right)
&= \bar{\lambda}\sigma_0(0,g)+\lambda\sigma_0(1,g)
\label{e145}
\\
&=\bar{\lambda}h\left(\frac{\epsilon g}{\bar{\epsilon}*g}\right)
+\lambda h\left(\frac{\epsilon \bar{g}}{\epsilon *g}\right)
\\
&=-h(\epsilon*g)+h(\epsilon)+h(g)
\\
&\le h(\epsilon)
\label{e104}
\\
&= \sigma_2\left(\half,g\right)
\label{e105}
\end{align}
where
\begin{itemize}
\item \eqref{e145} follows since $\sigma_0(f,g)=-\infty$ when $fg\bar{f}\bar{g}>0$,
which implies that when computing $\xenv(\sigma_0)$ only the boundary points ($f=0$ or $1$) will play a role.
\item To see \eqref{e105},
 notice that
 \begin{align}
 \sigma_2\left(\frac{1}{2},g\right)
 &=\yenv\sigma_1\left(\frac{1}{2},g\right)
 \\
 &\ge \yenv\sigma_0\left(\frac{1}{2},g\right)
 \\
 &\ge \xenv\sigma_0\left(g,\frac{1}{2}\right)
 \label{e152}
 \\
 &= \sigma_1\left(g,\frac{1}{2}\right)
 \label{e153}
 \end{align}
 and
 \begin{align}
 \sigma_2\left(\frac{1}{2},g\right)
 &\le \sigma_{\infty}\left(\frac{1}{2},g\right)
 \\
 &\le \sigma_{\infty}\left(g,\frac{1}{2}\right)
 \label{e155}
 \end{align}
 where \eqref{e152} and \eqref{e155} follow from the symmetry of $\sigma_0$ and $\sigma_{\infty}$, respectively.
 However, in \eqref{e135}-\eqref{e97} we have shown that the right sides of \eqref{e153} and \eqref{e155} are both equal to $h(\epsilon)$.
\end{itemize}
Note that for fixed $g\in(0,1)$, $\sigma_1(\cdot,g)$ is $X$-linear and $\sigma_3(\cdot,g)$ is $X$-concave.
If $g\neq \half$,
then \eqref{e104} becomes an identity,
which implies that $\sigma_3\left(\half,g\right)
\ge \sigma_2\left(\half,g\right)
>\sigma_1\left(\half,g\right)$.
This combined with the fact that
$\sigma_3(\cdot,g)\ge \sigma_1(\cdot,g)$
shows that, in fact,
\begin{align}
\sigma_3(\cdot,g)> \sigma_1(\cdot,g)
\end{align}
except possibly at the endpoints (i.e.~when $f\in\{0,1\}$). In sum, we have shown
\begin{align}
\sigma_3(f,g)> \sigma_1(f,g)
\end{align}
except when $f\in\left\{0,\half,1\right\}$ or $g\in\left\{0,\half,1\right\}$. In other words, if neither $Q_{Y|X}$ nor $Q_{X|Y}$ is a BSC, then
\begin{align}
\sigma_{\infty}(Q_{XY})\ge\sigma_3(Q_{XY})> \sigma_1(Q_{XY}),
\end{align}
and by symmetry, we also have
\begin{align}
\sigma_{\infty}(Q_{XY})\ge\sigma_3(Q_{YX})> \sigma_1(Q_{YX})
\end{align}
which implies that the left side of \eqref{e_77} is strictly larger than the right side.

\item $|\supp(Q_{XY})|=3$. Assume without loss of generality that $Q_{XY}(0,0)=0$. We can parameterize $\mathcal{P}(Q_{XY})$ with $f,g$ via the map
    \begin{align}
    \pi\colon [0,1]^2\setminus\{(0,0)\}&\to \Delta(\mathcal{X}\times \mathcal{Y})
    \\
    (f,g)&\mapsto
    \frac{1}{f+\bar{f}g}
    \left[
    \begin{array}{cc}
      0 & \bar{f}g \\
      f\bar{g} & fg
    \end{array}
    \right].
    \end{align}
    Observe that
    \begin{align}
    \pi_X(f,g)=\left(1-\frac{f}{f+\bar{f}g},
    \frac{f}{f+\bar{f}g}\right),
    \end{align}
    so it is straightforward to check that the solution of $\lambda\in[0,1]$ to
\begin{align}
\lambda\pi_X(1,g)+\bar{\lambda}\pi_X(0,g)
=\pi_X\left(f,g\right)
\end{align}
is given by
\begin{align}
\lambda=\frac{f}{f+\bar{f}g}.
\end{align}
Thus,
\begin{align}
\sigma_1(f,g)&=\lambda\sigma_0(1,g)
+\bar{\lambda}\sigma_0(0,g)
\\
&=\lambda\sigma_0(1,g)
\\
&=\frac{f}{f+\bar{f}g}h(g).
\label{e167}
\end{align}
Next, for fixed $f\neq 0$ put $g=g_f(x):=\frac{xf}{1-\bar{f}{x}}$.
By \eqref{e167},
\begin{align}
\sigma_1(f,g_f(x))=(1-\bar{f}x)h\left(\frac{fx}{1-\bar{f}x}\right).
\end{align}
Then for $x\in(0,1)$,
\begin{align}
\frac{{\rm d}^2}{{\rm d}x^2}\sigma_1(f,g_f(x))
=-\frac{f}{x(1-x)(1-\bar{f}x)}<0.
\label{e169}
\end{align}
Since
\begin{align}
\pi_Y(f,g_f(x))
&=\left(1-\frac{g_f(x)}{f+g_f(x)\bar{f}},\,
\frac{g_f(x)}{f+g_f(x)\bar{f}}\right)
\\
&=(1-x,x),
\end{align}
\eqref{e169} implies that $\sigma_1(f,\cdot)$ is \emph{strictly} $Y$-concave for $f\neq0$.
Now suppose there exist some $(f_0,g_0)$ for which $f_0g_0\bar{f}_0\bar{g}_0\neq0$ such that
\begin{align}
\sigma_1(f_0,g_0)=\sigma_{\infty}(f_0,g_0).
\label{e130}
\end{align}
Since $\sigma_1(\cdot,g_0)$ is linear (caution: in the distribution rather than in $f$), $\sigma_{\infty}(\cdot,g_0)$ is concave, and both functions agree on the endpoints (which can be seen from the fact that taking marginal concave envelopes does not change the boundary values), \eqref{e130} implies that, actually,
\begin{align}
\sigma_1(\cdot,g_0)=\sigma_{\infty}(\cdot,g_0),
\end{align}
and in particular $\sigma_{\infty}(\cdot,g_0)$ is linear. By symmetry, $\sigma_{\infty}(g_0,\cdot)$ is also linear. This is a contradiction since $\sigma_{\infty}(g_0,\cdot)$ and $\sigma_1(g_0,\cdot)$ agree at two points $g=0,g_0$ but the former linear function dominates the latter strictly concave function. Thus \eqref{e130} is impossible, and in particular, we conclude by symmetry that
\begin{align}
\sigma_1(Q_{XY})&<\sigma_{\infty}(Q_{XY});
\\
\sigma_1(Q_{YX})&<\sigma_{\infty}(Q_{YX})=\sigma_{\infty}(Q_{XY}),
\end{align}
as desired.
\end{enumerate}
\end{proof}

\section{Concluding Remarks}
When the communication rate is very low or high enough to achieve the maximum key rate,
Theorem~\ref{lem2} and Corollary~\ref{cor_binary} imply that allowing interaction does not increase
the maximum key rate achievable in the one-way communication scheme for BSS.
These two facts naturally lead to:
\begin{conjecture}\label{conj1}
For a binary symmetric source $(X,Y)$,
\begin{align}
\mathcal{S}_1(X,Y)=\mathcal{S}_{\infty}(X,Y).
\end{align}
\end{conjecture}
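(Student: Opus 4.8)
The plan is to route the conjecture through Theorem~\ref{thm_rregion} and then attack it by the same ``sandwich by an explicit marginally concave functional'' device used in the proof of Corollary~\ref{cor_binary}. Let $Q_{XY}$ be the BSS with crossover probability $\epsilon$; we may assume $\epsilon\in(0,\tfrac12)$. Both $\mathcal{S}_1(X,Y)$ and $\mathcal{S}_\infty(X,Y)$ are closed convex subsets of $[0,H(X,Y)]\times[0,I(X;Y)]$, so they coincide iff they have the same support function in every direction; by Theorem~\ref{thm_rregion} this is exactly the statement that $\omega_1^s(Q_{XY})=\omega_\infty^s(Q_{XY})$ for every $s>0$, with $\omega_r^s$ as in \eqref{e29}. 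Since $\omega_1^s\le\omega_\infty^s=\xyenv(\omega_0^s)$ always holds, the whole conjecture reduces to the reverse inequality, and for this it suffices to exhibit, for each $s>0$, an $XY$-concave functional $\phi^s$ on $\mathcal{P}(Q_{XY})$ with $\phi^s\ge\omega_0^s$ pointwise and $\phi^s(Q_{XY})=\omega_1^s(Q_{XY})$: an $XY$-concave functional is its own $XY$-concave envelope, so then $\omega_\infty^s(Q_{XY})=\xyenv(\omega_0^s)(Q_{XY})\le\xyenv(\phi^s)(Q_{XY})=\phi^s(Q_{XY})=\omega_1^s(Q_{XY})$.

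The first step is to pin down the target value $\omega_1^s(Q_{XY})=\xenv(\omega_0^s)(Q_{XY})$ in closed form. Using \eqref{e_concaux} together with the observation that $\mathcal{P}_X(Q_{XY})$ consists precisely of the joint distributions whose $X\to Y$ channel is still $\mathrm{BSC}(\epsilon)$, one gets $\omega_0^s(Q_{XY|U=u})=sH(X|U=u)-H(Y|U=u)+(s+1)h(\epsilon)$ for $U-X-Y$, hence
\[
\omega_1^s(Q_{XY})=(s+1)h(\epsilon)+\sup_{U-X-Y}\bigl[\,sH(X|U)-H(Y|U)\,\bigr].
\]
Mrs.\ Gerber's Lemma resolves the supremum: for a BSS, $H(Y|U)\ge h\bigl(h^{-1}(H(X|U))*\epsilon\bigr)$ with the right side a convex increasing function of $H(X|U)$, and equality is attained by taking $U$ binary with $X$ a $\mathrm{BSC}(\alpha)$ corruption of $U$. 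This leaves the explicit one-dimensional problem
\[
\omega_1^s(Q_{XY})=(s+1)h(\epsilon)+\max_{\alpha\in[0,1/2]}\bigl[\,sh(\alpha)-h(\alpha*\epsilon)\,\bigr],
\]
which is the one-way value of \cite{ahlswede1993common}; the maximizing $\alpha$ also identifies the extremal one-way scheme, which is what should guide the choice of $\phi^s$.

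The crux is the construction of $\phi^s$. Parameterize $\mathcal{P}(Q_{XY})$ by $(f,g)\in[0,1]^2$ exactly as in \eqref{e_parameterize}, so the BSS is the point $(f,g)=(\tfrac12,\tfrac12)$ and $P_{Y|X}$ (resp.\ $P_{X|Y}$) is again a BSC precisely when $g=\tfrac12$ (resp.\ $f=\tfrac12$). One natural candidate is $\phi^s=\xenv(\omega_0^s)$ itself, which is $X$-concave and dominates $\omega_0^s$ by construction; one then has to show it is moreover $Y$-concave on $\mathcal{P}(Q_{XY})$ --- equivalently, that the alternating envelope sequence of Proposition~\ref{prop_exist} already stabilizes after one step for the BSS base distribution. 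An alternative, in the spirit of the $XY$-\emph{linear} bound $\ell(f,g)$ used for the MIMK conjecture, is to guess a simpler closed form $\phi^s(f,g)$, built out of binary entropies of the relevant crossover parameters (of the form $\alpha*g*\epsilon$) and arranged to be tangent to $\omega_0^s$ along the support set of the extremal one-way scheme, and then verify its $X$- and $Y$-concavity directly after the change of variables --- as in the passage around \eqref{e169} --- that turns each marginal into an affine coordinate.

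The step I expect to be the main obstacle is exactly this marginal-concavity verification, and crucially it has to hold for all $s>0$ simultaneously. In the MIMK setting the dominating functional was $XY$-linear, so both concavity checks were automatic; here $\phi^s$ is genuinely curved, and its closed form involves the binary entropy function composed with rational functions of $(f,g,\epsilon)$, so $Y$-concavity (concavity in $P_Y$ for fixed $f$) becomes a nontrivial two-variable inequality, presumably reducible to a Mrs.-Gerber-type convexity statement or to a sign analysis of a Hessian. That the paper reports only numerical and partial analytic evidence is consistent with this being the genuine sticking point: the two endpoints of the boundary curves of $\mathcal{S}_1$ and $\mathcal{S}_\infty$ and their slope at the origin are already matched (by Corollary~\ref{cor_binary} and Theorem~\ref{lem2} respectively), but controlling the entire curve is what remains open.
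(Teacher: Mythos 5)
First, be aware that the statement you were asked to prove is Conjecture~\ref{conj1}: the paper does \emph{not} prove it, but only reduces it (in Appendix~\ref{app_ineq}) to a four-parameter inequality, Conjecture~\ref{conj2}, which is verified numerically and proved analytically only in the limit $\epsilon\to\frac{1}{2}$ (Appendix~\ref{app_e12}). So there is no complete proof for your attempt to be measured against, and your proposal---which likewise stops at an unverified inequality---is appropriately honest about that.

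Your reduction coincides with the paper's: by Theorem~\ref{thm_rregion} the conjecture is equivalent to $\omega_1^s(Q_{XY})=\omega_\infty^s(Q_{XY})$ for all $s>0$; the value $\omega_1^s(Q_{XY})$ is pinned down by Mrs.~Gerber's Lemma (the concavity of $x\mapsto h(\epsilon*h^{-1}(x))$, as the paper notes); and the missing half is an $XY$-concave majorant of $\omega_0^s$ on $\mathcal{P}(Q_{XY})$ whose value at $(f,g)=(\frac{1}{2},\frac{1}{2})$ equals the one-way optimum. Where you and the paper diverge is the choice of majorant. You expect $\phi^s$ to be ``genuinely curved'' and flag the verification of $X$- and $Y$-concavity as the main obstacle; the paper instead takes the \emph{$XY$-linear} ansatz $\chi(f,g)=A+\frac{c}{Z}(f-\frac{1}{2})(g-\frac{1}{2})$ in the coordinates \eqref{e_parameterize}, with $A$ and $c$ fixed by tangency to $\omega_0^s$ at the four points \eqref{p1}--\eqref{p4} realized by the extremal one-way schemes in the two directions. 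Marginal linearity is then checked by inspection (exactly as for $\ell(f,g)$ in the proof of Corollary~\ref{cor_binary}); $\chi(Q_{XY})$ automatically equals the one-way value because the BSS is the symmetric convex combination of the two tangency points along an $X$-fiber; and the entire difficulty is relocated to the single pointwise domination inequality $\chi\ge\omega_0^s$, i.e.\ \eqref{e46}. This is a genuine simplification over your plan, and it also exposes a weak spot in your fallback: proposing $\phi^s=\xenv(\omega_0^s)$ and ``then showing it is $Y$-concave'' is not a lemma you can hope to verify independently---it is literally equivalent to the stabilization after one step of the alternating envelope sequence of Proposition~\ref{prop_exist}, i.e.\ to the conjecture itself. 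If you pursue this further, adopt the $XY$-linear ansatz and concentrate on \eqref{e46}; note that even then the conjecture remains open, since the paper establishes \eqref{e46} only numerically and in the $\epsilon\to\frac{1}{2}$ regime.
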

For a BSS $(X,Y)$ and under
the A$\to$B unidirectional protocol,
the optimal key-rate--communication-rate is $(I(U;Y),I(U;X)-I(U;Y))$, parameterized by the symmetric Bernoulli auxiliary random variable $U$ satisfying $U-X-Y$. The optimality of such auxiliary random variable can be shown using the concavity of the function $x\mapsto h(\epsilon*h^{-1}(x))$; see also \cite{wyner1973} \cite{witsenhausen1975conditional} or the proof of Proposition~5.3 in \cite{chou2012separation}. What is less obvious is that such a scheme is also optimal among protocols allowing interactions, as Conjecture~\ref{conj1} postulates. If Conjecture~\ref{conj1} holds, then $\mathcal{S}_1(X',Y')=\mathcal{S}_{\infty}(X',Y')$ for any $Q_{X'Y'}$ such that $Q_{Y'|X'}$ is a BSC, and in fact $\mathcal{S}_{\infty}(X',Y')$ will be the intersection between a translation of $\mathcal{S}_r(X,Y)$ and the first quadrant. In Appendix~\ref{app_ineq}, we argue that Conjecture~\ref{conj1} is implied by a conjectured inequality involving four parameters, whose validity has been supported by reasonably extensive numerical computations.

We hope that some of the methods in this paper can be useful beyond the scope of the key generation problem in this paper, we hope some of our methods to become useful in other areas. For example, we have already seen that the
$XY$-absolute continuity framework allows us to define the strong data processing constant directly from a joint distribution without worrying about the technical difficulty of determining the conditional distribution from the joint
in non-standard measurable spaces.
The newly introduced symmetric strong data processing constant (Definition~\ref{defn_sinfty}) has a concave envelope definition very similar to the conventional strong data processing constant, and its significance
is worth exploring
in other contexts as well as its properties.
The techniques used for analyzing the concave envelope characterization, such as expressing the MIMK as a limit as the slope of the supporting line vanishes in Theorem~\ref{thm7} and the minimax result for finite-alphabet distributions
in Lemma~\ref{lem4} (based on fundamental properties of $XY$-concave envelopes in Appendix~\ref{app_cont}-\ref{app_ptconv})
are likely to be useful in the related interactive source coding problem or the broader area of interactive function computation originally studied in \cite{yao1979}, and which has gained recent popularity
among the theoretical computer science community \cite{braverman2012coding}\cite{braverman2012interactive}\cite{braverman2014information}\cite{ganor2014exponential}\cite{braverman2015}.

\section{Acknowledgements}
We cordially thank B.~Ghazi and T.S.~Jayram for pointing out a gap in the proof of
a strong converse bound for CR bits per interaction bit with unlimited rounds of communications
which appeared in an earlier draft (see also \cite[Theorem~5]{ISIT_lcv_interactive}).
The gap was due to the fact that the right side of \eqref{e_ub_cr} does not satisfy a tensorization property (unlike related quantities such as $s_1^*$, $s_{\infty}^*$ and $\rho_{\rm m}$).
This work was supported by the NSF under Grants
CCF-1319304,
CCF- 1116013,
CCF-1319299,
CCF-1350595,
and the Air Force Office of Scientific Research under
Grant FA9550-15-1-0180.

\appendices
\section{Proof of Theorem~\ref{thm_region}:
Achievability}\label{app_region}
We are given a stationary memoryless source $Q_{\sf XY}$ and random transformations $(Q_{{\sf U}_i|{\sf U}^{i-1}{\sf X}})_{i\in\mathcal{O}^r}$ and $(Q_{{\sf U}_i|{\sf U}^{i-1}{\sf Y}})_{i\in\mathcal{E}^r}$.
For convenience, we assume without loss of generality that $r$ is even.
At blocklength $n$,
we consider
\begin{align}
Q_{XY}&:=Q_{\sf XY}^{\otimes n};
\\
Q_{U_i|U^{i-1}X}
&:=Q_{{\sf U}_i|{\sf U}^{i-1}{\sf X}}^{\otimes n},
\quad i=1,3,5,\dots,r-1;
\\
Q_{U_i|U^{i-1}Y}
&:=Q_{{\sf U}_i|{\sf U}^{i-1}{\sf Y}}^{\otimes n},
\quad i=2,4,6,\dots,r.
\end{align}
Notice that we have used the \textsf{sans serif} font to indicate per-letter distributions, and roman font to indicate the $n$-letter extensions.
We first generate a codebook ${\bf u}_1$ of size $M_1M'_1$,
indexed by the multi-index $\mb{w}_1\in\{1,\dots,M_1\}\times\{1,\dots,M'_1\}$,
where each codeword is i.i.d.~according to $Q_{U_1}$.
Then for each codeword $u_1^{\mb{w}_1}$,
generate a codebook ${\bf u}_2^{\mb{w}_1}$ of size $M_2M'_2$ where each codeword is i.i.d.~according to $Q_{U_2|U_1=u_1^{\mb{w}_1}}$.
Likewise, for each $u_2^{\mb{w}_1\mb{w}_2}$ generate the codebook ${\bf u}_3^{\mb{w}_1\mb{w}_2}$, and so on.
The messages sent are $w_1,w_2,w_3\dots$. The receivers have to decode $\hat{w}'_1,\hat{w}'_2,\dots$ which are reconstructions of the unsent part of the indices.
Therefore the reconstructed multi-indices are
\begin{align}
\hat{\bf w}_i&:=(w_i,\hat{w}'_i);
\end{align}
for $i\in\{1,\dots,r\}$.

To see how the likelihood encoder \cite{song} operates, consider the first round first.
Let $\hat{P}_{{\bf W}_1}$ be a distribution under which
${\bf W}_1=(W_1,W'_1)$ is equiprobably distributed on $\{1,\dots,M_1\}\times\{1,\dots,M'_1\}$, and put
\begin{align}
\hat{P}_{{\bf W}_1X}:=\hat{P}_{{\bf W}_1}Q_{X|{\bf W}_1}
\end{align}
where, naturally, $Q_{X|{\bf W}_1}:=Q_{X|U_1}(\cdot|u_1^{{\bf W}_1})$ is defined as the output distribution of the random transform $Q_{X|U}$ when the input is the $U_1$-codeword $u_1^{{\bf W}_1}$.
Then the likelihood encoder is \emph{defined} as the random transformation $\hat{P}_{{\bf W}_1|X}$.
Note that if
\begin{align}
\liminf_{n\to\infty}\frac{1}{n}\log M_1M_1'
> I({\sf U}_1;{\sf X}),
\label{e_rr22}
\end{align}
then
\begin{align}
|\hat{P}_{{\bf W}_1|X}Q_X-\hat{P}_{{\bf W}_1X}|
&=|\hat{P}_{{\bf W}_1|X}Q_X-\hat{P}_{{\bf W}_1|X}\hat{P}_{X}|
\\
&=|Q_X-\hat{P}_{X}|
\label{e_rr24}
\\
&\to 0
\label{e_rregion52}
\end{align}
where \eqref{e_rr24} uses a basic property of the total variation distance;
\eqref{e_rregion52} follows from
the soft-covering lemma (see e.g.\ \cite{song} and the references therein),
and $\to$ indicates convergence in expectation (with respect to the random codebook).
However, in contrast to the real distribution $Q_X\hat{P}_{{\bf W}_1|X}$,
the proxy $\hat{P}_{{\bf W}_1X}$ is much more convenient to analyze, since under $\hat{P}$, i) upon receiving $W_1$, B can reconstruct $W'_1$ by performing channel decoding for the channel $Q_{Y|U_1}$ and the codebook $\left(u_1^{(w_1,w'_1)}\right)_{w'_1}$, and conventional channel coding achievability bounds can be used (see \cite{song} for examples of such analyses in other network information theory problems); ii) $F_1$ and $F_1'$, which correspond to the public message and the key respectively, are exactly independent (see \cite{Liu} for a similar analysis in the context of a one-round key generation model).

Now denote by $\hat{P}_{\hat{\bf W}_1|YW_1}$ the decoder used in i) above.
Let us use $\approx$ to indicate that the total variation distance between two distributions converges to zero in expectation (w.r.t.\ the random codebook).
If
\begin{align}
\sup_{n\to\infty}\frac{1}{n}\log M'< I({\sf U}_1;{\sf Y}),
\label{e_rr26}
\end{align}
then
\begin{align}
\hat{P}_{{\bf W}_1|X}Q_{XY}\hat{P}_{\hat{\bf W}_1|YW_1}
&\approx
\hat{P}_{{\bf W}_1X}Q_{Y|X}\hat{P}_{\hat{\bf W}_1|YW_1}
\label{e_rregion53}
\\
&\approx
\hat{P}_{{\bf W}_1X}Q_{Y|X}\bar{P}_{\hat{\bf W}_1|{\bf W}_1}
\label{e_rregion54}
\\
&=
\hat{P}_{{\bf W}_1}Q_{YX|{\bf W}_1}\bar{P}_{\hat{\bf W}_1|{\bf W}_1}
\label{e_rregion55}
\end{align}
where
\begin{itemize}
  \item \eqref{e_rregion53} is from \eqref{e_rregion52}.
  \item In \eqref{e_rregion54} we defined $\bar{P}_{\hat{\bf W}_1|W_1}$ as the identity transform, and used \eqref{e_rr26} and the channel coding theorem.
Indeed, since the total variation distance is twice of the minimum probability that two random variables are not equal over all couplings,
\begin{align}
\left|\hat{P}_{\hat{\bf W}_1|Y=y,W_1=w_1}
-\bar{P}_{\hat{\bf W}_1|{\bf W}_1={\bf w}_1}\right|
\le 2\mathbb{P}[\hat{\bf W}_1\neq {\bf W}_1|Y=y,{\bf W}_1={\bf w}_1]
\end{align}
for any $y$ and ${\bf w}_1$. Integrating both sides with respect to $\hat{P}_{{\bf W}_1X}Q_{Y|X}$ shows the total variation between the two joint distributions is upper-bounded by twice of the error probability of channel decoding.
\item In \eqref{e_rregion55} we defined, naturally, $Q_{YX|{\bf W}_1}:=Q_{X|{\bf W}_1}Q_{Y|X}$.
\end{itemize}
Notice that the left side of \eqref{e_rregion53} is the true distribution whereas the distribution in \eqref{e_rregion54} is an ideal distribution under which $\hat{\bf W}_1={\bf W}_1$ is equiprobable.
After decoding ${\bf W}_1$, Terminal B uses the likelihood $\hat{P}_{{\bf W}_2|Y{\bf W}_1}$ which is similar to the likelihood encoder $\hat{P}_{{\bf W}_1|X}$ used at A except that now the roles of A, B are switched and everything is conditioned on ${\bf W}_1$.
In short, the actions of B is depicted by
\begin{align}
\hat{P}_{{\bf W}_2\hat{\bf W}_1|YW_1}
=\hat{P}_{{\bf W}_2|Y{\bf W}_1}\hat{P}_{\hat{\bf W}_1|YW_1},
\end{align}
therefore multiplying both sides of \eqref{e_rregion53}-\eqref{e_rregion55} by $\hat{P}_{{\bf W}_2|Y{\bf W}_1}$ and tracing out $\hat{\bf W}_1$, we obtain
\begin{align}
\hat{P}_{{\bf W}_1|X}Q_{XY}\hat{P}_{{\bf W}_2|YW_1}
\approx
\hat{P}_{{\bf W}_1}Q_{YX|{\bf W}_1}\hat{P}_{{\bf W}_2|Y{\bf W}_1}.
\label{e_rregion58}
\end{align}
Note that on the right side of \eqref{e_rregion58} we peeled off a factor of $\hat{P}_{{\bf W}_1}$ while  the remaining part $Q_{YX|{\bf W}_1}\hat{P}_{{\bf W}_2|Y{\bf W}_1}$ is analogous to the product $Q_{YX}\hat{P}_{{\bf W}_1|X}$ we started with at the beginning of the second round, except that we switch the role of A, B and everything is conditioned on ${\bf W}_1$.

We can repeat the steps above to show ``algebraically'' that
in $r$-round interactive communications, the multi-indices ${\bf W}^r$ are close to the equiprobable distribution:
\begin{align}
&\quad\prod_{r=1,3,5,\dots}
\hat{P}_{{\bf W}_i|XW^{i-1}}
\cdot Q_{XY}
\cdot
\prod_{r=2,4,6,\dots}
\hat{P}_{{\bf W}_i|YW^{i-1}}
\nonumber
\\
&=
\prod_{r=3,5,\dots}
\hat{P}_{{\bf W}_i|XW^{i-1}}
\cdot(\hat{P}_{{\bf W}_1|X} Q_{XY}
)\cdot
\prod_{r=2,4,6,\dots}
\hat{P}_{{\bf W}_i|YW^{i-1}}
\\
&\approx
\hat{P}_{{\bf W}_1}
\prod_{r=3,5,\dots}
\hat{P}_{{\bf W}_i|XW^{i-1}}
\cdot( Q_{XY|{\bf W}_1}
\hat{P}_{{\bf W}_2|Y{\bf W}_1}
)\cdot
\prod_{r=4,6,\dots}
\hat{P}_{{\bf W}_i|YW^{i-1}}
\label{e_rregion60}
\\
&\approx
\hat{P}_{{\bf W}_1}
\hat{P}_{{\bf W}_2}
\prod_{r=5,\dots}
\hat{P}_{{\bf W}_i|XW^{i-1}}
\cdot(
\hat{P}_{{\bf W}_3|X{\bf W}_1{\bf W}_2}
Q_{XY|{\bf W}_1{\bf W}_2}
)\cdot
\prod_{r=4,6,\dots}
\hat{P}_{{\bf W}_i|YW^{i-1}}
\label{e_rregion61}
\\
&\approx \dots
\\
&\approx \prod_{i=1}^r\hat{P}_{{\bf W}_i}
\cdot
Q_{XY|{\bf W}^r}
\end{align}
where
\begin{itemize}
  \item \eqref{e_rregion60} has been shown in \eqref{e_rregion58}.
  \item \eqref{e_rregion61} is similar to \eqref{e_rregion60} except that the roles of A, B are switched and everything is conditioned on ${\bf F}_1$. The same arguments work through; indeed the one-shot achievability bounds of the conditional versions of channel coding and resolvability (i.e.~the case with universally known side information) are simple extensions of the unconditional counterparts where the information densities inside the probabilities are replaced by conditional information densities. Also, note that $\hat{P}_{{\bf W}_2|{\bf W}_1={\bf w}_1}=\hat{P}_{{\bf W}_2}$ is the equiprobable distribution which is independent of ${\bf w}_1$.
\end{itemize}
Hence, asymptotically and averaged over the random codebook, the distribution of the indices ${\bf W}_1,\dots,{\bf W}_r$ is close to the equiprobable distribution. Moreover these indices are known to both terminals with high probability.
The achievability proof is thus completed by identifying $W'_1,\dots,W'_r$ with the secret key.
Note that when we apply the induction on the number of rounds, the rate assumptions needed, which are analogous to
\eqref{e_rr22} and \eqref{e_rr26},
are given by
\begin{align}
\liminf_{n\to\infty}\frac{1}{n}\log M_iM'_i
&> I({\sf U}_i;{\sf X}|{\sf U}^{i-1}),
\quad i=1,3,5,\dots;
\\
\liminf_{n\to\infty}\frac{1}{n}\log M_iM'_i
&> I({\sf U}_i;{\sf Y}|{\sf U}^{i-1}),
\quad i=2,4,6,\dots,
\end{align}
and
\begin{align}
\limsup_{n\to\infty}\frac{1}{n}\log M'_i
&< I({\sf U}_i;{\sf Y}|{\sf U}^{i-1}),
\quad i=1,3,5,\dots;
\\
\limsup_{n\to\infty}\frac{1}{n}\log M'_i
&> I({\sf U}_i;{\sf X}|{\sf U}^{i-1}),
\quad i=2,4,6,\dots,
\end{align}
which match \eqref{e_rregion1}-\eqref{e_rregion3}.

\section{Proof of Theorem~\ref{thm_region}:
Converse}\label{app_conv}
In view of the asymptotic equivalence of the performance metrics noted in Remark~\ref{rem_equiv},
the converse part can be seen by taking $\epsilon,\nu\downarrow 0$ in the following bound.
\begin{thm}\label{thm_converse}
Consider a stationary memoryless source with per-letter distribution $Q_{\sf XY}$
a positive integer $r$, and $(R,R_1,R_2)\in (0,\infty)^3$.
Suppose that there exists an $r$-round scheme such that
for some blocklength $n$,
\begin{align}
\sum_{i\in\mathcal{O}^r}\log|\mathcal{W}_i|&\le nR_1;
\\
\sum_{i\in\mathcal{E}^r}\log|\mathcal{W}_i|&\le nR_2;
\\
\log|\mathcal{K}|&\ge nR;
\\
\mathbb{P}[K\neq\hat{K}]&\le\epsilon;
\\
\log|\mathcal{K}|-\min\{H(K|W^r),\,H(\hat{K}|W^r)\}&\le nR\nu,
\end{align}
for some $\epsilon,\nu\in (0,1)$.
Then there exist $(Q_{{\sf U}_i|{\sf U}^{i-1}{\sf X}})_{i\in\mathcal{O}^r}$ and $(Q_{{\sf U}_i|{\sf U}^{i-1}{\sf Y}})_{i\in\mathcal{E}^r}$ such that the triple
\begin{align}
\left(R,\,R_1,\,R_2\right)
+\left(-\frac{\log2}{n}-R\nu -\epsilon R,\,\frac{\log 2}{n}+\epsilon R,\, \frac{\log 2}{n}+\epsilon R\right)
\label{e_tuple}
\end{align}
satisfies \eqref{e_rregion1}-\eqref{e_rregion3}.
\end{thm}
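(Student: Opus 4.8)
The plan is a standard single-letter converse for interactive communication, in the spirit of Kaspi's converse for two-way source coding~\cite{kaspi1985two}. Fix the blocklength-$n$ scheme; without loss of generality $\log|\mathcal{K}|=\lceil nR\rceil$, since truncating the key alphabet only decreases $\log|\mathcal{K}|$, the communication rates, and $\max\{H(K|W^r),H(\hat{K}|W^r)\}$, and does not increase $\mathbb{P}[K\neq\hat{K}]$. Form the $n$-letter tuple $\tilde{V}$ by setting $\tilde{V}_i:=W_i$ for $i<r$ and augmenting the last round: $\tilde{V}_r:=(W_r,K)$ if $r$ is odd and $\tilde{V}_r:=(W_r,\hat{K})$ if $r$ is even. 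Because each $W_i$ is a deterministic function of $(X^n,W^{i-1})$ for odd $i$ and of $(Y^n,W^{i-1})$ for even $i$, and the appended key is a deterministic function of the very same variables conditioning $W_r$, the tuple $(\tilde{V}^r,X^n,Y^n)$ satisfies the multi-letter Markov conditions \eqref{e_markov1}--\eqref{e_markov2} with $(X,Y)\leftarrow(X^n,Y^n)$ verbatim.

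I would then bound the $n$-letter versions of the right sides of \eqref{e_rregion1}--\eqref{e_rregion3} obtained by substituting $(\tilde{V}^r,X^n,Y^n)$. For the communication terms, $nR_1\ge\sum_{i\in\mathcal{O}^r}\log|\mathcal{W}_i|\ge\sum_{i\in\mathcal{O}^r}H(W_i|W^{i-1})$; for odd $i<r$ one has $H(W_i|W^{i-1})=I(W_i;X^n|W^{i-1})\ge I(W_i;X^n|W^{i-1})-I(W_i;Y^n|W^{i-1})=H(W_i|W^{i-1},Y^n)$, while the augmented last round contributes $H(\tilde{V}_r|W^{r-1},Y^n)\le H(W_r|W^{r-1})+H(K|\hat{K})$, and Fano's inequality gives $H(K|\hat{K})\le h(\epsilon)+\epsilon\log|\mathcal{K}|$ (with $h$ the binary entropy). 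Hence the $n$-letter right side of \eqref{e_rregion2} is at most $nR_1+O(1)+\epsilon nR$, and symmetrically for \eqref{e_rregion3}. For the key term, the $n$-letter right side of \eqref{e_rregion1} for $(\tilde{V}^r,X^n,Y^n)$ equals its value for $W^r$, which is nonnegative (a sum of mutual informations), plus $I(K;Y^n|W^r)$ (resp.\ $I(\hat{K};X^n|W^r)$); using the secrecy hypothesis $H(K|W^r)\ge\log|\mathcal{K}|-nR\nu$ together with $H(K|Y^n,W^r)\le H(K|\hat{K})\le h(\epsilon)+\epsilon\log|\mathcal{K}|$, this is at least $nR-nR\nu-\epsilon nR-O(1)$.

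It remains to single-letterize. Introducing a time-sharing index $T$ uniform on $\{1,\dots,n\}$ and independent of the sources, letting $U_{i,T}$ consist of $\tilde{V}^i$ together with the genie $(X^{T-1},Y_{T+1}^n)$, and setting ${\sf U}_i:=(U_{i,T},T)$, a round-by-round expansion of the $n$-letter conditional mutual informations via the chain rule and the Csisz\'ar sum identity converts the three $n$-letter functionals into $n$ times their single-letter counterparts for $({\sf U}^r,{\sf X},{\sf Y})$, and the memorylessness of $Q_{\sf XY}^{\otimes n}$ makes the resulting per-letter kernels $(Q_{{\sf U}_i|{\sf U}^{i-1}{\sf X}})_{i\in\mathcal{O}^r}$ and $(Q_{{\sf U}_i|{\sf U}^{i-1}{\sf Y}})_{i\in\mathcal{E}^r}$ satisfy the required Markov structure. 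Dividing by $n$ yields \eqref{e_rregion1}--\eqref{e_rregion3} for a tuple of the form \eqref{e_tuple}; since \eqref{e_rregion1}--\eqref{e_rregion3} are preserved under decreasing $R$ and increasing $R_1,R_2$, any discrepancy between the additive and $\epsilon R$ terms produced here and those displayed in \eqref{e_tuple} is harmless.

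I expect the hard part to be this last step together with the key-folding. One must verify that the $\approx nR$ bits of entropy that $K$ (or $\hat{K}$) adds inside $\tilde{V}_r$ cancel in the \emph{differences} of mutual informations defining $R_1$ and $R_2$, so that they leave only an $O(\epsilon nR)$ Fano residue rather than overwhelming the communication bounds, and that the Csisz\'ar-sum bookkeeping of the single-letterization is mutually consistent, for one common choice of auxiliaries, across all three functionals. The parity split---attaching the key generated by A to the $X$-side last auxiliary when $r$ is odd, and the estimate held by B to the $Y$-side last auxiliary when $r$ is even---is exactly what simultaneously preserves the Markov conditions and produces this cancellation; the $\epsilon R$ terms in \eqref{e_tuple} are the footprint of the Fano step and the $O(1/n)$ terms of the additive constants.
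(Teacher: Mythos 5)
Your architecture coincides with the paper's: fold the key into the last-round message with exactly the parity rule you describe (the paper treats $r$ even and sets $U_r=(\tilde{U}_r,\hat{K})$), introduce the genie $(X^{J-1},Y_{J+1}^n,J)$ with a uniform time index, control the communication rates through differences of mutual informations via the Csisz\'{a}r sum identity (Lemma~\ref{lem_rregion1}), and extract the key rate from the secrecy hypothesis plus Fano. The $R_1$, $R_2$ bounds and the verification of the Markov structure \eqref{e_markov1}--\eqref{e_markov2} are as in Appendix~B.

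The one claim you should not let stand is that the chain rule and Csisz\'{a}r sum identity ``convert the three $n$-letter functionals into $n$ times their single-letter counterparts.'' The identity equates only the \emph{differences} $I(U_i;X^n|U^{i-1})-I(U_i;Y^n|U^{i-1})$ with $n$ times their genie-aided single-letter versions, which is why the communication bounds transfer exactly; the key functional $\sum_{i\in\mathcal{O}^r}I(U_i;Y|U^{i-1})+\sum_{i\in\mathcal{E}^r}I(U_i;X|U^{i-1})$ is a sum, not a difference, and the genie-aided single-letter value is \emph{not} equal to $1/n$ times the $n$-letter value you lower-bounded. What is true---and is where the bulk of the paper's proof lives, \eqref{e_rregion26}--\eqref{e_rr60}---is an inequality in the favorable direction: each pair $I(\tilde{U}_i;Y_J|\tilde{U}^{i-1})+I(\tilde{U}_{i+1};X_J|\tilde{U}^{i})$ is rewritten so that the Csisz\'{a}r identity handles the difference part while the remainder telescopes to $nI(\hat{K}\tilde{U}^r;X_J)$; dropping the $Y_{J+1}^n$ half of the genie and applying the chain rule yields $\ge I(\hat{K}W^r;X^n)-I(W^r\hat{K};X^n|Y^n)$, and the last step requires $I(W^r;Y^n)-I(W^r;Y^n|X^n)\ge 0$, which itself rests on the round-by-round Markov structure of the messages. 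Equivalently, via the identity \eqref{e40} the needed inequality reduces to $nI(X_J;Y_J|U^r)\le I(X^n;Y^n|W^r\hat{K})$, which follows from the chain rule. Your plan survives once this step is supplied, but as written the conversion is asserted rather than proved, and it is precisely the step that carries the theorem's difficulty.
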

The following auxiliary result has proved useful in the converse proofs of many key generation problems:
\begin{lem}\cite{csiszar1978broadcast}\cite[Lemma~4.1]{ahlswede1993common}\label{lem_rregion1}
For arbitrary random variables $U$, $V$, $X^n$ and $Y^n$,
\begin{align}
I(U;X^n|V)-I(U;Y^n|V)
=\sum_{j=1}^n [I(U;X_j|X^{j-1}Y_{j+1}^nV)-I(U;Y_j|X^{j-1}Y_{j+1}^nV)].
\end{align}
\end{lem}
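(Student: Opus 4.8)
The plan is to prove the identity by the standard telescoping (double-counting) argument used for Csisz\'ar-type sum identities; all mutual informations below are assumed finite, so the chain rule applies freely and the only real work is bookkeeping of the conditioning sets. First I would introduce the auxiliary sequence
\begin{align}
d_j:=I(U;Y_{j+1}^n|X^j,V),\qquad j=0,1,\dots,n,
\end{align}
and record the boundary values $d_0=I(U;Y^n|V)$ and $d_n=0$, the latter because $Y_{n+1}^n$ is empty.

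Next, for each fixed $j\in\{1,\dots,n\}$ I would expand the mixed term $I(U;X_j,Y_{j+1}^n|X^{j-1},V)$ by the chain rule in the two possible orders, obtaining
\begin{align}
I(U;X_j|X^{j-1},V)+I(U;Y_{j+1}^n|X^j,V)
=I(U;Y_{j+1}^n|X^{j-1},V)+I(U;X_j|X^{j-1},Y_{j+1}^n,V).
\end{align}
Here the second summand on the left equals $d_j$. Applying the chain rule once more to $Y_j^n=(Y_j,Y_{j+1}^n)$ gives $I(U;Y_{j+1}^n|X^{j-1},V)=d_{j-1}-I(U;Y_j|X^{j-1},Y_{j+1}^n,V)$; substituting this on the right and rearranging yields the per-index identity
\begin{align}
I(U;X_j|X^{j-1},Y_{j+1}^n,V)-I(U;Y_j|X^{j-1},Y_{j+1}^n,V)
=I(U;X_j|X^{j-1},V)+d_j-d_{j-1}.
\end{align}

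Finally, I would sum this over $j=1,\dots,n$: the left side is precisely the sum appearing in the statement, while on the right side $\sum_{j=1}^n I(U;X_j|X^{j-1},V)=I(U;X^n|V)$ by the chain rule and $\sum_{j=1}^n(d_j-d_{j-1})=d_n-d_0=-I(U;Y^n|V)$ by telescoping, which gives the claimed equality. The only obstacle — and it is purely clerical — is keeping the index shift between $Y_{j+1}^n$ and $Y_j^n$ straight when invoking the chain rule; there are no analytic subtleties, since every quantity is a finite conditional mutual information and no limiting or measurability argument is needed.
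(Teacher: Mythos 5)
Your proof is correct and is exactly the standard telescoping/double-counting argument for the Csisz\'ar sum identity; the paper itself gives no proof, simply citing \cite{csiszar1978broadcast} and \cite[Lemma~4.1]{ahlswede1993common}, whose proofs proceed the same way. The boundary values $d_0=I(U;Y^n|V)$, $d_n=0$, the two chain-rule expansions of $I(U;X_j,Y_{j+1}^n|X^{j-1},V)$, and the final telescoping all check out.
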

\begin{proof}[Proof of Theorem~\ref{thm_converse}]
We only consider the case where $r$ is even; the odd case can be proved in a similar fashion.
Let $J$ be equiprobable on $\{1,\dots,n\}$ and independent of $(K,\hat{K},W^r,X^n,Y^n)$.
Define for $i\in\{1,\dots,r\}$,
\begin{align}
\tilde{U}_i&:=(X^{J-1},Y_{J+1}^n,J,W^i);
\end{align}
and set $U_i:=\tilde{U}_i$ for $i\in\{1,\dots,r-1\}$ and $U_r:=(\tilde{U}_r,\hat{K})$.

First, to bound $R_1$, observe that for any $i\in\{1,3,5,\dots,r-1\}$,
\begin{align}
&\quad n[I(U_i;X_J|U^{i-1})-I(U_i;Y_J|U^{i-1})]
\nonumber
\\
&=n[I(W_i;X_J|W^{i-1}X^{J-1}Y_{J+1}^nJ)
-I(W_i;Y_J|W^{i-1}X^{J-1}Y_{J+1}^nJ)]
\label{e_rregion7}
\\
&=I(W_i;X^n|W^{i-1})
-I(W_i;Y^n|W^{i-1})
\label{e_rregion8}
\\
&\le H(W_i)
\label{e_rregion9}
\end{align}
where
\begin{itemize}
\item when $i=1$ we used the independence $X_J\perp(X^{J-1}Y_{J+1}^nJ)$ and
$Y_J\perp(X^{J-1}Y_{J+1}^nJ)$ in the proof of \eqref{e_rregion7}.
\item \eqref{e_rregion8} used Lemma~\ref{lem_rregion1}.
\end{itemize}
Therefore,
\begin{align}
n\sum_{i\in\mathcal{O}^r}[I(U_i;X_J|U^{i-1})-I(U_i;Y_J|U^{i-1})]
&\le
\sum_{i\in\mathcal{O}^r} H(W_i)
\\
&\le nR_1.
\end{align}

To bound $R_2$, note that when $i\in\{2,\dots,r-2\}$ we can perform the computations similar to \eqref{e_rregion7}-\eqref{e_rregion9} to obtain that
\begin{align}
n[I(U_i;Y_J|U^{i-1})-I(U_i;X_J|U^{i-1})]\le H(W_i).
\end{align}
The $i=r$ case needs special care, but still works through:
\begin{align}
n[I(U_i;Y_J|U^{i-1})-I(U_i;X_J|U^{i-1})]
&=I(\hat{K}W_r;Y^n|W^{r-1})-I(\hat{K}W_r;X^n|W^{r-1})
\label{e_rregion21}
\\
&\le H(\hat{K}W_r|W^{r-1})-I(\hat{K}W_r;X^n|W^{r-1})
\\
&=H(\hat{K}W_r|X^nW^{r-1})
\\
&=H(W_r|X^nW^{r-1})+H(\hat{K}|X^nW^r)
\\
&=H(W_r|X^nW^{r-1})+H(\hat{K}|X^nW^rK)
\\
&\le H(W_r)+n\left(\frac{\log 2}{n}+\epsilon R\right) \label{e_rregion16}
\end{align}
where \eqref{e_rregion21} used Lemma~\ref{lem_rregion1}, and \eqref{e_rregion16} used Fano's inequality.
Therefore,
\begin{align}
n\sum_{i\in\mathcal{E}^r}[I(U_i;Y_J|U^{i-1})-I(U_i;X_J|U^{i-1})]
&\le
\sum_{i\in\mathcal{E}^r} H(W_i)+n(n^{-1}+\epsilon R)
\\
&\le nR_2+n\left(\frac{\log2}{n}+\epsilon R\right).
\end{align}

To bound $R$, we first observe a decomposition of a mutual information term associated with the last ($r$-th) round:
\begin{align}
I(U_r;X_J|U^{r-1})
&=I(\hat{K}\tilde{U}_r;X_J|U^{r-1})
\\
&=I(\hat{K}\tilde{U}_r;X_J|\tilde{U}^{r-1})
\\
&=I(\tilde{U}_r;X_J|\tilde{U}^{r-1})
+I(\hat{K};X_J|\tilde{U}^r)\label{e_rregion38}
\end{align}
which allows us to temporarily focus on $\tilde{U}^r$ rather than $U^r$.
For any $i\in\{1,3,5,\dots,r-1\}$, consider the following quantity
\begin{align}
T_i&:=n[I(\tilde{U}_i;Y_J|\tilde{U}^{i-1})
+I(\tilde{U}_{i+1};X_J|\tilde{U}^i)]
\\
&=n[I(\tilde{U}_i;Y_J|\tilde{U}^{i-1})
-I(\tilde{U}_i;X_J|\tilde{U}^{i-1})]
\nonumber
\\
&\quad-n[I(\tilde{U}_i;X_J|\tilde{U}^{i-1})
+I(\tilde{U}_{i+1};X_J|\tilde{U}^i)]
\\
&=I(W_i;Y^n|W^{i-1})
-I(W_i;X^n|W^{i-1})
\nonumber
\\
&\quad+n[I(\tilde{U}_i;X_J|\tilde{U}^{i-1})
+I(\tilde{U}_{i+1};X_J|\tilde{U}^i)]\label{e_rregion26}
\\
&=-I(X^n;W_iW_{i+1}|W^{i-1}Y^n)
+n[I(\tilde{U}_i;X_J|\tilde{U}^{i-1})
+I(\tilde{U}_{i+1};X_J|\tilde{U}^i)]
\label{e_rregion27}
\end{align}
where \eqref{e_rregion26} has been shown in \eqref{e_rregion8}, and
\eqref{e_rregion27} is justified as follows:
\begin{align}
I(W_i;X^n|W^{i-1})
-I(W_i;Y^n|W^{i-1})
&=I(W_i;X^n|W^{i-1}Y^n)
-I(W_i;Y^n|W^{i-1}X^n)
\\
&=I(W_i;X^n|W^{i-1}Y^n)
\\
&=I(W_iW_{i+1};X^n|W^{i-1}Y^n),
\end{align}
where we used the Markov chains $W_i-(W^{i-1},X^n)-Y^n$
 and $W_{i+1}-(W^i,Y^n)-X^n$.
Now from \eqref{e_rregion27},
\begin{align}
\sum_{i\in\mathcal{O}^r}T_i
&=-I(W^r;X^n|Y^n)+nI(\tilde{U}^r;X_J)
\\
&=-I(W^r\hat{K};X^n|Y^n)+nI(\tilde{U}^r;X_J),
\end{align}
where we used $X^n-(Y^n,W^r)-\hat{K}$.
Now we can lower bound
\begin{align}
&\quad
n\sum_{i\in\mathcal{O}^r}[I(U_i;Y_J|U^{i-1})+I(U_{i+1};X_J|U^i)]
\nonumber\\
&=\sum_{i\in\mathcal{O}^r} T_i
+nI(\hat{K};X_J|\tilde{U}^r)
\label{e_r38}
\\
&=-I(W^r\hat{K};X^n|Y^n)
+nI(\hat{K}\tilde{U}^r;X_J)
\\
&=-I(W^r\hat{K};X^n|Y^n)
+nI(\hat{K}W^rX^{J-1}Y_{J+1}^nJ;X_J)
\\
&\ge -I(W^r\hat{K};X^n|Y^n)
+nI(\hat{K}W^rX^{J-1}J;X_J)
\\
&= -I(W^r\hat{K};X^n|Y^n)
+nI(\hat{K}W^r;X_J|X^{J-1}J)\label{e_rregion34}
\\
&= -I(W^r\hat{K};X^n|Y^n)
+I(\hat{K}W^r;X^n)
\\
&=-I(W^r;X^n|Y^n)
+I(\hat{K}W^r;X^n)
\label{e_rr57}
\\
&= H(\hat{K}|W^r)
+[I(W^r;Y^n)-I(W^r;Y^n|X^n)]
-H(\hat{K}|X^nW^r)
\label{e_rr58}
\\
&\ge n\left(R-R\nu-\frac{\log 2}{n}-\epsilon R\right),
\label{e_rr60}
\end{align}
where \eqref{e_r38} uses
\eqref{e_rregion38};
\eqref{e_rregion34} uses the independence $X_J\perp(X^{J-1}J)$;
\eqref{e_rr57} uses $\hat{K}-(Y^n,W^r)-X^n$;
\eqref{e_rr58} follows by algebra;
\eqref{e_rr60} follows from and Fano's inequality and
\begin{align}
&\quad I(W^r;Y^n)-I(W^r;Y^n|X^n)
\nonumber\\
&=
\sum_{i=1}^r [I(W_i;Y^n|W^{i-1})-I(W_i;Y^n|X^nW^{i-1})]
\\
&\ge 0
\end{align}
where the nonnegativity of each summand follows since for each $i$, either $W_i-(X^n,W^{i-1})-Y^n$ or $W_i-(Y^n,W^{i-1})-X^n$ holds.

The proof is finished by identifying
\begin{align}
(X,Y,U^k,V^k)&\leftarrow (X_J,Y_J,U^k,V^k)
\end{align}
in the single-letter formula.
\end{proof}
The following bound on $\Gamma_{\infty}^{\delta}(X;Y)$ (Definition~\ref{defn_KBI}) can be obtained from Theorem~\ref{thm_converse}.
Note that the blocklength $n$ does not appear in the bound.
(For a similar result in the context of channel coding with costs, see \cite{verdu1990channel}).
\begin{cor}\label{cor_converse}
Consider a stationary memoryless source with per-letter distribution $Q_{\sf XY}$.
For any $r$-round scheme (with arbitrary blocklength $n$, which does not affect the bound),
\begin{align}
\frac{\log|\mathcal{K}|}
{\log|\mathcal{W}^r|}
\le
\frac{s}{1-s}
\left(
1-\frac{7-5s}{1-s}\delta
-\left(
2\delta\log\frac{1}{2\delta}
+\frac{1+s}{1-s}\log 2
\right)
\frac{1}{\log |\mathcal{K}|}
\right)^{-1}
\end{align}
where the error $\delta\in (0,1)$ is defined as the right side of \eqref{e_mdelta}, and
\begin{align}
s
&:=\sup_{U_1,\dots,U_r}\frac{\sum_{i\in\mathcal{O}^r} I(U_i;Y|U^{i-1})
+\sum_{i\in\mathcal{E}^r} I(U_i;X|U^{i-1})}
{\sum_{i\in\mathcal{O}^r} I(U_i;X|U^{i-1})
+\sum_{i\in\mathcal{E}^r} I(U_i;Y|U^{i-1})}
\\
&=\sup_{(S,R)\in\mathcal{S}_r(X,Y)}\frac{R}{S}
\end{align}
where $U_1,\dots,U_r$ satisfy the same constraints as in \eqref{e_rregion1}-\eqref{e_rregion3}.
\end{cor}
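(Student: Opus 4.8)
The plan is to deduce Corollary~\ref{cor_converse} from the one-shot converse in Theorem~\ref{thm_converse}. Fix an arbitrary $r$-round scheme of some blocklength $n$ with $\Delta_n=\delta$, and set $R:=\frac1n\log|\mathcal{K}|$, $R_1:=\frac1n\sum_{i\in\mathcal{O}^r}\log|\mathcal{W}_i|$, $R_2:=\frac1n\sum_{i\in\mathcal{E}^r}\log|\mathcal{W}_i|$, so that $R_1+R_2=\frac1n\log|\mathcal{W}^r|$ and the first three hypotheses of Theorem~\ref{thm_converse} hold with equality. We may assume $\log|\mathcal{W}^r|>0$ and $s<1$, since otherwise the asserted bound is vacuous; the degenerate situations in which some rate vanishes or $\nu,\epsilon$ fail to lie strictly in $(0,1)$ are dispatched by a routine perturbation-and-limit argument.

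Next I would translate the single performance measure $\Delta_n=\delta$ into the hypotheses on $\epsilon_n$ and $\nu_n$ required by Theorem~\ref{thm_converse}, using precisely the estimates recorded in Remark~\ref{rem_equiv}: marginalizing the total variation gives $\mathbb{P}[K\neq\hat K]\le\Delta_n=\delta$, so we take $\epsilon=\delta$; and the entropy-continuity bound gives $\log|\mathcal{K}|-\min\{H(K|W^r),H(\hat K|W^r)\}\le 2\delta\log\frac{|\mathcal{K}|}{2\delta}+4\delta\log|\mathcal{K}|$, which we use as the defining bound for the product $nR\nu$. Applying Theorem~\ref{thm_converse} then yields finite-alphabet auxiliaries $U^r$ for which the perturbed triple $(\tilde R,\tilde R_1,\tilde R_2):=(R-\frac{\log 2}{n}-R\nu-\epsilon R,\ R_1+\frac{\log 2}{n}+\epsilon R,\ R_2+\frac{\log 2}{n}+\epsilon R)$ satisfies \eqref{e_rregion1}--\eqref{e_rregion3}, hence lies in $\mathcal{R}_r(X,Y)$, hence $(\tilde S,\tilde R)\in\mathcal{S}_r(X,Y)$ with $\tilde S:=\tilde R+\tilde R_1+\tilde R_2$.

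The definition $s=\sup_{(S,R)\in\mathcal{S}_r(X,Y)}R/S$ now yields $\tilde R\le s\tilde S$, i.e.\ $(1-s)\tilde R\le s(\tilde R_1+\tilde R_2)$. Substituting $n\tilde R\ge(1-7\delta)\log|\mathcal{K}|-\log 2-2\delta\log\frac1{2\delta}$ (obtained from $nR\nu\le 6\delta\log|\mathcal{K}|+2\delta\log\frac1{2\delta}$ and $n\epsilon R=\delta\log|\mathcal{K}|$) together with $n(\tilde R_1+\tilde R_2)=\log|\mathcal{W}^r|+2\log 2+2\delta\log|\mathcal{K}|$ into this inequality, collecting the coefficient of $\log|\mathcal{K}|$ as $(1-s)-(7-5s)\delta$ and the constant as $-2(1-s)\delta\log\frac1{2\delta}-(1+s)\log 2$, and finally dividing through by $\log|\mathcal{W}^r|$ and by the (assumed positive) bracket $1-\frac{7-5s}{1-s}\delta-(2\delta\log\frac1{2\delta}+\frac{1+s}{1-s}\log 2)\frac1{\log|\mathcal{K}|}$ produces exactly the stated bound. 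The main obstacle is not conceptual but bookkeeping: one must track how the three perturbation terms in Theorem~\ref{thm_converse} and the two error-conversion terms from Remark~\ref{rem_equiv} aggregate, and verify that they recombine into precisely the coefficients $\frac{7-5s}{1-s}$ and $\frac{1+s}{1-s}$ in the statement; a secondary point is to handle the boundary cases (vanishing rates, $\nu$ or $\epsilon$ outside $(0,1)$, or a non-positive bracket) either by continuity or by noting that the inequality is then trivially satisfied.
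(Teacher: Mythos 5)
Your proposal is correct and follows exactly the paper's (very terse) proof: convert $\Delta_n=\delta$ into the hypotheses of Theorem~\ref{thm_converse} via the two bounds from Remark~\ref{rem_equiv}, apply the theorem, and rearrange using $\tilde R\le s\tilde S$. Your bookkeeping of the perturbation terms — yielding the coefficients $(1-s)-(7-5s)\delta$ and $(1+s)\log 2+2(1-s)\delta\log\frac{1}{2\delta}$ — checks out and in fact supplies the arithmetic the paper leaves implicit under ``follows by rearrangements.''
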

\begin{proof}
As shown in Remark~\ref{rem_equiv}, we have
\begin{align}
\mathbb{P}[K\neq \hat{K}]&\le \delta;
\\
\log|\mathcal{K}|-\min\{H(K|W^r),\,H(\hat{K}|W^r)\}
&\le
2\delta\log \frac{|\mathcal{K}|}{2\delta}
+4\delta\log |\mathcal{K}|.
\end{align}
The result then follows by rearrangements of \eqref{e_mdelta}.
\end{proof}

\section{Semicontinuity of $XY$-concave Functions}
\label{app_cont}
Recall that a concave function on a simplex which is lower bounded (or more or less equivalently, nonnegative) on the vertices is necessarily lower semicontinuous (cf.~\cite[Theorem~10.2]{rockafellar1970convex}). For $XY$-concave functions, we prove a similar basic result, which will be used in the proof of Proposition~\ref{prop_conv}.
\begin{lem}\label{lem_semi}
Given a distribution $Q_{XY}$ where $Q_X$ and $Q_Y$ are fully supported
and $\mathcal{X}=\{1,\dots,m\}$, $\mathcal{Y}=\{1,\dots,n\}$.
\begin{enumerate}
\item\label{l1} If $Q_{XY}$ is indecomposable (see Definition~\ref{defn_indecomposable}), then
      for any $P_{XY}\in\mathcal{P}(Q_{XY})$, there exists a unique $(\mb{f},\mb{g})\in\mathbb{R}^m\times\mathbb{R}^n$ such that
    \begin{align}
    \mb{f}\mb{g}^{\top}\circ\mb{Q}_{XY}=\mb{P}_{XY}
    \end{align}
    where $\circ$ denotes the pointwise product of matrices,
    and the first coordinate $f_1=1$.
\item \label{l2} Fix an $S_{XY}\in\mathcal{P}(Q_{XY})$. For any $\delta\in(0,1)$, there exists an $\epsilon>0$ such that for any (possibly unnormalized) $\mu_{XY}\preceq S_{XY}$ satisfying $|\mu_{XY}-S_{XY}|\le \epsilon$, we can find $T_{XY}$ satisfying
\begin{align}
S_{XY}&\preceq_X T_{XY};\label{e_10}
\\
T_{XY}&\preceq_Y \mu_{XY};\label{e_11}
\\
|S-T|&\le\delta;\label{e12}
\\
|T-\mu|&\le\delta.\label{e13}
\end{align}
Note that \eqref{e_10} and \eqref{e_11} imply that, actually, $S\sim T\sim\mu$.
\item \label{l3} A $XY$-concave function on $\mathcal{P}(Q_{XY})$ which is nonnegative for $P_{XY}\in\mathcal{P}(Q_{XY})$ under which either $X$ or $Y$ is deterministic is necessarily lower semicontinuous.
\end{enumerate}
\end{lem}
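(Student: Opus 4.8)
The plan is to prove parts (1) and (2) as stand‑alone structural facts and combine them in part (3), which is the real content. For part (1), existence of the factorization is the finite‑alphabet reading of Definition~\ref{defn_abs}: from $\frac{{\rm d}P_{XY}}{{\rm d}Q_{XY}}=fg$ (automatically $f,g\ge 0$) rescale so that $f(1)=1$. For uniqueness I would compare two factorizations; on every edge $(x,y)$ of the graph of $Q_{XY}$ they force $f(x)g(y)=f'(x)g'(y)$, and on the subgraph where moreover $P_{XY}(x,y)>0$ all four factors are strictly positive, so passing to logarithms turns this into an additive relation whose solution space, for a connected bipartite graph, is the line spanned by $(\mb{1},-\mb{1})$. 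Hence indecomposability together with $f(1)=1$ pins down $\mb{f},\mb{g}$ on that subgraph, and the entries at which $P_{XY}$ vanishes are then forced too (a dead row/column makes the corresponding factor $0$ as soon as a neighbouring factor is known nonzero). I would state the clean case $\supp P_{XY}=\supp Q_{XY}$ explicitly — the only one needed below — and treat the rest by this bookkeeping.

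For part (2), given $\mu_{XY}\preceq S_{XY}$, factor $\frac{{\rm d}\mu_{XY}}{{\rm d}S_{XY}}=f(x)g(y)$ and define $T_{XY}$ by ${\rm d}T_{XY}:=f(x)\,{\rm d}S_{XY}$. Then $T_{XY}\preceq_X S_{XY}$ is immediate, $S_{XY}\preceq_X T_{XY}$ holds because $1/f$ is bounded on the finite set $\{f>0\}$, and $\frac{{\rm d}\mu_{XY}}{{\rm d}T_{XY}}=g(y)$ gives \eqref{e_11}; thus \eqref{e_10}--\eqref{e_11} hold. The bounds \eqref{e12}--\eqref{e13} reduce to $\sum_x|f(x)-1|S_X(x)\le\delta$ and $\sum_y|g(y)-1|T_Y(y)\le\delta$, so it remains to show that $|\mu_{XY}-S_{XY}|\le\epsilon$ forces $f$ and $g$ to be separately close to $\mb{1}$ on the relevant supports. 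That is continuity of the factorization map, which I would deduce from its uniqueness in part (1) by a compactness argument after splitting $Q_{XY}$ into indecomposable components. Decoupling $f$ from $g$ uniformly is where the full strength of (1) — not merely existence of a factorization — is used, and it is the principal obstacle here.

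For part (3), first note that $X$‑concavity (extended to finite convex combinations) together with the hypothesis gives $\sigma\ge 0$ on all of $\mathcal{P}(Q_{XY})$: every $P_{XY}$ equals $\sum_x P_X(x)P^{(x)}$, where $P^{(x)}$ is concentrated on $\{X=x\}$ with $Y$‑law $P_{Y|X}(\cdot\,|x)$ and satisfies $P^{(x)}\preceq_X P_{XY}$, and each $P^{(x)}$ is $X$‑deterministic. Second, the ``$X$‑slice'' $\{P':P'\preceq_X S_{XY}\}$ is a polytope — a linear image of a simplex under $f\mapsto(f(x)S_{XY}(x,y))_{x,y}$ — on which $\sigma$ is concave; being finite it is therefore lower semicontinuous there by the concave form of \cite[Theorem~10.2]{rockafellar1970convex}, and the same holds on every $Y$‑slice.

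Finally, take $S_{XY}$ and a sequence $P^k\to S_{XY}$ in $\mathcal{P}(Q_{XY})$; after the reductions of (1)--(2) we may assume $P^k\preceq S_{XY}$ with $\frac{{\rm d}P^k}{{\rm d}S_{XY}}=\phi^k(x)\psi^k(y)$ and $(\phi^k,\psi^k)\to(\mb{1},\mb{1})$. Put ${\rm d}T^k:=\phi^k(x)\,{\rm d}S_{XY}$, so $T^k$ lies on the $X$‑slice through $S_{XY}$ and $T^k\to S_{XY}$, whence $\liminf_k\sigma(T^k)\ge\sigma(S_{XY})$ by the lower semicontinuity just established; moreover ${\rm d}P^k=\psi^k(y)\,{\rm d}T^k$. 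With $\alpha_k:=\min_y\psi^k(y)\to 1$ one checks $P^k=\alpha_k T^k+(1-\alpha_k)R^k$, where ${\rm d}R^k:=\frac{\psi^k(y)-\alpha_k}{1-\alpha_k}\,{\rm d}T^k$ is a distribution in $\mathcal{P}(Q_{XY})$ and $T^k,R^k\preceq_Y P^k$; then $Y$‑concavity and $\sigma\ge 0$ give $\sigma(P^k)\ge\alpha_k\sigma(T^k)$, so $\liminf_k\sigma(P^k)\ge\liminf_k\sigma(T^k)\ge\sigma(S_{XY})$, which is lower semicontinuity. The delicate clause is ``we may assume $P^k\preceq S_{XY}$'': when $S_{XY}$ is a boundary point of $\mathcal{P}(Q_{XY})$ a nearby $P^k$ need not be $\preceq S_{XY}$ at all, and carrying the argument through then requires exactly the component‑by‑component handling of vanishing entries supplied by parts (1)--(2). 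This, rather than the fully supported core, is the main obstacle of the lemma.
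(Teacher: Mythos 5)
Your overall architecture matches the lemma's, and the interior case of part (3) is handled by a route genuinely different from the paper's: you restrict $\sigma$ to the $X$-slice $\{P'\colon P'\preceq_X S_{XY}\}$, observe it is ordinary-concave on that polytope (correct, since $P^0,P^1\preceq_X S$ and $P=(1-\alpha)P^0+\alpha P^1$ imply $P^0,P^1\preceq_X P$), invoke lower semicontinuity of concave functions on polytopes, and then pass from $T^k$ to $P^k$ by the splitting $P^k=\alpha_kT^k+(1-\alpha_k)R^k$ with $\alpha_k=\min_y\psi^k(y)$ together with $\sigma\ge0$. That splitting is clean and, modulo normalization of $T^k$ and $R^k$, it works; the paper instead never invokes semicontinuity on slices and argues purely through multiplicative domination, writing $T\ge\frac{a-\delta}{a}S$ and decomposing $T$ as a convex combination of $S$ and $X$-deterministic distributions $D_x\preceq_X T$ to get $\sigma(T)\ge(1-\delta/a)\sigma(S)$, then chaining four such inequalities through $T$, $\tilde R$, $\tilde R^1$, $R$.

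However, there are two genuine gaps, both of which you name but neither of which you close. First, in part (2) everything hinges on showing that $|\mu-S|\le\epsilon$ forces the factors $f,g$ to be separately uniformly close to $\mb{1}$; you defer this to ``a compactness argument after splitting into indecomposable components.'' That argument can be made to work (normalize $f^k$, extract a subsequence, propagate $f^kg^k\to1$ along edges of each connected component, invoke uniqueness from part (1)), but as written it is a placeholder, whereas the paper proves it by showing the map $(\bar{\mb f},\mb g)\mapsto\mb f\mb g^{\top}\circ\mb Q_{XY}$ has full-rank differential at the all-ones point and is therefore a local embedding with continuous inverse. Second, and more seriously, the reduction ``we may assume $P^k\preceq S_{XY}$'' in part (3) is exactly the content of the lemma at boundary points of $\mathcal{P}(Q_{XY})$, and you explicitly leave it open. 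The paper's resolution is concrete: given $R$ close to $S$, set $\mu:=R\cdot1\{x\le m_1,y\le n_1\}$ (restriction to the support rectangle of $S$), check $\mu\preceq S$, apply part (2) to produce $T$ with $S\preceq_XT\preceq_Y\mu$, and then recover $\sigma(R)\ge(1-\epsilon)\sigma(\tilde R^1)\ge(1-\epsilon)^2\sigma(\tilde R)$ via the intermediate conditioning $\tilde R^1=R_{XY|Y\le n_1}$, $\tilde R=R_{XY|X\le m_1,Y\le n_1}$, again using marginal concavity plus nonnegativity on the discarded pieces. Your tools (the $\sigma\ge0$ observation and slice-concavity) are in fact sufficient to reproduce this chain, but without carrying it out the proof does not establish lower semicontinuity at boundary points, which is where it is actually needed in Lemma~\ref{lem4}.
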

\begin{proof}
\begin{enumerate}
  \item Since the graph of $Q_{XY}$ is connected, we can start from $f_1$ and visit all vertices of the bipartite graph to see that all the coordinates of $\mb{f}$ and $\mb{g}$ are uniquely determined.
  \item It is without loss of generality to only prove the case of $S_{XY}=Q_{XY}$. Suppose the graph of $Q_{XY}$ has $k$ connected components, and assume without loss of generality that $1,\dots,k$ are $X$-vertices belonging to different connected components. Consider
  \begin{align}
  \pi\colon \mathbb{R}^{m-k}\times \mathbb{R}^n&\to \mathbb{R}^{mn}
  \\
  (\bar{\mb{f}},\mb{g})&\mapsto \mb{f}\,\mb{g}^{\top}\circ\mb{Q}_{XY}
  \end{align}
  where $\mb{f}$ is an $m$-vector whose first $k$ coordinates are $1$ and last $(m-k)$-coordinates are $\bar{\mb{f}}$. Denote by $\mb{e}_l$ the $l$-vector ($l\ge1$) whose coordinates are all $1$.
  Then $\pi$ is an embedding from a neighborhood of $\mb{e}_{m+n-k}$ to $\mathbb{R}^{mn}$ (cf.~\cite{boothby2003introduction}), because it is standard to check that the rank of the differential of $\pi$ at $\mb{e}_{m+n-k}$ is $m-k+n$ (full rank), where the calculation is essentially reduced to the case of an indecomposable distribution and the result in part \ref{l1}) can be used. Thus there is an open neighborhood $\mathcal{O}$ of $\mb{e}_{m+n-k}$ homeomorphic to its image under $\pi$, and in particular $\pi$ has a continuous inverse on ${\pi(\mathcal{O})}$.
  Consequently, there exists $\epsilon\in(0,1)$ such that if $\mu_{XY}\preceq Q_{XY}$ and $|\mu-Q|\le \epsilon$, then $\mu\in\pi(\mathcal{O})$ and, with $\bf M$ defined as the matrix of $\mu_{XY}$, $(\bar{\bf f},\bf{g})=(\pi|_{\mathcal{O}})^{-1}(P)$ satisfies
  \begin{align}
  \|\mb{f}-\mb{e}_m\|_{\infty}< \delta/2;
  \label{e16}
  \\
  \|\mb{g}-\mb{e}_n\|_{\infty}< \delta/4,
  \end{align}
  where ${\bf f}=({\bf e}_k,\bar{\bf f})$ as before. Put $\nu_{XY}(x,y)=f(x)Q_{XY}(x,y)$, and observe that \eqref{e_10}-\eqref{e_11} are satisfied because $\mb{f}$ and $\mb{g}$ have strictly positive coordinates. Also,
  \begin{align}
  |Q-\nu|&\le \|\mb{e}_m-\mb{f}\|_{\infty}\sum_{x,y}Q_{XY}(x,y)
  \\
  &\le\delta/2;
  \\
  |\nu-\mu|&\le \|\mb{e}_n-\mb{g}\|_{\infty}\sum_{x,y}f(x)Q_{XY}(x,y)
  \\
  &\le\|\mb{e}_n-\mb{g}\|_{\infty}\sum_{x,y}2Q_{XY}(x,y)
  \\
  &\le\delta/2.
  \end{align}
  But from \eqref{e16}, $1-\delta/2<|\nu|<1+\delta/2$, so the probability distribution $T:=\frac{1}{|\nu|}\nu$ satisfies
  \begin{align}
  |\nu-T|< \delta/2.
  \end{align}
  Then \eqref{e12}-\eqref{e13} holds by the triangle inequality.

  \item Consider an $S_{XY}\in\mathcal{P}(Q_{XY})$. Denote by $a>0$ the minimum nonzero entry of $S_{XY}$, and assume without loss of generality that $S_X$ and $S_Y$ are supported on $\{1,\dots,m_1\}$ and $\{1,\dots,n_1\}$, respectively. For any $\delta\in(0,a/4)$, find $\epsilon>0$ as in \ref{l2}).
      For any $R\in\mathcal{P}(Q_{XY})$ satisfying
      \begin{align}
      |R-S|\le \epsilon,
      \label{e24}
      \end{align}
      define, for $x\in\{1,\dots,m\}$ and $y\in\{1,\dots,n\}$,
      \begin{align}
      \mu(x,y):=R(x,y)1\{x\le m_1,y\le n_1\}.
      \end{align}
      Invoke \ref{l2}) and find $T$ satisfying \eqref{e_10}-\eqref{e13}. We have
  \begin{align}
T\ge \frac{a-\delta}{a}S,
\end{align}
so that
\begin{align}
T=\frac{a-\delta}{a}S+\sum_{x\in\mathcal{X}}
\lambda_x D_x
\end{align}
where each $D_x\preceq_X T$ is a distribution under which $X$ is deterministic, and $\sum_x{\lambda_x}=\frac{\delta}{a}$. Denote by $\sigma$ the $XY$-concave function in question.
By its marginal concavity,
\begin{align}
\sigma(T)&\ge\left(1-\frac{\delta}{a}\right)\sigma(S)
+\sum_{x\in\mathcal{X}}
\lambda_x \sigma(D_x)
\\
&\ge\left(1-\frac{\delta}{a}\right)\sigma(S).
\label{e_29}
\end{align}
Since the minimum nonzero entry in $T$ is at least $a-\delta>a/2$,
we have
\begin{align}
\tilde{R}\ge\mu\ge \frac{a-2\delta}{a}T,
\end{align}
where $\tilde{R}:=\frac{1}{|\mu|}\mu=R_{XY|X\le m_1,Y\le n_1}$,
so a similar argument also shows that
\begin{align}
\sigma\left(\tilde{R}\right)\ge\left(1-\frac{2\delta}{a}\right)\sigma(T).
\label{e_31}
\end{align}
Moreover, consider $\tilde{R}^1:=R_{XY|Y\le n_1}$.
Since $1-\epsilon\le|\mu|\le1$ by \eqref{e24},
we have
\begin{align}
\tilde{R}&\preceq_X \tilde{R}^1;
\\
\tilde{R}^1&\preceq_Y R;
\\
(1-\epsilon)\tilde{R}&\le \tilde{R}^1;
\\
(1-\epsilon)\tilde{R}^1&\le R,
\end{align}
so applying the similar argument again,
\begin{align}
\sigma(\tilde{R}^1)&\ge (1-\epsilon)\sigma(\tilde{R});
\label{e_36}
\\
\sigma(R)&\ge (1-\epsilon)\sigma(\tilde{R}^1).
\label{e_37}
\end{align}
Assembling \eqref{e_29}, \eqref{e_31}, \eqref{e_36}, \eqref{e_37} and noting that $\delta$ and $\epsilon$ can be chosen to be arbitrarily small, we must have
\begin{align}
\liminf_{R\to S}\sigma(R)\ge \sigma(S).
\end{align}
\end{enumerate}
\end{proof}

\section{Pointwise Convergence of Marginally Concave Envelopes}
\label{app_ptconv}
The following result forms the basis of the proof of Lemma~\ref{lem4}, the assumptions of which resemble Dini's theorem in real analysis.
\begin{prop}\label{prop_conv}~
\begin{enumerate}
\item\label{l_conv}
Suppose $(f_s)_{s\in(0,\infty)}$ is a family of continuous functions on a simplex $\Delta$, where $f_s(x)$ is nondecreasing in $s$ for any $x\in\Delta$. Define $f(x):=\lim_{s\downarrow 0}f_s(x)$. If $\env f$ is nowhere $-\infty$, then
\begin{align}
\env f(x)=\lim_{s\downarrow0}\env f_s(x)
\end{align}
for any $x\in\Delta$.
\item \label{l_conv2}
Consider a $Q_{XY}$ on a finite alphabet with fully supported $Q_X$ and $Q_Y$.
Suppose $(f_s)_{s\in(0,\infty)}$ is a family of continuous functions on a $\mathcal{P}(Q_{XY})$, where $f_s(P_{XY})$ is nondecreasing in $s$ for any $P_{XY}\in\mathcal{P}(Q_{XY})$, and $f_s$ is nonnegative when either $X$ or $Y$ is deterministic. Define $f(P_{XY}):=\lim_{s\downarrow 0}f_s(P_{XY})$ for each $P_{XY}\in\mathcal{P}(Q_{XY})$. Then
\begin{align}
\xyenv f(Q_{XY})=\lim_{s\downarrow0}\xyenv f_s(Q_{XY}).
\end{align}
\end{enumerate}
\end{prop}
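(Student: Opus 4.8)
For part~\ref{l_conv}), the plan is to exploit monotonicity together with a Carath\'eodory-type compactness argument carried through the limit. Since $f_s(x)$ is nondecreasing in $s$, $f_s\downarrow f$ pointwise as $s\downarrow 0$, so $\env f_s$ is pointwise nonincreasing in $s\downarrow 0$ with $\env f_s\ge\env f$; set $g(x):=\lim_{s\downarrow 0}\env f_s(x)$, so $g\ge\env f$ and $g(x)\ge\env f(x)>-\infty$, and it remains to show $g\le\env f$. Fix $x\in\Delta$ and a sequence $s_k\downarrow 0$. By Carath\'eodory applied to the hypograph and continuity of $f_{s_k}$ on the compact simplex, write $\env f_{s_k}(x)=\sum_{i=1}^{N}\lambda_i^{(k)}f_{s_k}(x_i^{(k)})$ with $N=\dim\Delta+2$, where the supremum over $N$-term convex representations of $x$ is attained because that feasible set is compact. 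Pass to a subsequence with $\lambda_i^{(k)}\to\lambda_i^{*}$ and $x_i^{(k)}\to x_i^{*}$; then $\sum_i\lambda_i^{*}x_i^{*}=x$. For $k\ge m$ we have $f_{s_k}\le f_{s_m}$, so continuity of $f_{s_m}$ gives $\limsup_k f_{s_k}(x_i^{(k)})\le f_{s_m}(x_i^{*})$, and $m\to\infty$ yields $\limsup_k f_{s_k}(x_i^{(k)})\le f(x_i^{*})$. Since each $f_{s_k}$ is bounded above by $\max_\Delta f_{s_1}$, the indices with $\lambda_i^{*}=0$ contribute at most $0$ in the $\limsup$, and for $\lambda_i^{*}>0$ one gets $\limsup_k\lambda_i^{(k)}f_{s_k}(x_i^{(k)})\le\lambda_i^{*}f(x_i^{*})$, where finiteness of $g(x)$ forces $f(x_i^{*})>-\infty$ on that index set. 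Summing,
\[
g(x)=\lim_k\env f_{s_k}(x)\le\sum_{i:\ \lambda_i^{*}>0}\lambda_i^{*}f(x_i^{*})\le\env f(x),
\]
which proves part~\ref{l_conv}).

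For part~\ref{l_conv2}), write $h_s:=\xyenv f_s$. By Proposition~\ref{prop_exist} together with \eqref{e_concaux} iterated along the alternating structure, $h_s(Q_{XY})=\sup_U\mathbb{E}[f_s(Q_{XY|U})]$ and $\xyenv f(Q_{XY})=\sup_U\mathbb{E}[f(Q_{XY|U})]$, the suprema over finite-valued auxiliary sequences $U=U^r$ ($r$ arbitrary) satisfying a Markov structure of the form \eqref{e_markov1}-\eqref{e_markov2}. For each fixed $U$, monotone convergence gives $\inf_s\mathbb{E}[f_s(Q_{XY|U})]=\mathbb{E}[f(Q_{XY|U})]$, so $h_s\downarrow h_0$ with $h_0(Q_{XY})=\inf_s\sup_U\mathbb{E}[f_s(Q_{XY|U})]$ and $\xyenv f(Q_{XY})=\sup_U\inf_s\mathbb{E}[f_s(Q_{XY|U})]$. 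The minimax inequality gives $h_0(Q_{XY})\ge\xyenv f(Q_{XY})$ for free; the content of the statement is the reverse inequality, i.e.\ that this $\sup$--$\inf$ may be interchanged. Since a monotone double limit can fail to commute, this must use compactness and cannot be purely formal.

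For the reverse inequality the plan is to rerun part~\ref{l_conv}) one level up, in the space of measures. Write $\mathbb{E}[f_s(Q_{XY|U})]=\int f_s\,\rd\mu_U$ with $\mu_U$ the law of $Q_{XY|U}$; as $f_s$ is built from entropies and a mutual information it extends continuously from $\mathcal{P}(Q_{XY})$ to the compact closure $\overline{\mathcal{P}(Q_{XY})}$, and $\mu_U$ is a probability measure there with barycenter $Q_{XY}$. Unwinding the Markov structure of $U^r$ round by round gives the iterated marginal-concavity inequality $\int\psi\,\rd\mu_U\le\psi(Q_{XY})$ for every $XY$-concave $\psi$; applied to $\psi=\xyenv f$ (which is finite and in fact nonnegative everywhere, since $\xyenv f\ge f\ge 0$ wherever $X$ or $Y$ is deterministic and hence $\xyenv f\ge 0$ by $X$-concavity) this yields $\int\xyenv f\,\rd\mu_U\le\xyenv f(Q_{XY})$. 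Choose for each $s$ a measure $\mu_s=\mu_{U^{(s)}}$ with $\int f_s\,\rd\mu_s\ge h_s(Q_{XY})-s$, and along $s_k\downarrow 0$ extract a weak-$*$ limit $\mu_{s_k}\rightharpoonup\mu^{*}$. By Lemma~\ref{lem_semi}.\ref{l3}) the $XY$-concave, everywhere-nonnegative $\xyenv f$ (suitably extended to $\overline{\mathcal{P}(Q_{XY})}$) is lower semicontinuous, so $\mu\mapsto\int\xyenv f\,\rd\mu$ is weak-$*$ lower semicontinuous and $\int\xyenv f\,\rd\mu^{*}\le\liminf_k\int\xyenv f\,\rd\mu_{s_k}\le\xyenv f(Q_{XY})$. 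On the other hand, for fixed $s'>0$ and $k$ with $s_k<s'$ we have $\int f_{s_k}\,\rd\mu_{s_k}\le\int f_{s'}\,\rd\mu_{s_k}\to\int f_{s'}\,\rd\mu^{*}$ by continuity of $f_{s'}$, and letting $s'\downarrow 0$ (monotone convergence) gives $\limsup_k\int f_{s_k}\,\rd\mu_{s_k}\le\int\bar f\,\rd\mu^{*}$, where $\bar f:=\inf_{s'}f_{s'}$ is the usc extension of $f$ and $\bar f\le\xyenv f$ on $\overline{\mathcal{P}(Q_{XY})}$. Combining with $h_{s_k}(Q_{XY})-s_k\le\int f_{s_k}\,\rd\mu_{s_k}$ and $h_{s_k}(Q_{XY})\downarrow h_0(Q_{XY})$ gives $h_0(Q_{XY})\le\xyenv f(Q_{XY})$.

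The hard part is the bookkeeping around the compactification. Because $\mathcal{P}(Q_{XY})$ need not be closed when $Q_{XY}$ has zero entries (even though its marginals are fully supported), one must (i) extend $\xyenv f$ to a lower semicontinuous $XY$-concave function on $\overline{\mathcal{P}(Q_{XY})}$ dominating the usc extension $\bar f$ of $f$, (ii) validate the iterated marginal-concavity inequality for the measures $\mu_{s_k}$ when mass has moved to the boundary, and (iii) keep $\sup_U\mathbb{E}[f(Q_{XY|U})]=\xyenv f(Q_{XY})$ intact across this passage; this is exactly what the structural facts in Lemma~\ref{lem_semi}.\ref{l1})-\ref{l2}) are set up to supply in the general case, whereas for fully supported $Q_{XY}$ the difficulty disappears. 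Finally, note that part~\ref{l_conv}) applied fiberwise and inductively in $r$ also gives $\lim_{s\downarrow 0}f_{s,[r]}(Q_{XY})=f_{[r]}(Q_{XY})$ for every finite $r$, which is the ingredient needed for the finite-round cases (e.g.\ in Lemma~\ref{lem4}), where there is no limit-interchange issue since the iteration terminates.
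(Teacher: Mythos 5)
Your part~\ref{l_conv}) is correct, but it takes a genuinely different route from the paper's. You prove the nontrivial inequality $\lim_{s\downarrow0}\env f_s(x)\le\env f(x)$ by writing $\env f_{s_k}(x)$ as an attained Carath\'eodory representation with $N=\dim\Delta+2$ points, extracting subsequential limits of the weights and support points, and using monotonicity ($f_{s_k}\le f_{s_m}$ for $k\ge m$) plus continuity of each $f_{s_m}$ to pass to $f$ in the limit; the bookkeeping for indices with vanishing weight and for possibly $-\infty$ values of $f$ is handled correctly. The paper instead fixes $\epsilon>0$ and argues via Cantor's intersection theorem that the nested compact sublevel sets $\{x\colon\env f(x)+\epsilon-f_s(x)\le0\}$ must be empty for some single $s>0$ (compactness coming from lower semicontinuity of concave functions on a simplex, \cite[Theorem~10.2]{rockafellar1970convex}), whence $\env f+\epsilon\ge\env f_s$. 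Both arguments are valid; yours trades the semicontinuity fact for compactness of the set of $N$-point representations. The paper's version has the structural advantage that it transfers essentially verbatim to part~\ref{l_conv2}) once lower semicontinuity of $XY$-concave functions is available, which is precisely what Lemma~\ref{lem_semi}.\ref{l3}) supplies; your Carath\'eodory argument does not transfer, because $\xyenv f$ has no finite-support representation, which is presumably why you switch strategies.

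Your part~\ref{l_conv2}) has a genuine gap. The minimax reformulation and the trivial direction are fine, but the reverse inequality rests on a weak-$*$ compactness argument on the closure $\overline{\mathcal{P}(Q_{XY})}$, and $\mathcal{P}(Q_{XY})$ is in general \emph{not} closed when $Q_{XY}$ has zero entries, even with fully supported marginals (e.g.\ for $Q_{XY}$ supported on $\{(0,0),(0,1),(1,1)\}$, letting $P(0,1)\to0$ while $P(0,0),P(1,1)$ stay positive exits $\mathcal{P}(Q_{XY})$ in the limit). Your step $\int f_{s'}\,{\rm d}\mu_{s_k}\to\int f_{s'}\,{\rm d}\mu^{*}$ therefore requires each $f_{s'}$ to extend continuously to this compactification, which is false for a general continuous function on a non-closed set; the hypotheses of the proposition give only continuity on $\mathcal{P}(Q_{XY})$, and your justification (``built from entropies'') imports the specific application in Lemma~\ref{lem4} rather than proving the stated result. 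Likewise, the lsc $XY$-concave extension of $\xyenv f$ to $\overline{\mathcal{P}(Q_{XY})}$ dominating $\bar f$ and still satisfying the barycenter inequality is asserted, not constructed; Lemma~\ref{lem_semi}.\ref{l1})--\ref{l2}), to which you defer, only establish uniqueness and perturbation stability of the $fg$-factorization \emph{inside} $\mathcal{P}(Q_{XY})$ and are used in the paper solely to prove the semicontinuity statement \ref{l3}) — they do not produce extensions to the closure. The paper avoids the compactification entirely: it reruns the part~\ref{l_conv}) argument with $\xyenv$ in place of $\env$, the only new ingredient being Lemma~\ref{lem_semi}.\ref{l3}). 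Indeed, the paper's own remark after Lemma~\ref{lem4} points out that the compactness/minimax route you are following is available for $r<\infty$ but is not claimed for $r=\infty$; your final observation that part~\ref{l_conv}) applied inductively handles the finite-round case is correct, but the infinite-round case is exactly where your argument is incomplete.
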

\begin{rem}
There are simple counterexamples to show that, in general, the limit and the concave envelope cannot be switched if a sequence of continuous functions is only assumed to converge pointwise to a certain continuous function. Moreover if the functions are decreasing but not necessarily continuous, the switching can also fail. Therefore both the monotonicity of $\omega^s_r(P_{XY})$ in $s$ and the continuity in $P_{XY}$ play an essential role in the proof of Lemma~\ref{lem4}.
\end{rem}
\begin{proof}
For \ref{l_conv}), the $\env f(x)\le\lim_{s\downarrow0}\env f_s(x)$ part is trivial. For the opposite direction, notice that the following statements are equivalent:
\begin{align}
&\env f+\epsilon>f_s \quad\textrm{for some $s>0$}
\label{e_65}
\\
\Longleftrightarrow&\,
\{x\colon\env f(x)+\epsilon-f_s(x)\le 0\}=\emptyset \quad\textrm{for some $s>0$}
\label{e66}
\\
\Longleftrightarrow&\,
\bigcap_{s>0}\{x\colon\env f(x)+\epsilon-f_s(x)\le 0\}=\emptyset
\label{e67}
\\
\Longleftrightarrow&\,
\sup_{s>0}(\env f+\epsilon-f_s)>0
\\
\Longleftrightarrow&\,
\env f+\epsilon>\inf f_s=f
\label{e_69}
\end{align}
where \eqref{e67} is the main step which follows from Cantor's intersection theorem. More precisely, notice that a concave function on a simplex is lower semicontinuous \cite[Theorem~10.2]{rockafellar1970convex}, so $\env f+\epsilon-f_s$ is lower semicontinuous, and the set in \eqref{e66} is closed in $\Delta$, hence compact. Then \eqref{e67} follows because a decreasing nested sequence of non-empty compact subsets of the Euclidean space has nonempty intersection (Cantor's intersection theorem).
Since \eqref{e_69} holds for all $\epsilon$, we have from \eqref{e_65} and the concavity of $\env f+\epsilon$ that
\begin{align}
\env f+\epsilon\ge \env f_s
\end{align}
for some $s$. Therefore $\env f(x)\ge\lim_{s\downarrow0}\env f_s(x)$
must hold because $\epsilon$ is arbitrary.

The proof of \ref{l_conv2}) is similar. We need the semicontinuity of the $XY$-concave function proved in Lemma~\ref{lem_semi}.\ref{l3}).
\end{proof}

\section{Proof of Theorem~\ref{thm_2}}\label{app_thm_2}
Consider a small perturbation, parameterized by ${\rm d}\mb{f}$, where $\mb{f}$ is a vector of dimension $|\mathcal{X}|$ so that
\begin{align}
{\rm d}\mb{P}_{XY}=\diag({\rm d}\mb{f}) \mb{P}_{XY}.
\label{e_dpxy}
\end{align}
To ensure that the total probability is preserved under the perturbation,  ${\rm d}\mb{f}$ must satisfy
\begin{align}
\mb{P}_{X}^{\top} {\rm d}\mb{f}=0.
\end{align}
Recall that $\mb{A}:=\diag(\mb{P}_X)^{-\frac{1}{2}}\mb{P}_{XY}
\diag(\mb{P}_Y)^{-\frac{1}{2}}$
and $\mb{M}:=\mb{A}^{\top}\mb{A}$,
and $\mb{u}$ and $\mb{v}$ are the left and the right singular vectors of $\mb{A}$ corresponding to the second largest singular value of $\mb{A}$, which is $\rho_{\sf m}(X;Y)$.
In other words, suppose $\mb{A}$ has the following singular value decomposition
\begin{align}
\mb{A}=
\mb{P}\mb{\Lambda}\mb{Q}^{-1}
\end{align}
where $\mb{P}$ and $\mb{Q}$ are orthogonal matrices and $\mb{\Lambda}$ is a nonnegative diagonal matrix whose second diagonal value is $\rho_{\sf m}(X;Y)$.
Then $\mb{u}$ and $\mb{v}$ are the second columns of $\mb{P}$ and $\mb{Q}$ respectively,
hence
\begin{align}
\rho_{\sf m}(X;Y)&=\mb{u}^{\top} \mb{A}\mb{v};
\\
\mb{A}\mb{v}&=\rho_{\sf m}\mb{u};
\label{e299}
\\
\mb{u}^{\top}\mb{A}&=\rho_{\sf m}\mb{v}^{\top}.
\end{align}
By the definitions of $\bf u$ and $\bf v$, we have
\begin{align}
{\rm d}\rho_{\sf m}(X;Y)
&=\frac{1}{2\rho_{\sf m}(X;Y)}{\rm d}\rho^{2}_{\sf m}(X;Y)
\\
&=\frac{1}{2\rho_{\sf m}(X;Y)}\mb{v}^{\top}{\rm d}\mb{M}\mb{v}
\label{e298}
\\
&=\frac{1}{2\rho_{\sf m}(X;Y)}\mb{v}^{\top}
({\rm d}\mb{A}^{\top}\mb{A}+\mb{A}^{\top}{\rm d}\mb{A})\mb{v}
\\
&=\mb{u}^{\top}{\rm d}\mb{A}\mb{v},
\label{e300}
\end{align}
where
\begin{itemize}
\item \eqref{e298} follows since
assuming that $\mb{v}$ maintains the unit $\ell_2$ norm under the perturbation,
\begin{align}
{\rm d}(\mb{v}^{\top}\mb{M}\mb{v})
&=2\mb{v}^{\top}\mb{M}{\rm d}\mb{v}
+\mb{v}^{\top}{\rm d}\mb{M}\mb{v}
\\
&=2\rho_{\sf m}(X;Y)\mb{v}^{\top}{\rm d}\mb{v}
+\mb{v}^{\top}{\rm d}\mb{M}\mb{v}
\\
&=\rho_{\sf m}(X;Y){\rm d}(\mb{v}^{\top}\mb{v})
+\mb{v}^{\top}{\rm d}\mb{M}\mb{v}
\\
&=\mb{v}^{\top}{\rm d}\mb{M}\mb{v}.
\end{align}
\item \eqref{e300} follows from \eqref{e299}.
\end{itemize}
But from \eqref{e_dpxy},
\begin{align}
{\rm d}{\bf P}_{X}&:={\rm d}{\bf f}\circ {\bf P}_{X};
\\
{\rm d}{\bf P}_{Y}&:={\rm d}{\bf f}^{\top}{\bf P}_{XY}.
\end{align}
Let $\mb{P}_{Y|X}$ be the $|\mathcal{Y}|\times |\mathcal{X}|$ matrix with entries being the conditional probabilities, which is invariant under the perturbation ${\rm d}\mb{f}$.
Define $\mb{P}_{X|Y}$ similarly as an $|\mathcal{X}|\times |\mathcal{Y}|$ matrix.
We have
\begin{align}
&\quad{\rm d}\left(\diag(\mb{P}_X)^{\frac{1}{2}}\mb{P}_{Y|X}^{\top}
\diag(\mb{P}_Y)^{-\frac{1}{2}}\right)
\nonumber\\
&=
\frac{1}{2}
\diag({\rm d}\mb{P}_X)
\diag(\mb{P}_X)^{-\frac{1}{2}}
\mb{P}_{Y|X}
\diag(\mb{P}_Y)^{-\half}
\nonumber\\
&\quad
-\half \diag(\mb{P}_X)^{\half}
\mb{P}_{Y|X}^{\top}\diag(\mb{P}_Y)^{-\frac{3}{2}}
\diag({\rm d}\mb{P}_Y)
\\
&=\half \diag({\rm d}\mb{f})\mb{A}
-\frac{\mb{A}}{2}\diag({\rm d}\mb{f}^{\top}
\mb{P}_{X|Y}).
\end{align}
Hence
\begin{align}
\mb{u}^{\top}{\rm d}\mb{A}\mb{v}
&=\frac{{\rm d}{\bf f}^{\top}}{2}\diag(\mb{u}){\bf Av}
-\frac{1}{2}\mb{u}^{\top}\mb{A}\diag({\rm d}{\bf f}^{\top}\mb{P}_{X|Y})\mb{v}
\\
&=\frac{\rho_{\sf m}(X;Y)}{2}(\mb{u}^{\circ2})^{\top} {\rm d}{\bf f}
-\frac{\rho_{\sf m}(X;Y)}{2}(\mb{v}^{\circ2})^{\top}(\mb{P}_{X|Y})^{\top}{\rm d}{\bf f}.
\end{align}
This implies that we must have
\begin{align}
\mb{u}^{\circ2}-\mb{P}_{X|Y}\mb{v}^{\circ2}=a\mb{P}_{X}
\end{align}
for some real number $a$. Summing up the entries on each side on both sides gives $a=0$. Thus
\begin{align}
\mb{u}^{\circ2}=\mb{P}_{X|Y}\mb{v}^{\circ2}.
\end{align}

The necessity of \eqref{e32} and \eqref{e33} have been shown. To show the further simplification \eqref{e_full1}-\eqref{e_full2} under additional assumptions, notice that
\begin{align}
Q_X=Q_{X|Y}Q_Y,
\end{align}
which, combined with \eqref{e32}, shows that
\begin{align}
D(\mb{u}^{\circ2}\|Q_X)\le D(\mb{v}^{\circ2}\|Q_Y)\label{e_35}
\end{align}
where we abuse the notation by considering $\mb{u}^{\circ2}$ as a probability distribution.
However, the by symmetry we also have $D(\mb{u}^{\circ2}\|Q_X)\ge D(\mb{v}^{\circ2}\|Q_Y)$, so \eqref{e_35} is actually achieved with equality. Denote by
$P_{XY}$ the joint distribution associated with ${\bf Q}_{X|Y}\mb{v}^{\circ2}$.
The necessary and sufficient condition for the data processing inequality \eqref{e_35} to hold with equality is that
$Q_{Y|X}=P_{Y|X}$ holds $P_X$-almost surely.
In the case of indecomposable $Q_{XY}$ and fully supported $Q_X$ and $Q_Y$, it is elementary to show that $\mb{v}^{\circ2}={\bf P}_Y$. The other condition follows from the same reasoning.

\section{An Inequality Related to Conjecture~\ref{conj1} and its Numerical Validation}
\label{app_ineq}
Let
\begin{align}
\mathcal{\bar{S}}_r(X,Y):=(H(X,Y),I(X;Y))-\mathcal{S}_r(X,Y)
\end{align}
be the reflection of $\mathcal{S}_r(X,Y)$ with respect to a point.
The functional $\omega^s_r$ defined in \eqref{e29}-\eqref{e23} can then be represented as
\begin{align}
\omega^s_r(Q_{XY}):=\max_{(\bar{S},\bar{R})
\in\bar{\mathcal{S}}_r(X,Y)}\{s\bar{S}-\bar{R}\}.
\end{align}
Then geometrically, $\omega_r^s(Q_{XY})$ is as illustrated in Figure~\ref{fig_omega}.
Notice that the slope of the supporting line intersecting the upper-right point of $\mathcal{S}_1^s(X,Y)$ (resp.~$\mathcal{S}_{\infty}^s(X,Y)$) is exactly the SDPC $s_1^*(X;Y)$ (resp.~the SSDPC $s_{\infty}^*(X;Y)$), both equal to $(1-2\epsilon)^2$ for a BSS with error probability $\epsilon$.

We need to parameterize the lower set $\mathcal{P}(Q_{XY})$ with two parameters as in \eqref{e_parameterize}, via the bijection $(f,g)\mapsto P_{XY}$.
It can be easily verified that for fixed $g$, the transitional probability $P_{Y|X}$ is also fixed, hence $f$ only controls the marginal $P_X$. Further, the function
\begin{align}
\chi(f,g):=A+\frac{c}{Z}\left(f-\frac{1}{2}\right)\left(g-\frac{1}{2}\right)
\end{align}
is \emph{$XY$-linear} (defined similarly as $XY$-concave with obvious changes) for any real numbers $A$ and $c$. If $\alpha\in[0,\frac{1}{2}]$ is the number that maximizes $sH(X,Y|U)-I(X;Y|U)$ where $U$ is symmetric Bernoulli satisfying $U-X-Y$, then straight forward calculations show that
\begin{align}
s=\frac{(\bar{\epsilon}-\epsilon)(\log(\alpha*\epsilon)-\log(\bar{\alpha}*\epsilon))}{\log\alpha-\log\bar{\alpha}}.
\end{align}
Moreover, $sH(X,Y|U)-I(X;Y|U)$ is equal to $\omega^s_0$ at four points:
\begin{align}
(f,g)=&(\alpha,\frac{1}{2}),\label{p1}\\
(f,g)=&(\bar{\alpha},\frac{1}{2}),\\
(f,g)=&(\frac{1}{2},\alpha),\\
(f,g)=&(\frac{1}{2},\bar{\alpha})\label{p4}.
\end{align}
We can choose a unique $A$ such that $\chi$ and $\omega^s_0$ have the same values at those four points, and a unique $c$ such that the two functions have the same first order derivatives at those four points.
It is an elementary exercise to figure out the values of such $A$ and $c$. If with these values of $A$ and $c$ the $XY$-linear functional $\chi$ dominates $\omega_0^s$, then Conjecture~\ref{conj1} will follow. In other words, Conjecture~\ref{conj1} will be implied by the following conjectured inequality:
\begin{conjecture}\label{conj2}
Suppose $\alpha,\epsilon,f,g\in(0,1)$, and
\begin{align}
s:=\frac{(\bar{\epsilon}-\epsilon)[\log(\alpha*\epsilon)-\log(\bar{\alpha}*\epsilon)]}{\log\alpha-\log\bar{\alpha}}.
\end{align}
(We remind the reader the notations $\bar{\epsilon}:=1-\epsilon$ and $\alpha*\epsilon:=\alpha\bar{\epsilon}+\bar{\alpha}\epsilon$.) Also define
\begin{align}
c&:=\frac{4k\alpha\bar{\alpha}(\bar{\epsilon}-\epsilon)}{\bar{\alpha}-\alpha}\log\frac{\alpha}{\bar{\alpha}}
+4(k+1)\epsilon\bar{\epsilon}\log\frac{\epsilon}{\bar{\epsilon}}
-\frac{4(\epsilon*\alpha)(\epsilon*\bar{\alpha})}{\bar{\alpha}-\alpha}\log\frac{\epsilon*\alpha}{\epsilon*\bar{\alpha}}
\\
&=\frac{4\epsilon\bar{\epsilon}}{\bar{\alpha}-\alpha}
\log\frac{\bar{\alpha}*\epsilon}{\alpha*\epsilon}
-\frac{4\epsilon\bar{\epsilon}(\bar{\epsilon}-\epsilon)\log\frac{\bar{\epsilon}}{\epsilon}
\log\frac{\bar{\alpha}*\epsilon}{\alpha*\epsilon}}{\log\frac{\bar{\alpha}}{\alpha}}
-4\bar{\epsilon}{\epsilon}\log\frac{\bar{\epsilon}}{\epsilon}.
\end{align}
When $\alpha=\frac{1}{2}$ the above are defined via continuity. Then we have
\begin{align}\label{e46}
sH(\hat{X},\hat{Y})-I(\hat{X};\hat{Y})
\le
s[h(\epsilon)+h(\alpha)]-[h(\alpha*\epsilon)-h(\epsilon)]+\frac{c(f-\frac{1}{2})(g-\frac{1}{2})}{f*\bar{g}*\epsilon}.
\end{align}
(remember that $h$ is the binary entropy function, $P_{XY}$ was defined in \eqref{e_parameterize}, and $(\hat{X},\hat{Y})\sim P_{XY}$) and the equality holds at the four points \eqref{p1}-\eqref{p4}.
\end{conjecture}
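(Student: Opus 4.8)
The plan is to recognise the right-hand side of \eqref{e46} as the $XY$-linear functional $\chi(f,g)=A+c\,(f-\tfrac12)(g-\tfrac12)/(f*\bar g*\epsilon)$, with $A:=s[h(\epsilon)+h(\alpha)]-[h(\alpha*\epsilon)-h(\epsilon)]$, where $A$ and $c$ are (as explained just before the statement) the unique constants making $\chi$ agree with $\omega_0^s$ in value and in gradient at the four points \eqref{p1}--\eqref{p4}; the only substantive assertion is then the pointwise bound $\omega_0^s(P_{XY})\le\chi(f,g)$ on $(0,1)^2$, the equality at the four points being part of the construction. I would first check that construction, which also delivers the formulas for $A$ and $c$. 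Along $g=\tfrac12$ one has $f*\bar g*\epsilon\equiv\tfrac12$, so $\chi(f,\tfrac12)\equiv A$; and $P_{XY}(f,\tfrac12)$ has $\hat X\sim\mathrm{Bern}(f)$ and $Q_{Y|X}=\mathrm{BSC}(\epsilon)$, so by \eqref{e_32} $\omega_0^s(f,\tfrac12)=sh(f)+sh(\epsilon)+h(\epsilon)-h(f*\epsilon)$, whose value at $f=\alpha$ is $A$ and whose $f$-derivative at $f=\alpha$, namely $s\log\tfrac{\bar\alpha}{\alpha}-(1-2\epsilon)\log\tfrac{\bar\alpha*\epsilon}{\alpha*\epsilon}$, vanishes precisely by the definition of $s$. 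Since $\partial_f\chi\equiv 0$ along $g=\tfrac12$, value- and $f$-gradient-matching at $(\alpha,\tfrac12)$ and $(\bar\alpha,\tfrac12)$ is automatic; matching the $g$-derivative there, $\partial_g\chi(\alpha,\tfrac12)=2c(\alpha-\tfrac12)$, against $\partial_g\omega_0^s(\alpha,g)$ at $g=\tfrac12$ is one linear equation for $c$, whose solution a routine differentiation (simplified via the defining relation for $s$) shows to be the expression in the conjecture. The matching at the points with $f=\tfrac12$ then follows from the $X\leftrightarrow Y$ symmetry of both $\omega_0^s$ and $\chi$, which is the swap $f\leftrightarrow g$.

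For the inequality I would pass from two variables to a one-variable family by freezing a marginal. Fix $g\in(0,1)$. Because $\chi$ is $X$-linear and, by \eqref{e_32}, $\omega_0^s(\cdot,g)=sH(\hat X)-H(\hat Y)+(s+1)H(\hat Y|\hat X)$ with $H(\hat Y|\hat X)$ affine in $P_X$ (the channel $P_{Y|X}$ depending only on $g$), the difference $F_g:=\chi-\omega_0^s$, as a function of $p:=P_X(1)$, is an affine function plus $h(\bar p\pi_0+p\pi_1)-sh(p)$, where $\pi_i:=P_{Y|X}(1\mid i)$. Its second derivative has the sign of $\tfrac{s}{p\bar p}-\tfrac{(\pi_1-\pi_0)^2}{q\bar q}$ with $q:=\bar p\pi_0+p\pi_1$, i.e.\ of $s-\rho_{\rm m}^2(\hat X;\hat Y)$ --- the identity $\rho_{\rm m}^2=p\bar p(\pi_1-\pi_0)^2/(q\bar q)$ for a binary-input channel being the computation already carried out in the second proof of Theorem~\ref{lem2}. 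Since $\rho_{\rm m}^2$ on $\mathcal{P}(Q_{XY})$ is largest at the centre $(f,g)=(\tfrac12,\tfrac12)$, where it equals $(1-2\epsilon)^2\ge s$, and tends to $0$ as $p\to 0,1$, each $F_g$ is convex near the endpoints and, when $g$ is near $\tfrac12$, concave on a middle band. On the line $g=\tfrac12$ this indefiniteness can be straightened out: rewriting $\omega_0^s(f,\tfrac12)$ in the entropy coordinate $u=h(f)$ and invoking Mrs.~Gerber's lemma (convexity of $u\mapsto h(\epsilon*h^{-1}(u))$; cf.\ \cite{wyner1973}) makes $\omega_0^s(\cdot,\tfrac12)$ concave in $u$, hence dominated by its tangent at $u=h(\alpha)$, which by the defining relation for $s$ is the constant $A=\chi(\cdot,\tfrac12)$; so $F_{1/2}\ge 0$, with double zeros at $f\in\{\alpha,\bar\alpha\}$, and the line $f=\tfrac12$ follows by symmetry. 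On the four boundary faces $\omega_0^s$ collapses to $s$ times one binary entropy (a deterministic marginal forces $I(\hat X;\hat Y)=0$) and the claim becomes an ``affine dominates concave'' statement in a single binary distribution, of the type handled in \eqref{e128}--\eqref{e129}.

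The remaining step --- the interior away from $f=\tfrac12$ and $g=\tfrac12$ --- is the real obstacle, and is presumably why only numerical evidence is offered. For $g\neq\tfrac12$ the channel $P_{Y|X}$ is not symmetric, so the entropy-coordinate trick above no longer applies and $F_g$ genuinely has the convex--concave--convex shape; then endpoint positivity together with the tangency data does \emph{not} by itself give $F_g\ge 0$, since a convex piece may sag below a chord and a concave band may hide a shallow negative minimum. The intended route is a sign-change count: $F_g$ has at most two inflection points, hence changes sign only a bounded number of times, and by the first step the only admissible zeros of $F$ on the closed square lie on $\{g=\tfrac12\}\cup\{f=\tfrac12\}$, which would force $F_g>0$ for $g\notin\{\tfrac12\}$ and $F_{1/2}\ge 0$. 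Making this quantitative --- bounding $F_g$ below by a positive multiple of $(f-\tfrac12)^2$ near the centre using the matched second-order data, plus a monotonicity estimate locating the concave band in each slice --- is what one would need to complete; I expect the cleanest conceivable resolution would be a change of coordinates $(f,g)\mapsto(\xi,\eta)$ turning $\omega_0^s$ into a jointly concave function, so that the affine $\chi$, matched to first order at the touching points, would dominate it automatically (a two-sided Mrs.~Gerber phenomenon), but I do not see such coordinates, and finding them (or a substitute estimate) is the crux.
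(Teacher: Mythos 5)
The statement you are proving is labelled a \emph{conjecture}, and the paper itself offers no complete proof: its evidence consists of (i) a numerical sweep over a fine grid of $(f,g,\epsilon,\alpha)\in(0,1)^4$ described in Appendix~\ref{app_ineq}, and (ii) an analytic verification only in the asymptotic regime $\epsilon\to\tfrac12$ (Appendix~\ref{app_e12}), obtained by Taylor-expanding both sides of \eqref{e46} to second order in $u=1-2\epsilon$, reducing the claim to the four-parameter-free inequality \eqref{e85}, and disposing of that by a Lagrange-multiplier argument. So there is no complete argument to measure your proposal against, and your proposal is, by your own admission, also not a complete proof.

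That said, the partial content of your write-up is sound and genuinely complementary to what the paper establishes. Your verification of the tangency construction (that $A$ and $c$ are forced by value- and gradient-matching at \eqref{p1}--\eqref{p4}, with the $f$-derivative condition at $(\alpha,\tfrac12)$ reproducing the definition of $s$) is correct, as is your treatment of the slice $g=\tfrac12$: there $P_{Y|X}$ is a BSC$(\epsilon)$, $\chi(\cdot,\tfrac12)\equiv A$, and concavity of $u\mapsto h(\epsilon*h^{-1}(u))$ (Mrs.~Gerber) plus the vanishing tangent slope at $u=h(\alpha)$ gives $\omega_0^s(\cdot,\tfrac12)\le A$; the paper uses essentially this observation elsewhere (in the sufficiency part of Corollary~\ref{cor_binary}) but not as part of any proof of Conjecture~\ref{conj2}. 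You thus cover the cross $\{f=\tfrac12\}\cup\{g=\tfrac12\}$ and the boundary faces, while the paper covers the slab $\epsilon\approx\tfrac12$; neither covers the bulk of the interior. The obstacle you name --- that for $g\neq\tfrac12$ the slice function $\chi-\omega_0^s$ is convex--concave--convex in $P_X$ (with the sign of its second derivative governed by $s-\rho_{\rm m}^2$), so that endpoint positivity and tangency data do not by themselves force nonnegativity --- is exactly the genuine gap, and is why the statement remains a conjecture in the paper. If you want to strengthen your partial result in the direction the authors took, the natural next step is their $\epsilon\to\tfrac12$ expansion culminating in \eqref{e85}, which is an honest, self-contained lemma you could also prove.
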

\begin{rem}\label{rem_1}
By symmetry of the functions involved, we only have to verify for $\alpha,\epsilon,f\in(0,\frac{1}{2})$ and $g\in(0,1)$.
\end{rem}
\begin{rem}
Conjecture~\ref{conj2} is stronger than Conjecture~\ref{conj1}. On the other hand, it can be shown that Conjecture~\ref{conj1} implies the inequality in Conjecture~\ref{conj2} for $(f,g)\in[0,1]\times[\alpha,\bar{\alpha}]\bigcup[\alpha,\bar{\alpha}]\times[0,1]$.
\end{rem}

\begin{rem}
From
\begin{align}
\mathbb{E}[\hat{X}\hat{Y}]=\frac{\bar{\epsilon}(\bar{f}\bar{g}+fg)-\epsilon(\bar{f}g+f\bar{g})}{Z}
\end{align}
we obtain
\begin{align}
\frac{c(f-\frac{1}{2})(g-\frac{1}{2})}{f*\bar{g}*\epsilon}
=\frac{c}{4}\left[\frac{1}{2\bar{\epsilon}}-\frac{1}{2\epsilon}+\left(\frac{1}{2\bar{\epsilon}}+\frac{1}{2\epsilon}
\right)\mathbb{E}[\hat{X}\hat{Y}]\right]
\end{align}
Therefore the conjecture inequality is equivalent to
\begin{align}
&\quad (s+1)H(\hat{X},\hat{Y})
\nonumber\\
&\le
H(\hat{X})+H(\hat{Y})+s[h(\epsilon)+h(\alpha)]
\nonumber\\
&\quad-[h(\alpha*\epsilon)-h(\epsilon)]
\nonumber\\
&\quad+\frac{c}{8\epsilon\bar{\epsilon}}[\epsilon-\bar{\epsilon}+\mathbb{E}[\hat{X}\hat{Y}]]
\end{align}
\end{rem}
Although Conjecture~\ref{conj2} seems elementary, we have not been able to find a full proof. Nevertheless, since it only involves four parameters we can parameterize the space $(0,1)^4$ and verify numerically. We computed the difference between the right hand side of \eqref{e46} and the left hand side. From the choice of $A$ we know that the difference is exactly zero at the four points \eqref{p1}-\eqref{p4}. Using Matlab we computed difference between the right hand side of \eqref{e46} and the left hand side for $f,g,\epsilon,\alpha$ ranging from vectors
\begin{align}
F=&[ss/3:ss:0.5-ss/3]';\\
G=&[ss/3:ss:1-ss/3]';\\
E=&[ss/3:ss:0.5-ss/3]';\\
A=&E;
\end{align}
where the step size $ss:=0.001$. As the result the minimum value of the difference is \verb"-5.841478017444557e-17" with double precision, which is quite small. Moreover negativity of the difference occurs only when $0.496333333333333\le\epsilon<0.5$ and $0.499333333333333\le\alpha<0.5$. If we make $\epsilon$ and $\alpha$ closer to $0.5$, then the magnitude of the difference can further increase, up to about $10^{-9}$ at most; however in this case the image of the left hand side becomes noise-like of the magnitude about $10^{-9}$ as well, so the error is most likely due to the limit of the double precision. In fact, when we use variable precision arithmetic (vpa), the images become smooth and good looking again, and the minimum difference becomes zero.

To visualize what is happening in Conjecture~\ref{conj2}, we plotted $\omega^s_0$, $\chi$ and their difference in Fig.~\ref{fig2}-\ref{fig4} for a particular instance of $\epsilon$ and $\alpha$ (the value of $k$ is then uniquely determined).

\begin{figure}[ht]
  \centering
  \includegraphics[width=3IN]{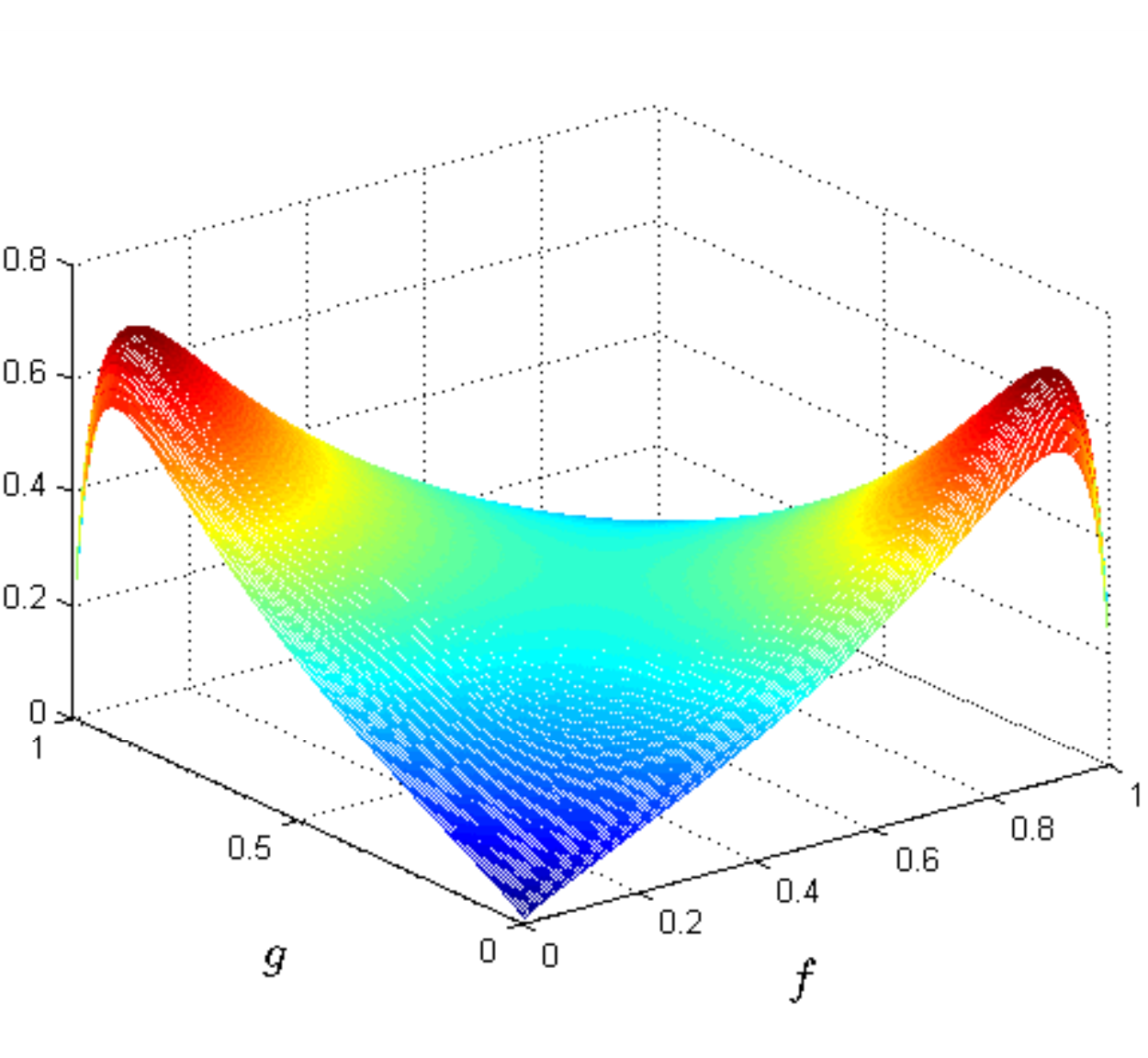}\\
  \caption{Plot of $\omega_0^s$ against $f$ and $g$ when $\alpha=\epsilon=0.11$}\label{fig2}
\end{figure}
\begin{figure}[ht]
  \centering
  \includegraphics[width=3IN]{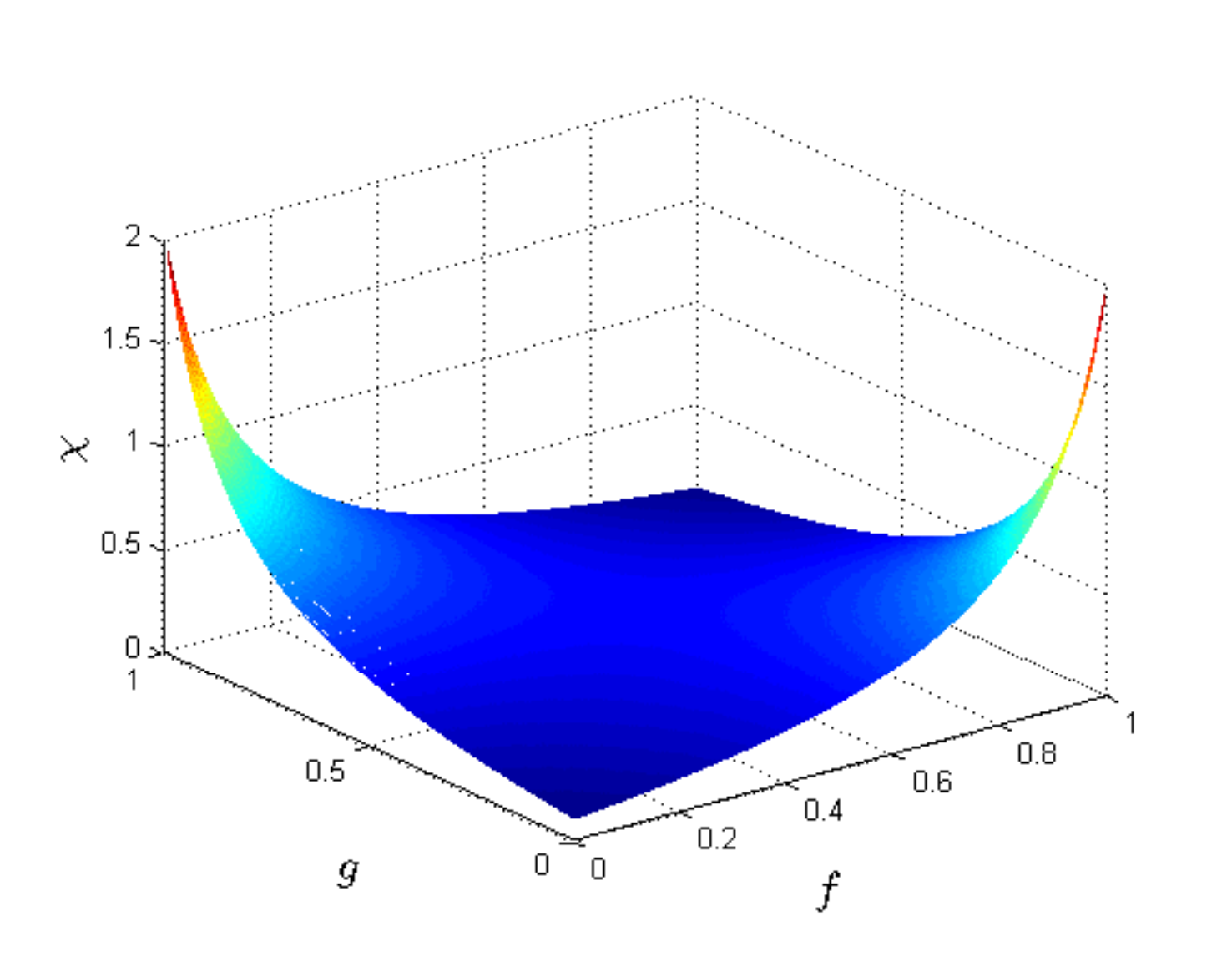}\\
  \caption{Plot of $\chi$ against $f$ and $g$ when $\alpha=\epsilon=0.11$}\label{fig3}
\end{figure}
\begin{figure}[ht]
  \centering
  \includegraphics[width=3IN]{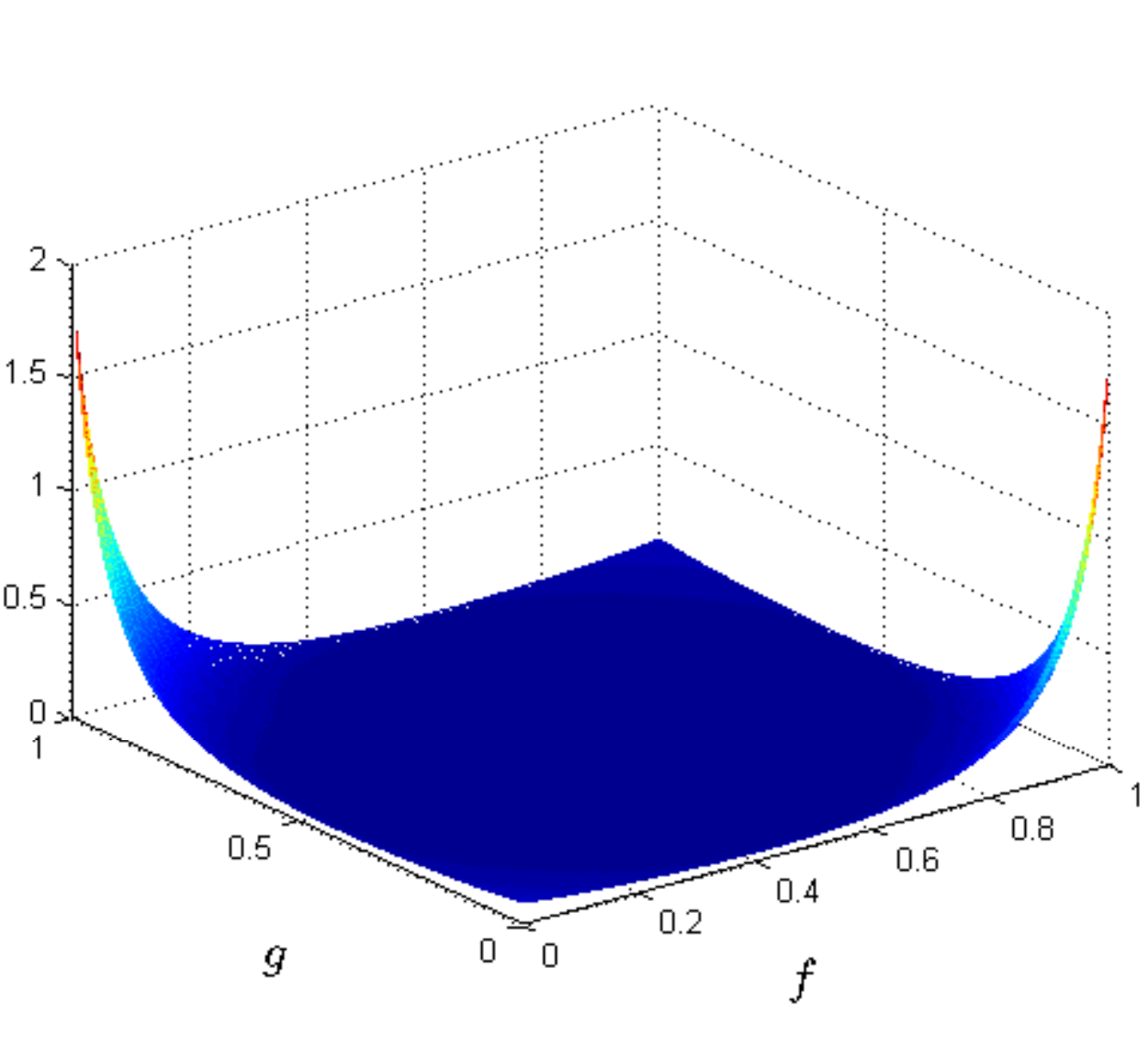}\\
  \caption{Plot of $\chi-\omega_0^s$ against $f$ and $g$ when $\alpha=\epsilon=0.11$}\label{fig4}
\end{figure}

From those numerical results, the inequality is close to failure only in the regime of very small communication rates and very noisy BSS, but in former case, Theorem~\ref{lem2} has guaranteed the validity of the conjecture, while in the latter case, we proved the inequality using Taylor expansion in Appendix~\ref{app_e12}.

\section{Proof of Conjecture~\ref{conj2} for $\epsilon\to\frac{1}{2}$}
\label{app_e12}
We fix $f$, $g$ and $\alpha$ and let $\epsilon\to\frac{1}{2}$. Let $u$ be such that
\begin{align}
\epsilon=\frac{1-u}{2}.
\end{align}
Introduce the notation
\begin{align}
\epsilon(x,y):=\left\{\begin{array}{cc}
                       1-\epsilon & x=y; \\
                       \epsilon & \textrm{otherwise}.
                     \end{array}
\right.
\end{align}
Then,
\begin{align}
\frac{\rd}{\rd \epsilon}\log\epsilon(x,y)=\frac{\log e}{\epsilon(x,y)}(-1)^{x-y+1},
\end{align}
\begin{align}
\frac{\rd^2}{\rd \epsilon^2}\log\epsilon(x,y)=-\frac{\log e}{\epsilon^2(x,y)},
\end{align}
\begin{align}
\frac{\rd}{\rd \epsilon}\log Z=&\frac{\log e}{f*\bar{g}*\epsilon}(f*g-f*\bar{g})
\\
=&-\frac{\log e}{f*\bar{g}*\epsilon}(\bar{f}-f)(\bar{g}-g),
\end{align}
\begin{align}
\frac{\rd^2}{\rd \epsilon^2}\log Z=-\frac{\log e}{(f*\bar{g}*\epsilon)}(\bar{f}-f)^2(\bar{g}-g)^2
\end{align}
\begin{align}
\frac{\rd}{\rd \epsilon}\log\sum_{x'}\epsilon(x',y)f(x')
=\frac{\log e}{\sum_{x'}\epsilon(x',y)f(x')}\sum_{x'}(-1)^{x'-y+1}f(x')
\end{align}
\begin{align}
\frac{\rd^2}{\rd \epsilon^2}\log\sum_{x'}\epsilon(x',y)f(x')
=-\frac{\log e(\sum_{x'}(-1)^{x'-y+1}f(x'))^2}{(\sum_{x'}\epsilon(x',y)f(x'))^2}
\end{align}
In particular,
\begin{align}\label{e60}
\left.\frac{\rd}{\rd \epsilon}\log\epsilon(x,y)\right|_{\epsilon=\frac{1}{2}}=2\log e (-1)^{x-y+1},
\end{align}
\begin{align}
\left.\frac{\rd^2}{\rd \epsilon^2}\log\epsilon(x,y)\right|_{\epsilon=\frac{1}{2}}
=-4\log e,
\end{align}
\begin{align}
\left.\frac{\rd}{\rd \epsilon}\log Z\right|_{\epsilon=\frac{1}{2}}
=-2\log e(\bar{f}-f)(\bar{g}-g),
\end{align}
\begin{align}
\left.\frac{\rd^2}{\rd \epsilon^2}\log Z\right|_{\epsilon=\frac{1}{2}}
=-4\log e(\bar{f}-f)^2(\bar{g}-g)^2,
\end{align}
\begin{align}
\left.\frac{\rd}{\rd \epsilon}\log\sum_{x'}\epsilon(x',y)f(x')\right|_{\epsilon=\frac{1}{2}}
=2\log e\sum_{x'}(-1)^{x'-y+1}f(x'),
\end{align}
\begin{align}\label{e65}
\left.\frac{\rd^2}{\rd \epsilon^2}\log\sum_{x'}\epsilon(x',y)f(x')\right|_{\epsilon=\frac{1}{2}}
=-4\log e(\sum_{x'}(-1)^{x'-y+1}f(x'))^2.
\end{align}
When $\epsilon\to\frac{1}{2}$, we show that both sides of the inequality is of the order of $u^2$. It is easy to compute
\begin{align}
s=\frac{2\log e(\bar{\alpha}-\alpha)u^2}{\log\bar{\alpha}-\log\alpha}+o(u^2),
\end{align}
\begin{align}
c=o(u^2),
\end{align}
\begin{align}\label{e68}
I(\hat{X},\hat{Y})=\sum_{x,y}\frac{\epsilon(x,y)f(x)g(x)}{Z}\log
\frac{\epsilon(x,y)Z}{\sum_{x'}\epsilon(x',y)f(x')\sum_{y'}\epsilon(x,y')g(y')}
\end{align}
where the summations are over $(x,y)\in\{0,1\}^2$, $x'\in\{0,1\}$ and $y'\in\{0,1\}$, respectively. Define
\begin{align}\label{e69}
T(\epsilon):=I(\hat{X};\hat{Y})Z.
\end{align}
Since
\begin{align}
\left.\frac{}{}I(\hat{X};\hat{Y})\right|_{\epsilon=\frac{1}{2}}=0,
\end{align}
and from the minimality of $I(\hat{X};\hat{Y})$ at $\epsilon=\frac{1}{2}$,
\begin{align}
\left.\frac{\rd}{\rd \epsilon}I(\hat{X};\hat{Y})\right|_{\epsilon=\frac{1}{2}}=0,
\end{align}
we have from Leibniz's rule
\begin{align}
&\quad\left.\frac{\rd^2}{\rd \epsilon^2}T\right|_{\epsilon=\frac{1}{2}}
\nonumber\\
&=\left.\frac{\rd^2}{\rd \epsilon^2}I(\hat{X};\hat{Y})\right|_{\epsilon=\frac{1}{2}}\left.\frac{}{}Z\right|_{\epsilon=\frac{1}{2}}
\nonumber\\
&\quad+2\left.\frac{\rd}{\rd \epsilon}I(\hat{X};\hat{Y})\right|_{\epsilon=\frac{1}{2}}
\left.\frac{\rd}{\rd \epsilon}Z\right|_{\epsilon=\frac{1}{2}}
\nonumber\\
&\quad+\left.I(\hat{X};\hat{Y})\right|_{\epsilon=\frac{1}{2}}\left.\frac{\rd^2}{\rd \epsilon^2}Z\right|_{\epsilon=\frac{1}{2}}
\\
&=\frac{1}{2}\left.\frac{\rd^2}{\rd \epsilon^2}I(\hat{X};\hat{Y})\right|_{\epsilon=\frac{1}{2}}.
\end{align}
Thus we obtain
\begin{align}
\left.\frac{\rd^2}{\rd \epsilon^2}I(\hat{X};\hat{Y})\right|_{\epsilon=\frac{1}{2}}
=2\left.\frac{\rd^2}{\rd \epsilon^2}T\right|_{\epsilon=\frac{1}{2}}
\label{e_diff}
\end{align}
which is useful because the differential on the right side of \eqref{e_diff} is easier to compute than the left side.
For $x,y\in\{0,1\}$, $\epsilon\in [0,1]$, define
\begin{align}
\zeta(x,y,\epsilon):=\log\frac{\epsilon(x,y)Z}{\sum_{x'}\epsilon(x',y)f(x')\sum_{y'}\epsilon(x,y')g(y')}.
\end{align}
From \eqref{e68} and \eqref{e69},
\begin{align}
&\left.\frac{\rd^2}{\rd \epsilon^2}T\right|_{\epsilon=\frac{1}{2}}
\nonumber\\
=&\sum_{x,y}
\left.\frac{\rd^2}{\rd \epsilon^2}[\epsilon(x,y)f(x)g(x)]\right|_{\epsilon=\frac{1}{2}}
\left.\zeta(x,y,\epsilon)\right|_{\epsilon=\frac{1}{2}}
\\
&+2\sum_{x,y}
\left.\frac{\rd}{\rd \epsilon}[\epsilon(x,y)f(x)g(x)]\right|_{\epsilon=\frac{1}{2}}
\left.\frac{\rd}{\rd\epsilon}\zeta(x,y,\epsilon)\right|_{\epsilon=\frac{1}{2}}
\\
&+\sum_{x,y}
\left.\frac{}{}\epsilon(x,y)f(x)g(x)\right|_{\epsilon=\frac{1}{2}}
\left.\frac{\rd^2}{\rd\epsilon^2}\zeta(x,y,\epsilon)\right|_{\epsilon=\frac{1}{2}}
\label{e77}
\end{align}
From \eqref{e60}-\eqref{e65}, we see that the first term is zero. The second term is equal to
\begin{align}
&\quad2\sum_{x,y}f(x)g(y)(-1)^{x-y+1}\cdot2\log e
\nonumber\\
&\cdot\left((-1)^{x-y+1}-(\bar{f}-f)(\bar{g}-g)+\sum_{x'}(-1)^{x'-y}f(x')+\sum_{y'}(-1)^{x-y'}g(y')\right)
\\
=&~4\log e\left(1+(\bar{f}-f)^2(\bar{g}-g)^2-(\bar{f}-f)^2-(\bar{g}-g)^2\right)
\end{align}
The third term in \eqref{e77} can be simplified as
\begin{align}
&\quad\sum_{x,y}\frac{f(x)g(y)}{2}\cdot4\log e
\nonumber\\
&\cdot\left[-1-(\bar{f}-f)^2(\bar{g}-g)^2
+(\sum_{x'}(-1)^{x'-y+1}f(x'))^2+(\sum_{y'}(-1)^{x-y'+1}g(y'))^2\right]
\\
&=~2\log e[-1-(\bar{f}-f)^2(\bar{g}-g)^2+(\bar{f}-f)^2+(\bar{g}-g)^2]
\end{align}
Hence
\begin{align}
\left.\frac{\rd^2}{\rd \epsilon^2}T\right|_{\epsilon=\frac{1}{2}}
=4\log e[1-(\bar{f}-f)^2][1-(\bar{g}-g)^2]
\end{align}
Thus we find the left hand side of \eqref{e46} is
\begin{align}
\frac{2\log e(\bar{\alpha}-\alpha)u^2}{\log\bar{\alpha}-\log\alpha}[h(f)+h(g)]-\frac{u^2}{2}\log e[1-(\bar{f}-f)^2][1-(\bar{g}-g)^2]+o(u^2).
\end{align}
The right hand side of \eqref{e46} is
\begin{align}
\frac{2\log e(\bar{\alpha}-\alpha)u^2}{\log\bar{\alpha}-\log\alpha}(1+h(\alpha))
-\frac{\log e}{2}[1-(\bar{\alpha}-\alpha)^2]u^2.
\end{align}
Thus the following inequality implies the validity of \eqref{e46} for fixed $\alpha,f,g\in(0,\frac{1}{2})$ and vanishing $\epsilon-\frac{1}{2}$:
\begin{align}\label{e85}
\frac{2(\bar{\alpha}-\alpha)}{\log\bar{\alpha}-\log\alpha}[h(f)+h(g)]-8f\bar{f}g\bar{g}
\le
\frac{2(\bar{\alpha}-\alpha)}{\log\bar{\alpha}-\log\alpha}(1+h(\alpha))-2\alpha\bar{\alpha}
\end{align}
Note that now we only have to verify the inequality for $g\in(0,\frac{1}{2})$, in contrast to Remark~\ref{rem_1}.
Consider fixed $\alpha$. The values of $f$ and $g$ that maximizes the left hand side of \eqref{e85} must be the solution of the following optimization problem:
\begin{align}\label{opt}
\text{minimize~}\eta(f,g):=16f\bar{f}g\bar{g}\quad\textrm{subject to }\phi(f,g):=h(f)+h(g)=C
\end{align}
for some constant $C$. We solve this minimization problem using Lagrange multiplier method. Define
\begin{align}
L(f,g):=\eta(f,g)-\lambda\phi(f,g).
\end{align}
Suppose $(f^*,g^*)$ is a local minimum, then for some value of $\alpha$, we have
\begin{align}
\left\{\begin{array}{c}
         \partial_1 L(f^*,g^*)=0, \\
         \partial_2 L(f^*,g^*)=0.
       \end{array}
       \right.
\end{align}
for some $\lambda=\lambda^*\neq0$, which implies that
\begin{align}\label{e89}
\left\{\begin{array}{c}
         \partial_1\eta(f^*,g^*)=\lambda^*\partial_1\phi(f^*,g^*),\\
         \partial_2\eta(f^*,g^*)=\lambda^*\partial_2\phi(f^*,g^*).
       \end{array}
       \right.
\end{align}
\begin{enumerate}
  \item If $f^*\neq \frac{1}{2}$ and $g^*\neq \frac{1}{2}$, we can cancel $\lambda^*$ from \eqref{e89} and obtain after rearrangement
      \begin{align}\label{e90}
      \frac{\log\bar{f}^*-\log f^*}{\bar{f}^*-f^*}\bar{f}^*f^*
      =\frac{\log\bar{g}^*-\log g^*}{\bar{g}^*-g^*}\bar{g}^*g^*.
      \end{align}
      It is elementary to check (e.g.~by writing it as Taylor series in terms of $1-2x$ that the function
      \begin{align}
      T(x):=\frac{\log\bar{x}-\log x}{\bar{x}-x}\bar{x}x
      \end{align}
      is monotonically increasing on $(0,\frac{1}{2})$. Thus \eqref{e90} implies that
      \begin{align}\label{e92}
      g^*=f^*.
      \end{align}
      Recall that $(f^*,f^*)$ being a local minimum point implies that the Hessian matrix $[\partial^2_{i,j}L(f^*,f^*)]$ is positive-semidefinite on the orthogonal complement of the span of $\nabla\phi(f^*,f^*)$. In our case, this means that the matrix
      \begin{align}
      \left(
        \begin{array}{cc}
          \frac{\lambda^*\log e}{f^*\bar{f}^*}-32f^*\bar{f}^* & 16(\bar{f}^*-f^*)^2 \\
          16(\bar{f}^*-f^*)^2 & \frac{\lambda^*\log e}{f^*\bar{f}^*}-32f^*\bar{f}^* \\
        \end{array}
      \right)
      \end{align}
      is positive-semidefinite on the span of $(1,-1)^{\top}$, or equivalently,
      \begin{align}\label{e94}
      \frac{\lambda^*\log e}{f^*\bar{f}^*}-32f^*\bar{f}^*
      \ge 16(\bar{f}^*-f^*)^2
      \end{align}
      Substituting \eqref{e92} into \eqref{e89}, we obtain
      \begin{align}
      \lambda^*=\frac{16(\bar{f}^*-f^*)f^*\bar{f}^*}{\log \bar{f}^*-\log f^*},
      \end{align}
      hence \eqref{e94} is equivalent to
      \begin{align}\label{eq96}
      \frac{\log e(\bar{f}^*-f^*)}{\log \bar{f}^*-\log f^*}-2f^*\bar{f}^*\ge (\bar{f}^*-f^*)^2
      \end{align}
      However for any $u^*:=1-2f^*\neq0$, we show that \eqref{eq96} fails:
      \begin{align}
      \textrm{LHS of }\eqref{eq96}
      =&\frac{u^*}{\ln(1+u^*)-\ln(1-u^*)}-2f^*\bar{f}^*
      \\
      =&\frac{u^*}{\sum_{k=1}^{\infty}\frac{(-1)^{k-1}}{k}u^{*k}
      -\sum_{k=1}^{\infty}\frac{(-1)^{k-1}}{k}u^{*k}}-2f^*\bar{f}^*
      \\
      =&\frac{u^*}{\sum_{k=1}^{\infty}\frac{(-1)^{k-1}}{k}u^{*k}
      +\sum_{k=1}^{\infty}\frac{1}{k}u^{*k}}
      -2f^*\bar{f}^*
      \\
      =&\frac{1}{\sum_{l\in 2\mathcal{N}}\frac{2}{l+1}u^{*l}}
      -2f^*\bar{f}^*
      \\
      <&\frac{1}{2}-2f^*\bar{f}^*
      \\
      =&\frac{(\bar{f}^*-f^*)^2}{2}
      \\
      <&\textrm{RHS of }\eqref{eq96}.
      \end{align}
      Therefore, the solution to \eqref{opt} must belong to the following case:

      \item If either $f^*=\frac{1}{2}$ or $g^*=\frac{1}{2}$, by the symmetry of \eqref{e85} we may assume without loss of generality that $g^*=\frac{1}{2}$. The left hand side of \eqref{e85} becomes
          \begin{align}
          \frac{2(\bar{\alpha}-\alpha)}{\log\bar{\alpha}-\log\alpha}(1+h(f))-2f\bar{f}.
          \end{align}
          When viewed as a function of $f$, it is maximized by $f=\alpha$ using Calculus, in which case it agrees with the right hand side of \eqref{e85}. Thus \eqref{e85} is proved.

\end{enumerate}

\bibliographystyle{ieeetrans}
\bibliography{ref2014}
\end{document}